\title
{A unified approach to linear probing hashing with buckets}
\date{20 October, 2014}
\author{Svante Janson}
\thanks{SJ partly supported by the Knut and Alice Wallenberg Foundation}
\address{Department of Mathematics, Uppsala University, PO Box 480,
SE-751~06 Uppsala, Sweden}
\email{svante.janson@math.uu.se}
\newcommand\urladdrx[1]{{\urladdr{\def~{{\tiny$\sim$}}#1}}}
\author{Alfredo Viola}
\address{Universidad de la Rep\'ublica, Montevideo, Uruguay}
\email{viola@fing.edu.uy}
\keywords{hashing;
linear probing;
buckets;
generating functions;
analytic combinatorics}
\subjclass[2010]{60W40; 68P10, 68P20}
\numberwithin{equation}{section}
\renewcommand\le{\leqslant}
\renewcommand\ge{\geqslant}
\newtheorem{theorem}{Theorem}[section]
\newtheorem{lemma}[theorem]{Lemma}
\newtheorem{corollary}[theorem]{Corollary}
 \theoremstyle{definition}
\newtheorem{remark}[theorem]{Remark}
\newtheorem*{acks}{Acknowledgements}
\newenvironment{alphenumerate}[1][0pt]{
\addtolength{\leftmargini}{#1}\begin{enumerate}
 }{\end{enumerate}}
\newenvironment{romenumerate}[1][0pt]{
\addtolength{\leftmargini}{#1}\begin{enumerate}
 }{\end{enumerate}}
\newcounter{oldenumi}
{\setcounter{oldenumi}{\value{enumi}}
\begin{romenumerate} \setcounter{enumi}{\value{oldenumi}}}
{\end{romenumerate}}
\newcounter{thmenumerate}
\newcommand\pfitemx[1]{\par#1:}
\newcommand\pfitemref[1]{\pfitemx{\ref{#1}}}
\newcommand{\refT}[1]{Theorem~\ref{#1}}
\newcommand{\refC}[1]{Corollary~\ref{#1}}
\newcommand{\refL}[1]{Lemma~\ref{#1}}
\newcommand{\refR}[1]{Remark~\ref{#1}}
\newcommand{\refS}[1]{Section~\ref{#1}}
\newcommand{\refF}[1]{Figure~\ref{#1}}
\newcommand{\refFig}[1]{Figure~\ref{#1}}
\newcommand{\sumj}{\sum_{j=0}^\infty}
\newcommand{\sumk}{\sum_{k=0}^\infty}
\newcommand{\sumki}{\sum_{k=1}^\infty}
\newcommand{\summ}{\sum_{m=0}^\infty}
\newcommand{\sumn}{\sum_{n=0}^\infty}
\newcommand{\sumni}{\sum_{n=1}^\infty}
\newcommand{\sumin}{\sum_{i=1}^n}
\newcommand{\prodin}{\prod_{i=1}^n}
\newcommand\set[1]{\ensuremath{\{#1\}}}
\newcommand\xpar[1]{(#1)}
\newcommand\bigpar[1]{\bigl(#1\bigr)}
\newcommand\Bigpar[1]{\Bigl(#1\Bigr)}
\newcommand\biggpar[1]{\biggl(#1\biggr)}
\newcommand\lrpar[1]{\left(#1\right)}
\newcommand\bigabs[1]{\bigl|#1\bigr|}
\def\rompar(#1){\textup(#1\textup)}    
\newcommand\xfrac[2]{#1/#2}
\newcommand\parfrac[2]{\lrpar{\frac{#1}{#2}}}
\newcommand\Bigparfrac[2]{\Bigpar{\frac{#1}{#2}}}
\def\xexp(#1){e^{#1}}
\newcommand\floor[1]{\lfloor#1\rfloor}
\newcommand\ntoo{\ensuremath{{n\to\infty}}}
\newcommand\Ntoo{\ensuremath{{N\to\infty}}}
\newcommand\bmin{\wedge}
\newcommand\punkt{.\spacefactor=1000}    
\newcommand\iid{i.i.d\punkt}    
\newcommand\ie{i.e\punkt}
\newcommand\eg{e.g\punkt}
\newcommand\cf{cf\punkt}
\newcommand{\as}{a.s\punkt}
\newcommand\ii{\mathrm{i}}
\newcommand{\tend}{\longrightarrow}
\newcommand\dto{\overset{\mathrm{d}}{\tend}}
\newcommand\pto{\overset{\mathrm{p}}{\tend}}
\newcommand\bbC{\mathbb C}
\newcommand\bbN{\mathbb N}
\newcommand\bbZ{\mathbb Z}
\newcounter{CC}
\newcommand{\CC}{\stepcounter{CC}\CCx} 
\newcommand{\CCx}{C_{\arabic{CC}}}     
\newcounter{cc}
\newcommand\E{\operatorname{\mathbb E{}}}
\newcommand\PP{\Pr}
\newcommand\Var{\operatorname{Var}}
\newcommand\Po{\operatorname{Po}}
\newcommand\Bin{\operatorname{Bin}}
\newcommand\ga{\alpha}
\newcommand\gb{\beta}
\newcommand\gd{\delta}
\newcommand\gL{\Lambda}
\newcommand\go{\omega}
\newcommand\gO{\Omega}
\newcommand\gs{\sigma}
\newcommand\eps{\varepsilon}
\renewcommand\phi{\xxx}  
\newcommand\etta{\boldsymbol1}
\newcommand\qw{^{-1}}
\newcommand\qww{^{-2}}
\newcommand\qq{^{1/2}}
\newcommand\dd{\,\mathrm{d}}
\newcommand\ddx{\mathrm{d}}
\newcommand{\pgf}{probability generating function}
\newcommand\rhs{right-hand side}
\newcommand\etto{\bigpar{1+o(1)}}
\newcommand\HT{\mathfrak{T}}
\newcommand\psih{\psi_H}
\newcommand\psiq{\psi_Q}
\newcommand\psix{\psi_X}
\newcommand\psib{\psi_B}
\newcommand\psiu{\psi_U}
\newcommand\psiv{\psi_V}
\newcommand\psicrh{\psi_C}
\newcommand\psitu{\psi_{\tU}}
\newcommand\psifc{\psi_{\FC}}
\newcommand\psirh{\psi_{\RH}}
\newcommand\psihb{\psi_{\hB}}
\newcommand\aea{\ga e^{-\ga}}
\newcommand\zetal{\zeta_\ell}
\newcommand\prodlb{\prod_{\ell=0}^{b-1}}
\newcommand\prodlbi{\prod_{\ell=1}^{b-1}}
\newcommand\ctt{\tau}
\newcommand\mni{_{m,n;i}}
\newcommand\gai{_{\ga;i}}
\newcommand\ioooo{_{i=-\infty}^\infty}
\newcommand\och[1]{;\, #1}
\newcommand\intoa{\int_0^\ga}
\newcommand\dfc{D^{\FC}}
\newcommand\dfcnn{\dfc_{n,n}}
\newcommand\drh{D^{\RH}}
\newcommand\crh{C^{\RH}}
\newcommand\FC{\mathsf{FC}}
\newcommand\RH{\mathsf{RH}}
\newcommand\tU{\tilde U}
\newcommand\zetaxy[1]{T\bigpar{\go^{#1}\ga e^{-\ga}}}
\newcommand\zetaxx[1]{\zetaxy{#1}/\ga}
\newcommand\zetalx{\zetaxx{\ell}}
\newcommand\zetaly{\zetaxy{\ell}}
\newcommand\tba{T_0(b\ga)}
\newcommand\Pos{\operatorname{Pos}}
\newcommand\Add{\operatorname{Add}}
\newcommand\Bucket{\operatorname{Bucket}}
\newcommand\Mark{\operatorname{Mark}}
\newcommand\Seq{\operatorname{Seq}}
\newcommand\LL{^{(L)}}
\newcommand\Lm{^{(m)}}
\newcommand\ppm{{\bf P}_m}
\newcommand\zq{q}
\newcommand\Combinatorial{Combinatorial approach}
\newcommand\Probabilistic{Probabilistic approach}
\newcommand\mn{_{m,n}}
\newcommand\taux{T_0(b\ga)}
\newcommand\cttx{\taux}
\newcommand\eqrefhq{\eqref{h}--\eqref{q}}
\newcommand\infpoi{infinite Poisson model}
\newcommand\asmn{as $m,n\to\infty$ with $n/bm\to\ga$}
\newcommand\hB{\hat B}
\newcommand\ddz{\frac{\partial}{\partial z}}
\newcommand\ddzx{\xfrac{\partial}{\partial z}}
\newcommand\ddw{\frac{\partial}{\partial w}}
\newcommand\Uq{\mathsf{U}_q}
\newcommand\HH{\mathsf{H}}
\newcommand\lhopitals{l'H\^opital's rule}
\newcommand{\Holder}{H\"older}
\begin{document}

\begin{abstract} 
We give a unified analysis of linear probing hashing with a general bucket
size. We use both a combinatorial approach, giving exact formulas for
generating functions, and a probabilistic approach, giving simple
derivations of asymptotic results. Both approaches complement nicely,
and give a good insight in the relation between linear probing and
random walks. A key methodological contribution, at the core of
Analytic Combinatorics, is the use of the 
symbolic method (based on $q$-calculus) to directly derive the generating
functions to analyze.
\end{abstract}

\maketitle

\section{Motivation}\label{S:motivation}

{\em Linear probing hashing\/}, defined below,
is certainly the simplest ``in place'' hashing
algorithm~\cite{KnuthIII}.
\begin{itemize}
\item[]
\sl A table of length $m$, $T[1\,.\,.\,m]$, with buckets of size $b$ is set up,
as well as a hash function $h$ that maps keys from some domain to
the interval $[1\,.\,.\,m]$ of table addresses.
A collection of $n$ keys with
$n\le bm$ are entered sequentially  into the table
according to the following rule: Each key $x$ is placed at
the first bucket that is not full starting from $h(x)$ in cyclic order,
namely
the first of $h(x),h(x)+1,\ldots,m,1,2,\ldots,h(x)-1$.
\end{itemize}

In \cite{Knuth98} Knuth motivates his paper in the following way: 
``The purpose of this note is to exhibit a surprisingly simple solution
to a problem that appears
in a recent book by Sedgewick and Flajolet \cite{SedFla96}:

\noindent
\textbf{Exercise 8.39}\; Use the symbolic method to derive the EGF of the number
of probes
required by linear probing in a successful search, for fixed $M$.''

Moreover, at the end of the paper in his personal remarks he declares:
``None of the methods available in 1962 were powerful enough to deduce the expected 
square displacement, much less the higher moments, so it is an even greater 
pleasure to be able to derive such results today from other work that has
enriched the field of combinatorial mathematics during a period of 35
years.'' In this sense, he is talking about the powerful methods based
on Analytic Combinatorics that has been developed for the last decades,
and are presented in \cite{FlaSed09}.

In this paper we present in a unified way the analysis of several
random variables related with linear probing hashing with buckets,
giving explicit and exact trivariate generating functions in the
combinatorial model, 
together with  generating functions in the asymptotic Poisson
model that provide limit results, and relations between the two types of
results. 
We consider also the \emph{parking problem} version, where there is no
wrapping around and overflow may occur from the last bucket.
Linear probing has been shown to have strong connections with
several important problems (see \cite{Knuth98,FlaPoVio,ChFl03} and the 
references therein). The derivations in the asymptotic
Poisson model are probabilistic and use heavily
the relation between random walks and the profile of the table.
Moreover, the derivations in the combinatorial model are based in
combinatorial specifications that directly translate into multivariate
generating functions. As far as we know, this is the first 
unified presentation of the analysis of linear probing hashing with buckets
based on
Analytic Combinatorics (``if you can specify it, you can analyze it'').

We will see that results can easily be translated between
the exact combinatorial model and the asymptotic Poisson model.
Nevertheless, we feel that it is important to present independently 
derivations for the two models, since the methodologies complement very
nicely. Moreover, 
they heavily rely in the deep relations between linear probing and
other combinatorial problems like random walks, and the power of
Analytic Combinatorics.

The derivations based on Analytic Combinatorics heavily rely on a
lecture presented by Flajolet whose notes can be accessed in
\cite{Flajolet:slides}. Since these ideas have only been partially 
published in the context of the analysis of hashing in 
\cite{FlaSed09}, we briefly present here some
constructions that lead to  $q$-analogs of their corresponding
multivariate generating functions.

\section{Some previous work}\label{S:previous}
The main application of linear probing is to retrieve information in
secondary storage devices when the load factor is not too high,
as first proposed by Peterson \cite{Peterson}.
One reason for the use of linear probing is that it
preserves locality of reference between successive probes, thus avoiding
long seeks \cite{Larson}.

The first published analysis of linear probing was
done by Konheim and Weiss \cite{KW}.
In addition, this problem also has a special historical value 
since the first analysis of algorithms ever performed by D. Knuth \cite{Knuth63}
was that of linear probing hashing.
As Knuth indicates in many of his writings, the problem has had a
strong influence on his scientific carreer.
Moreover, the construction cost to fill a linear probing hash table
connects to a wealth of interesting combinatorial and analytic problems.
More specifically, the Airy distribution that surfaces as a limit law
in this construction cost is also present in random trees
(inversions and path length), random graphs (the complexity or
excess parameter), and in random walks (area), 
as well as in Brownian motion (Brownian excursion area)
\cite{Knuth98,FlaPoVio,SJ201}.

Operating primarily in the context of double hashing,
several authors \cite{Brent,Amble,GM} observed that a collision could be
resolved in favor of {\it{any}} of the keys involved, and used this
additional degree of freedom to decrease the expected search time in the
table. We obtain the standard scheme by letting the incoming key probe its
next location. So, we may see this standard policy as a
{\it first-come-first-served} (FCFS) heuristic.
Later Celis, Larson and Munro \cite{CelisT,Celis} were the first to observe that
collisions could be resolved having {\it{variance reduction}} as a goal.
They defined the Robin Hood heuristic, in which each collision occurring on
each insertion is resolved in
favor of the key that is farthest away from its home location. Later,
Poblete and Munro \cite{LCFS} defined the {\it last-come-first-served}
(LCFS) heuristic,
where collisions are resolved in favor of the incoming key, and others are
moved ahead one position in their probe sequences. These strategies do
not look ahead in the probe sequence, since the decision is made before
any of the keys probes its next location. As a consequence, they do not
improve the average search cost, although the variance of this random
variable is different for each strategy.

For the FCFS heuristic, if $A_{m,n}$ denotes the number
of probes in a successful search in a hash table of size $m$
with $n$ keys
(assuming all keys in the table
are equally likely to be searched), and if we assume
that the hash function $h$ takes all the
values in $0\ldots m-1$ with equal probabilities, then we know
from
\cite{KnuthIII,GoBa91}
\begin{align}
 {\E}[A_{m,n}] &= \frac{1}{2}(1+Q_0(m,n-1)),
\label{average} 
\\
 {\Var}[A_{m,n}] 
&=
\frac{1}{3}Q_2(m,n-1)-\frac{1}{4}Q_0(m,n-1)^2-\frac{1}{12}.
\end{align}
where
\begin{align*}
Q_r(m,n) = \sum_{k=0}^n \binom{k+r}{k}
\frac{n^{\underline{k}}}{m^k}
\end{align*}
and $n^{\underline{k}}$ defined as
$n^{\underline{k}} = n(n-1) \ldots (n-k+1)$
for real $n$ and integer $k \geq 0$ is the $k$:{th} falling
factorial power of $n$.
The function $Q_0(m,n)$ is also known as  Ramanujan's
$Q$-function \cite{RamanujanQ}.

For a table with $n = \alpha m$ keys, and fixed $\alpha < 1$
and $n,m \rightarrow \infty$, these quantities depend (essentially) only
on $\alpha$:
\begin{align*}
{\E}[A_{m,\alpha m}] &=
\frac{1}{2}\left(1+\frac{1}{1-\alpha}\right) -
\frac{1}{2(1-\alpha)^3m} + O\left(\frac{1}{m^2}\right),
\\
{\Var}[A_{m,\alpha m}] &= \frac{1}{3(1-\alpha)^3} -
\frac{1}{4(1-\alpha)^2} - \frac{1}{12} -
\frac{1+3\alpha}{2(1 - \alpha)^5m} +
O\left(\frac{1}{m^2}\right).
\end{align*}
For a full table, these approximations are useless, but the
properties of the
$Q$ functions can be used to obtain the following expressions, 
reproved in \refC{CFC0},
\begin{align}
{\E}[A_{m,m}]& = \frac{\sqrt{2\pi m}}{4}+\frac{1}{3}+
\frac{1}{48}\sqrt{\frac{2\pi}{m}}+O\left(\frac{1}{m}\right),
\label{ammE} \\
{\Var}[A_{m,m}] &= \frac{\sqrt{2\pi m^3}}{12}+
\left(\frac{1}{9}-\frac{\pi}{8}\right)m+ \frac{13\sqrt{2\pi
m}}{144}-
\frac{47}{405} - \frac{\pi}{48} +
O\left(\frac{1}{\sqrt{m}}\right).
\label{ammV}
\end{align}

As it can be seen the variance is very high, and as a
consequence the Robin Hood and LCFS heuristics are important in
this regard. It is proven in \cite{CelisT,Celis} that Robin Hood
achieves the minimum variance among all the heuristics that do
not look ahead at the future, and that LCFS has an
asymptotically optimal variance \cite{DPT}. This problem 
appears in the simulations presented in Section \ref{SER}, and
as a consequence even though the expected values that we present
are very informative, there is still a disagreement with the
experimental results.

Moreover, in \cite{SJ157} and \cite{Exact}, a distributional
analysis for the FCFS, LCFS and Robin Hood heuristic is
presented.
These results consider a hash table with buckets of size 1.
However, very little is known when we have tables with buckets
of size $b$.

In \cite{Bigbuck}, Blake and Konheim studied the asymptotic
behavior
of the expected cost of successful searches
as the number of keys and buckets tend to infinity
with their ratio remaining constant. Mendelson \cite{Mendel}
derived
exact formulae for the same expected cost, but only solved them
numerically.
These papers consider the FCFS heuristic. Moreover, in \cite{RHBuckets}
an exact analysis of a linear probing hashing scheme with buckets
of size $b$ (working with the Robin Hood heuristic) is presented.
The first complete distributional analysis of the Robin Hood
heuristic with buckets of size $b$ is presented in \cite{Viola},
where an independent analysis of the parking problem presented
in \cite{SeitzDiploma} is also proposed.

In the present paper we consider an arbitrary bucket size $b\ge1$. 
The special case $b=1$ has been studied in many previous works.
In this case, many of our results reduce to known results, see \eg{} 
\cite{SJ157} and \cite{Exact} and the references given there.

\section{Some notation}
We study tables with $m$ buckets of size $b$ and $n$ keys, where $b\ge1$
is a  constant. We often consider limits as $m,n\to\infty$ with $n/bm\to\ga$
with $\ga\in(0,1)$. We consider also the Poisson model with
$n\sim\Po(\ga bm)$, and thus $\Po(b\ga)$ keys hashed to each bucket; in
this model we can also take $m=\infty$ which gives a natural limit object, see
Sections \ref{Shash}--\ref{Sconv}.


A \emph{cluster} or \emph{block} is a (maximal) sequence of full
buckets ended by a non-full one.
An \emph{almost full table} is a table consisting of a single cluster.

The \emph{tree function}
\cite[p.~127]{FlaSed09}
is defined by
\begin{equation}
  \label{tree}
T(z) := \sum_{n=1}^\infty \frac{n^{n-1}}{n!} z^n,
\end{equation}
which converges for $|z|\le e^{-1}$;
$T(z)$ 
satisfies $T(e\qw)=1$ and
\begin{equation}\label{ztt}
  z = T(z)e^{-T(z)}.
\end{equation}
In particular, note that \eqref{ztt} implies that $T(z)$ is injective 
on $|z|\le e\qw$.
Recall also the well-known formula (easily obtained by
taking the logarithmic derivative of \eqref{ztt})
\begin{equation}\label{T'}
  T'(z)=\frac{T(z)}{z(1-T(z))}.
\end{equation}
(The tree function is related to the
Lambert $W$-function $W(z)$ \cite[\S 4.13]{NIST} by
$T(z)=-W(-z)$.) 

Let 
\begin{math}
  \go=\go_b:=e^{2\pi\ii/b}
\end{math}
be a primitive $b$:th unit root.

For $\ga$ with $0<\ga<1$, we define 
\begin{equation}\label{zetalzq}
  \zetal(\zq) 
=  \zetal(\zq;\ga) 
:=  T\bigpar{\go^\ell\aea \zq^{1/b}}/\ga
\end{equation}
and
\begin{equation}\label{zetal}
  \zetal :=\zetal(1) =\zetalx.
\end{equation}
Note that
\begin{equation}\label{zeta0}
  \zeta_0 =
T\bigpar{\ga e^{-\ga}}/\ga = 1,
\end{equation}
\cf{} \eqref{ztt} (with $z=\ga e^{-\ga}$).
We note the following properties of these numbers.

\begin{lemma}
  \label{Lzetalzq} 
Let\/ $0<\ga<1$. 
Then \eqref{zetalzq} defines
$b$ numbers $\zetal(\zq)$, $\ell=0,\dots,b-1$,
for every $\zq$ with $|\zq|\le R:=(e^{\ga-1}/\ga)^b>1$.
If furthermore $\zq\neq0$, then these $b$ numbers are distinct.

If\/ $|\zq|\le1$, then the $b$ numbers $\zetal(\zq)$, $\ell=0,\dots,b-1$,
satisfy $|\zetal(\zq)|\le1$, and they 
are the $b$ roots in the unit disc of 
\begin{equation}\label{zetalzq2}
  \zeta^b=e^{\ga b(\zeta-1)}\zq.
\end{equation}
\end{lemma}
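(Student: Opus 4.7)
The plan hinges on the functional equation \eqref{ztt} together with the injectivity of $T$ on $\{|z|\le e^{-1}\}$ stated immediately after it. First I would verify well-definedness and the inequality $R>1$. Since $x\mapsto xe^{-x}$ is strictly increasing on $[0,1]$ with maximum $e^{-1}$ at $x=1$, the hypothesis $0<\ga<1$ gives $\ga e^{-\ga}<e^{-1}$ and hence $R=(e^{-1}/(\ga e^{-\ga}))^b=(e^{\ga-1}/\ga)^b>1$. For any $|\zq|\le R$, the argument $\go^\ell\ga e^{-\ga}\zq^{1/b}$ appearing in \eqref{zetalzq} has modulus at most $\ga e^{-\ga}R^{1/b}=e^{-1}$, so $T$ is defined there and \eqref{zetalzq} produces $b$ numbers $\zetal(\zq)$.

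Next I would establish distinctness and derive \eqref{zetalzq2}. For $\zq\neq 0$, the $b$ arguments $\go^\ell\ga e^{-\ga}\zq^{1/b}$ share a common positive modulus but have $b$ distinct arguments (since $\go$ is a primitive $b$-th root of unity) and are therefore distinct; the stated injectivity of $T$ on $\{|z|\le e^{-1}\}$ then yields distinct images, so the $\zetal(\zq)$ are distinct. For the algebraic equation, writing $z_\ell:=\go^\ell\ga e^{-\ga}\zq^{1/b}$ so that $T(z_\ell)=\ga\zetal(\zq)$, the identity \eqref{ztt} reads $z_\ell=\ga\zetal(\zq)e^{-\ga\zetal(\zq)}$; raising to the $b$-th power uses $\go^{b\ell}=1$ to kill the rotation and, after dividing by $\ga^b$ and rearranging, yields $\zetal(\zq)^b=\zq\,e^{\ga b(\zetal(\zq)-1)}$, which is \eqref{zetalzq2}.

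Assuming further $|\zq|\le 1$, I would bound $|\zetal(\zq)|$ as follows. Now $|z_\ell|\le\ga e^{-\ga}<e^{-1}$, and since the series \eqref{tree} has nonnegative coefficients, $|T(z_\ell)|\le T(|z_\ell|)\le T(\ga e^{-\ga})$ by monotonicity of $T$ on the positive real axis. But \eqref{ztt} has a unique real solution $T\in[0,1]$ for $z\in[0,e^{-1}]$ (because $xe^{-x}$ is strictly increasing on $[0,1]$), and $T=\ga$ solves $z=\ga e^{-\ga}$, so $T(\ga e^{-\ga})=\ga$ and therefore $|\zetal(\zq)|\le 1$.

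Finally, to identify the $\zetal(\zq)$ as \emph{all} the roots of \eqref{zetalzq2} in the closed unit disc, I would invoke Rouché's theorem on a circle $|\zeta|=r$ with $r>1$ chosen so close to $1$ that $r^b>e^{\ga b(r-1)}$, which is possible because at $r=1$ both sides equal $1$ while the $r$-derivatives are $b$ and $\ga b<b$ respectively. On such a circle and under $|\zq|\le 1$, one has $|\zeta^b|=r^b>|\zq|\,e^{\ga b(r-1)}\ge|\zq e^{\ga b(\zeta-1)}|$, so Rouché's theorem yields exactly $b$ zeros of $\zeta^b-\zq e^{\ga b(\zeta-1)}$ in $|\zeta|<r$; since the $b$ distinct $\zetal(\zq)$ already lie in $\{|\zeta|\le 1\}\subset\{|\zeta|<r\}$, they exhaust the list. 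The case $\zq=0$ is immediate: \eqref{zetalzq2} becomes $\zeta^b=0$ with unique root $0=\zetal(0)$. The main obstacle is this last Rouché step, since $|\zeta^b|$ and $|\zq e^{\ga b(\zeta-1)}|$ can coincide on $|\zeta|=1$ (notably at $\zeta=1$ when $|\zq|=1$), which forces the slight enlargement of the radius.
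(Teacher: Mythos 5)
Your proof is correct and follows essentially the same route as the paper's: definability via the bound $|\go^\ell\aea\zq^{1/b}|\le e^{-1}$, distinctness from injectivity of $T$, the modulus bound from positivity of the coefficients of $T$, and the Rouché argument on a circle of radius slightly greater than $1$. You merely spell out a few steps (why $R>1$, distinctness of the arguments $\go^\ell\aea\zq^{1/b}$, the inequality $|e^{\ga b(\zeta-1)}|\le e^{\ga b(r-1)}$ on $|\zeta|=r$) that the paper leaves implicit.
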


\begin{proof}
Note first that 
\begin{equation}
  |\aea\go^\ell \zq^{1/b}|\le e\qw
\end{equation}
is equivalent to
\begin{equation}
  |\zq|\le R:=(e^{\ga-1}/\ga)^b>1,
\end{equation}
so all $\zetal(\zq)$ are defined for $|\zq|\le R$.
If also $\zq\neq0$, then 
$\zeta_0(\zq),\dots,\zeta_{b-1}(\zq)$ are distinct,
because $T$ is injective by \eqref{ztt}.
Furthermore, for $|\zq|\le1$,
using \eqref{zetalzq}, the fact that \eqref{tree} has positive coefficients,
and \eqref{zeta0},
\begin{equation}\label{hugo}
  |\zetal(\zq)|\le \ga\qw T\bigpar{\aea}=1.
\end{equation}
Moreover,  
\eqref{zetalzq} and \eqref{ztt} imply
\begin{equation}
  \ga\zetal(\zq) e^{-\ga\zetal(\zq)}=\aea\go^\ell \zq^{1/b}
\end{equation}
which by taking the $b$:th powers yields \eqref{zetalzq2}.
Since the derivative of $e^{\ga b (r-1)}-r^b$ at $r=1$ is $\ga b-b<0$,
we can find $r>1$ such that $e^{\ga b(r-1)}<r^b$, and then 
Rouch\'e's theorem shows that, for any $\zq$ with $|\zq|\le1$, 
$\zeta^b-e^{\ga b(\zeta-1)}\zq$ has exactly $b$ roots in $|\zeta|<r$; 
thus, the $b$ roots
$\zeta_0(q),\dots,\zeta_{b-1}(q)$ are the only roots of \eqref{zetalzq2}
in $|\zeta|<r$. (The case $\zq=0$ is trivial.)  
\end{proof}

\begin{remark}
In order to define an individual $\zetal(\zq)$, we have to fix a
choice of $\zq^{1/b}$ in \eqref{zetalzq}. It is thus impossible to define each
$\zetal(\zq)$ as a continuous function in the entire unit disc; they rather are
different branches of a single, multivalued, function, with a branch point
at $0$. Nevertheless, it is
only the collection of all of them together that matters, for example in
\eqref{psib}, and this collection is uniquely defined.  
\end{remark}

We denote convergence in distribution of random variables by  $\dto$.

\section{Combinatorial characterization of linear probing}\label{S:combin}

As a combinatorial object, a non-full linear probing hash table is a sequence of
almost full tables (or clusters) \cite{Knuth98,FlaPoVio,Viola}. 
As a consequence, any random variable related with the
table itself (like block lengths, or the overflow in the parking problem)
or with a random key (like its search cost) can be studied in a
cluster (that we may assume to be the last one in the sequence), and then 
use the sequence construction.  \refF{decomp} presents an 
example of such a decomposition.

We briefly recall here some of the definitions presented in 
\cite{Bigbuck,Viola}.
Let $F_{bi+d}$ be the number of ways to construct an almost full table of
length
$i+1$ and size $bi+d$ (that is, there are $b-d$ empty slots in the last
bucket). Define also
\begin{align}
F_d(u) := \sum_{i\geq 0} F_{bi+d} \frac{u^{bi+d}}{(bi+d)!},&&&
N_d(z,w): =
\sum_{s=0}^{b-1-d} w^{b-s} F_s(zw),\quad 0\leq d \leq b-1. 
\label{laFd} 
\end{align}
In this setting $N_d(z,w)$ is the generating function for the number of
almost full tables with more than $d$ empty locations in the last bucket.
More specifically $N_0(z,w)$ is the generating function for the
number of all the almost full tables. 
(Our generating functions use the weight $w^{bm}z^n/n!$ for a table of
length $m$ and $n$ keys; they are thus exponential generating functions in
$n$
and ordinary generating functions in $m$.) 
We present below some basic identities.
\begin{lemma}
\begin{align} 
F(bz,x)& := \sum_{d=0}^{b-1} F_d(bz) x^d 
= x^b - \prod_{j=0}^{b-1} \left(x -\frac{T(\omega^jz)}{z}\right) 
,\label{Fd}
\\
\sum_{d=0}^{b-1} N_d(bz,w) x^d 
&= 
\frac{\displaystyle{\prod_{j=0}^{b-1} \left(1 -x\frac{T(\omega^jzw)}{z}\right)
-\prod_{j=0}^{b-1} \left(1 -\frac{T(\omega^jzw)}{z}\right)}}
{1-x}, \label{Nd} \\
\sum_{d=0}^{b-1} N_d(b\alpha,e^{-\alpha}) x^d 
&= 
\prod_{j=1}^{b-1} \left(1 -x\frac{T(\omega^j\alpha e^{-\alpha} )}{\alpha }\right).
\label{Nda}
\end{align}
\end{lemma}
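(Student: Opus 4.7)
The plan is to establish \eqref{Fd} first via a kernel-method argument on a random-walk interpretation of almost full tables, and then to derive \eqref{Nd} and \eqref{Nda} algebraically from it.

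For \eqref{Fd}, the starting point is a combinatorial bijection: an almost full table of length $i+1$ with $bi + d$ keys corresponds to a bucket-occupancy vector $(k_1, \ldots, k_{i+1})$ (recording the number of keys hashed to each bucket), subject to the cluster constraint $\sum_{\ell' \le \ell} k_{\ell'} \ge b\ell$ for $\ell = 1, \ldots, i$ and $\sum k_\ell = bi + d$. Introducing the overflow $e_\ell := \sum_{\ell' \le \ell} k_{\ell'} - b\ell$, the cluster condition becomes: the walk $(e_\ell)$ on $\mathbb Z_{\ge 0}$ starts at $e_0 = 0$, stays non-negative through step $i$, and drops to $e_{i+1} = d - b$ at the final step. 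Each step has weight $(bz)^{k_\ell}/k_\ell!$, giving step generating function $\sigma(y) = y^{-b} e^{bzy}$ in the overflow variable $y$. The associated kernel equation $y^b = e^{bzy}$ has $b$ solutions analytic near $z = 0$, namely $u_j = T(\omega^j z)/z$ for $j = 0, \ldots, b-1$; this follows directly from \eqref{ztt} by writing $y = \omega^j e^{zy}$, which gives $zy = T(\omega^j z)$.

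The key step is the kernel derivation. Let $W(y) := \sum_{e \ge 0} W_e y^e$, where $W_e$ is the generating function (in $z$) for walks from $0$ to $e$ of arbitrary length staying $\ge 0$. Concatenating a walk with one more step and restricting to non-negative endpoints yields the functional equation
\begin{equation*}
W(y)\bigl(y^b - e^{bzy}\bigr) = y^b - R(y),
\end{equation*}
where $R(y) := \sum_{n=0}^{b-1} [y^n](e^{bzy} W(y)) \, y^n$ is a polynomial in $y$ of degree $< b$. A direct computation gives $[y^d](e^{bzy} W(y)) = \sum_{e=0}^d W_e (bz)^{d-e}/(d-e)! = F_d(bz)$, the last equality following from decomposing a cluster as a walk (meander) from $0$ to $e_i \in [0, d]$ followed by the final bucket with $k_{i+1} = d - e_i$ keys. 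Hence $R(y) = \sum_d F_d(bz) y^d = F(bz, y)$. Substituting $y = u_j$ into the functional equation gives $R(u_j) = u_j^b$, so the monic polynomial $y^b - R(y)$ of degree $b$ vanishes at every $u_j$, forcing $y^b - R(y) = \prod_{j=0}^{b-1}(y - u_j)$, which is \eqref{Fd}.

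For \eqref{Nd}, I would proceed algebraically from \eqref{Fd}. Swapping the order of summation in the definition \eqref{laFd} gives $\sum_d N_d(bz, w) x^d = \sum_{s=0}^{b-1} w^{b-s} F_s(bzw) \cdot (x^{b-s} - 1)/(x - 1)$. Applying the reformulation $\sum_s F_s(bzw) \xi^{b-s} = 1 - \prod_j(1 - \xi T(\omega^j zw)/(zw))$ (obtained from \eqref{Fd} applied at $z \mapsto zw$ with $x \mapsto 1/\xi$ and multiplied through by $\xi^b$) with $\xi = wx$ and $\xi = w$ yields the claimed quotient. For \eqref{Nda}, substituting $z = \alpha, w = e^{-\alpha}$ into \eqref{Nd} and using $T(\alpha e^{-\alpha}) = \alpha$ (immediate from \eqref{ztt} with $T = \alpha, z = \alpha e^{-\alpha}$, by injectivity of $T$ on $|z| \le e^{-1}$), the $j = 0$ factor of the second product vanishes, so the numerator reduces to $(1 - x) \prod_{j=1}^{b-1}(1 - x T(\omega^j \alpha e^{-\alpha})/\alpha)$; dividing by $1 - x$ gives \eqref{Nda}. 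The main obstacle is the kernel derivation for \eqref{Fd}: the step generating function $\sigma(y) = y^{-b} e^{bzy}$ is transcendental (since arbitrarily many keys may hash to a single bucket), unlike the polynomial steps in the classical Banderier--Flajolet setting, so one must carefully verify that $W(y)$ and $R(y)$ are well-defined as formal power series in $z$ and that the substitution $y = u_j$ is legitimate.
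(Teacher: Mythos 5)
Your proof is correct. For \eqref{Nd} and \eqref{Nda} you follow essentially the same algebraic path as the paper: swap the order of summation in \eqref{laFd} to get the $(1-x^{b-s})/(1-x)$ factor, rewrite the two sums via \eqref{Fd} (the paper phrases this as $w^bF(bzw,1/w)$ and $(wx)^bF(bzw,1/(wx))$, which is your $\xi = w$ and $\xi = wx$ substitutions), and then observe that at $(z,w) = (\alpha, e^{-\alpha})$ the $j=0$ factors produce and cancel the $(1-x)$. The interesting divergence is \eqref{Fd}: the paper simply cites Lemma~2.3 of Blake and Konheim \cite{Bigbuck} and proves nothing, whereas you give a self-contained derivation by the kernel method on the overflow walk $e_\ell = \sum_{\ell'\le\ell}k_{\ell'} - b\ell$. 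This is a genuine and attractive alternative. It makes the lemma independent of external references, and it also foreshadows the random-walk perspective the paper itself adopts from \refS{Shash} onward (where $S_i - bi$ appears as exactly this reflected walk), so the two halves of the paper become more visibly related. The technical point you flag --- that $\sigma(y) = y^{-b}e^{bzy}$ has infinitely many positive steps, unlike the finite-support Banderier--Flajolet setting --- is a real one, but it causes no trouble: since a nonnegative walk ending at $e$ with $i$ steps has $\sum k_\ell = e + bi \ge e$, each $W_e \in \bbC[[z]]$ has $z$-valuation $\ge e$, so $W(y) = \sum_e W_e y^e$ converges $z$-adically when $y = u_j(z)$ is substituted (each $u_j$ having nonzero constant term $\omega^j$), and the $b$ branches $u_j$ are distinct power series, which is what the monic-polynomial-factorization step requires. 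So the substitution is legitimate and the argument is complete.
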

\begin{proof}
Equation (\ref{Fd}) can be derived from Lemma 2.3 in \cite{Bigbuck}.

Moreover, from equation (\ref{laFd}), 
\begin{align*}
\sum_{d=0}^{b-1} N_d(bz,w) x^d &=
\sum_{d=0}^{b-1} x^d
\sum_{s=0}^{b-1-d} w^{b-s} F_s(bzw) =
\sum_{s=0}^{b-1} w^{b-s} F_s(bzw)
\sum_{d=0}^{b-1-s} x^d \\ 
&=
\frac{\displaystyle{\sum_{s=0}^{b-1} w^{b-s} F_s(bzw)
-\sum_{s=0}^{b-1} (wx)^{b-s} F_s(bzw)}}{1-x} \\
&=
\frac{\displaystyle{w^{b} F\left(bzw,\frac{1}{w}\right)
-(wx)^{b} F\left(bzw,\frac{1}{wx}\right) }}{1-x}.
\end{align*}

Then (\ref{Nd}) follows from (\ref{Fd}).

Finally \eqref{Nda} follows from \eqref{Nd}, and since 
$T(\alpha e^{-\alpha})=\alpha$
the factor for $j=0$ cancels the second product, and
gives the factor $(1-x)$ in the first one that cancels with the
denominator.
\end{proof}
\begin{corollary}
\begin{equation} 
N_d(bz,w) 
= [x^d]
\frac{\displaystyle{\prod_{j=0}^{b-1} \left(1 -x\frac{T(\omega^jzw)}{z}\right)
-\prod_{j=0}^{b-1} \left(1 -\frac{T(\omega^jzw)}{z}\right)}}
{1-x}, 
\label{Nd1}
\end{equation}
and more specifically (formula (3.8) in \cite{Bigbuck}),
\begin{equation} 
N_0(bz,w)
= 1 - \prod_{j=0}^{b-1} \left(1 -\frac{T(\omega^jzw)}{z}\right). 
\label{N0}
\end{equation}

Moreover (Lemma 5 in \cite{Viola}),
\begin{equation} 
N_d(b\alpha,e^{-\alpha})
= 
\left[ x^d \right] \prod_{j=1}^{b-1} \left(1 -x\frac{T(\omega^j\alpha e^{-\alpha}
)}{\alpha }\right).
\end{equation}
\end{corollary}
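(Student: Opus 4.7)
The plan is to deduce all three identities of the corollary from the preceding lemma by elementary extraction of the coefficient of $x^d$.

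First, I would address \eqref{Nd1}. Both sides of \eqref{Nd} are polynomials in $x$ of degree at most $b-1$: the left-hand side obviously, and for the right-hand side I would verify that $(1-x)$ actually divides the numerator by evaluating at $x=1$, where both products reduce to $\prod_{j=0}^{b-1}(1-T(\omega^j zw)/z)$ so their difference vanishes. Reading off $[x^d]$ from both sides of \eqref{Nd} then gives \eqref{Nd1} directly.

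For \eqref{N0}, I would specialize $d=0$ in \eqref{Nd1}. Since the right-hand side is a polynomial in $x$, extracting $[x^0]$ amounts to evaluating at $x=0$: the denominator becomes $1$, the first product becomes $\prod_{j=0}^{b-1}1=1$, and the second product is independent of $x$. The resulting difference $1 - \prod_{j=0}^{b-1}(1-T(\omega^j zw)/z)$ is exactly \eqref{N0}. The third identity is obtained in the same way directly from \eqref{Nda}, whose right-hand side is already a polynomial in $x$ of degree $b-1$; extracting $[x^d]$ on both sides yields the claim.

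No substantive obstacle arises; the entire corollary consists of one-line coefficient extractions from the lemma. The only small point requiring a brief check is the removability of the apparent pole at $x=1$ in \eqref{Nd}, which is automatic from the symmetric form of the numerator. As a sanity check, one may also note that \eqref{Nda} itself is the specialization of \eqref{Nd} at $z=\alpha$, $w=e^{-\alpha}$: since $T(\alpha e^{-\alpha})=\alpha$ by \eqref{zeta0}, the $j=0$ factor of the second product vanishes, while the $j=0$ factor of the first product becomes $1-x$ and cancels the denominator, leaving the product over $j=1,\ldots,b-1$.
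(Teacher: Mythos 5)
Your proposal is correct and matches the paper's (implicit) reasoning: the corollary is stated without proof precisely because it is a direct coefficient extraction from \eqref{Nd} and \eqref{Nda}, together with the observation that $1-x$ divides the numerator in \eqref{Nd}, which you verify. Your sanity check recovering \eqref{Nda} from \eqref{Nd} via $T(\alpha e^{-\alpha})=\alpha$ is also exactly the argument used in the lemma's proof.
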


Let also $Q_{m,n,d}$, for $0\le d\le b-1$,
be the number of ways of inserting $n$ keys
into a table with $m$ buckets of size $b$, so that a given (say the
last) bucket of the table contains more than $d$ empty slots.
(We include the empty table, and define $Q_{0,0,d}=1$.)
In this setting, by a direct application of the sequence construction as
presented in \cite{FlaSed09} (sequence of almost full tables)
we derive a result presented in \cite{Bigbuck}:
\begin{equation}
\Lambda_0(bz,w) := \sum_{m \geq 0} \sum_{n \geq 0} Q_{m,n,0}
\frac{(bz)^n}{n!} w^{bm}
= \frac{1}{1-N_0(bz,w)}
=  \frac{1}{\prod_{j=0}^{b-1} \left(1
-\frac{T(\omega^jzw)}{z}\right)},
\label{laLambda0}
\end{equation}
where $\Lambda_0(bz,w)$ is the generating function for the number of ways
to construct hash tables such that their last bucket is not full,
and more generally
\begin{equation}
\Lambda_d(bz,w) := \sum_{m \geq 0} \sum_{n \geq 0} Q_{m,n,d}
\frac{(bz)^n}{n!} w^{bm}
= 1 + \frac{N_d(bz,w)}{1-N_0(bz,w)}.
\label{laLambdad}
\end{equation}


Moreover
$O_d(bz,w)$, the generating function for the number of ways
to construct hash tables such that their last bucket has exactly $d\le b-1$
keys, is
\begin{equation}
\label{LaOgen}
O_d(bz,w) := \frac{F_d(bzw)w^{b-d}}{1-N_0(bz,w)}.
\end{equation}

\begin{figure}[htbp]
  \begin{center}
    \noindent\hrule
    \medskip
   \includegraphics[clip=true,bb=100 510 500 665,width=0.65\textwidth]{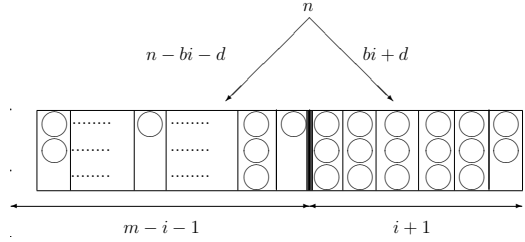}
    \caption{\small A decomposition for $b = 3$ and  $d = 2$.}
    \label{decomp}
    \medskip
    \noindent\hrule
  \end{center}
\end{figure}

Consider a hash table of length $m$ and $n$ keys, where
collisions are resolved by linear probing.
Let $P$ be a non-negative integer-valued
property (e.g.\ cost of a successful search or block
length), related with the last cluster of the sequence, or with a 
random key inside it.
Let $p_{bi+d}(q)$ be the probability generating function of $P$
calculated in the cluster of length $i+1$ and with $bi+d$ keys 
(as presented in \refF{decomp}).
We may express $p_{m,n}(q)$, the
generating function of $P$ for a table of length $m$ and $n$ keys 
with at least one empty spot in the last bucket,
as the sum of the conditional probabilities:
\begin{equation*}
p_{m,n}(q) = \sum_{d=0}^{b-1} \sum_{i \ge 0}
\#\set{\text{tables where last cluster has size $i+1$ and $bi+d$ keys}} 
~ p_{bi+d}(q).   
\end{equation*}

There are $Q_{m-i-1,n-bi-d,0}$ ways to insert $n-bi-d$ keys
in the leftmost hash table of length $m-i-1$, leaving their rightmost 
bucket not full. Moreover, there are $F_{bi+d}$ ways
to insert $bi+d$ keys in the almost full table of length $i+1$.
Furthermore, there are
$\binom{n}{bi+d}$ ways to choose which $bi+d$ keys go to the
last cluster.
Therefore,
\begin{equation*}
p_{m,n}(q) = \sum_{d=0}^{b-1} \sum_{i\geq0} 
\binom{n}{bi+d}
{Q_{m-i-1,n-bi-d,0}~F_{bi+d}}~~p_{bi+d}(q).
\end{equation*}

Then, the trivariate generating function for $p_{m,n}(q)$ is
\begin{align}
\label{Gen-mn1}
P(z,w,q) &:= \sum_{m,n\geq0}p_{m,n}(q)~w^{bm} \frac{z^n}{n!} 
= \Lambda_0(z,w) \hat{N}_0(z,w,q), 
\intertext{with}
\hat{N}_0(z,w,q) &:=
\sum_{i\geq 0} w^{b(i+1)}
\sum_{d=0}^{b-1}
F_{bi+d}~\frac{z^{bi+d}}{(bi+d)!}~p_{bi+d}(q),
\label{Gen-mn2}
\end{align}
which could be directly derived with the sequence construction
\cite{FlaSed09}. In this setting, $w$ marks the total capacity
($b(i+1)$) while $z$ marks the number of keys in the table
($bi+d$ with $0 \leq d < b$).
Equation \eqref{Gen-mn1} can be interpreted as follows: to
analyze a property in a linear probing hash table, do the
analysis in an almost full table
(giving the factor $\hat{N}_0(z,w,q)$), and then use the sequence construction
(giving the factor $\Lambda_0(z,w)$).

Notice that, as expected, $\hat{N}_0(z,w,1) = N_0(z,w)$ 
and $P(z,w,1) = \Lambda_0(z,w) -1$, 
since we consider only $m \geq 1$ (we have a last, non-filled bucket).

\begin{remark}\label{RGen}
  We have here, for simplicity, assumed that each $p_{bi+d}(q)$ is a
  probability generating function, corresponding to a single property $P$ of
  the last cluster. However, the argument above generalizes to the case when
$p_{bi+d}(q)$ is a generating function corresponding to several values of a
  property $P$ for each cluster; \eqref{Gen-mn1} and \eqref{Gen-mn2} still
  hold, although we no longer have $p_{bi+d}(1)=1$ and thus 
not $\hat{N}_0(z,w,1) = N_0(z,w)$ (in general).
We use this in Sections \ref{SU} and \ref{SFC}.
\end{remark}

\subsection{The Poisson Transform} \label{SPoissonTransform}
There are two standard models that are extensively used in the
analysis of
hashing algorithms: the {\it exact filling} model and the
{\it Poisson filling} model.
Under the exact model, we have a fixed number of keys, $n$, that
are
distributed among $m$ buckets of size $b$, and all $m^n$ possible
arrangements are
equally likely to occur.

Under the Poisson model, we assume that each location receives a
number of keys
that is Poisson distributed with parameter $b \alpha$ (with
$0\leq \alpha < 1$) , and is
{\em independent} of
the number of keys going elsewhere. This implies that the total
number
of keys, $N$, is itself a Poisson distributed random variable
with
parameter $b \alpha m$:
\begin{align*}
\Pr\left[N=n\right] = \frac{e^{-b \alpha m}(b \alpha m)^n}{n!},
\qquad n\ = \ 0,1,\ldots
\end{align*}
(For finite $m$, there is an exponentially small probability that $N>bm$; 
this possibility
can usually be ignored in the limit $m,n\to\infty$. 
For completeness we assume either that we consider the parking problem where
overflow may occur, or that we make some special definition when $N>bm$.
See also the infinite
Poisson model in \refS{Sconv}, where this problem disappears.)
This model was first considered in the analysis of hashing, 
for a somewhat different problem,
by Fagin {\it{et al}} \cite{Fagin} in 1979. 

The results obtained
under the Poisson filling model can be interpreted as an
approximation
of those one would obtain under the exact filling model when
$n=b\alpha m$.
For most situations and applications,
this approximation is satisfactory for large $m$ and $n$. 
(However, it cannot be used when we have
a full, or almost full, table, so $\alpha$ is very close to 1.)
We give a detailed statement of one general limit theorem (\refT{Tconv})
later, but give a brief sketch here. We begin with some algebraic identities.

Consider a hash table of size $m$ with $n$ keys, in which
conflicts are
resolved by open addressing using some heuristic.
Let the variable $P$ be a non-negative integer-valued property
of the table (e.g. the block length of a random cluster), or
of a random key of the table (e.g., the cost of a successful search),
and let $f_{m,n}$ be the result of applying a linear
operator $f$
(e.g., an expected value) to the probability generating function
of $P$ for the exact filling model.
Let
${\bf{P}}_m[f_{m,n};b\alpha]$ be
the result of applying the same linear operator $f$ to the
probability generating function of $P$ computed using 
the Poisson filling model. 
Then
\begin{align}
{\bf{P}}_m[f_{m,n};b\alpha]
&= \sum_{n \geq 0} \Pr\left[N = n \right] f_{m,n}
= e^{-bm\alpha}\sum_{n\geq 0} \frac{(bm\alpha)^n}{n!} f_{m,n}. \label{pt}
\end{align}
In this context, 
${\bf{P}}_m[f_{m,n};b\alpha]$ is called
the {\it{Poisson transform}} 
of $f_{m,n}$.  

In particular,
let $P_{m,n}(q)$ be the generating function
of a variable $P$ in a hash table of size $m$ with $n$ keys, and
let $p_{m,n}(q)=P_{m,n}(q)/m^n$ be the corresponding probability generating
  function of $P$ regarded as a random variable (with all $m^n$ tables
  equally likely). 
Define the trivariate generating function
\begin{equation*}
P(z,w,q) := 
\sum_{m\geq 0} w^{bm} \sum_{n\geq 0} {P_{m,n}(q)} \frac{z^n}{n!}
= 
\sum_{m\geq 0} w^{bm} \sum_{n\geq 0} p_{m,n}(q) \frac{(mz)^n}{n!}.
\end{equation*}
Then, for a fixed $0 \leq \alpha < 1$, using \eqref{pt},
\begin{equation}\label{laPoiss}
  \begin{split}
P(b\alpha,y^{1/b}e^{-\alpha},q) 
&= \sum_{m\geq 0} y^{m} \left(
e^{-bm\alpha} \sum_{n\geq 0} {p_{m,n}(q)}
\frac{(bm\alpha)^n}{n!}\right)
\\
&= \sum_{m\geq 0} y^{m} 
{\bf P}_m\left[{p_{m,n}(q)};b\alpha\right]. 	
  \end{split}
\end{equation}
In other words, 
$P(b\alpha,y^{1/b}e^{-\alpha},q) $ is the generating function of
$\ppm[p_{m,n}(q);b\ga]$.

Asymptotic results for the probability generating function in the Poisson
model,
can thus be found
by singularity analysis 
\cite{FlaOdl90,FlaSed09}
from $P(b\alpha,y^{1/b}e^{-\alpha},q)$. In
our problems, the dominant singularity is a simple pole at $y=1$.
Furthermore, asymptotic results for the exact model can be found by
de-Poissonization; we give a probabilistic proof of one such result 
in \refT{Tconv}\ref{tconvmn}.

The same formulas without the variable $q$ hold
if $P_{m,n}$ is the number of hash tables with a certain Boolean property, and
$p_{m,n}=P_{m,n}/m^n$ is the corresponding probability that a random hash table
has this property. 
In this case, the trivariate generating function \eqref{laPoiss} is replaced
by the bivariate
\begin{equation}\label{laPoiss2}
  \begin{split}
P(b\alpha,y^{1/b}e^{-\alpha}) 
&= \sum_{m\geq 0} y^{m} 
e^{-bm\alpha} \sum_{n\geq 0} {p_{m,n}}
\frac{(bm\alpha)^n}{n!}
.
  \end{split}
\end{equation}

For example, from equation (\ref{laLambda0}),
$\Lambda_0(b\alpha,y^{1/b}e^{-\alpha})$ has a dominant simple pole at $y=1$ 
originated by the factor with $j=0$,
since $T(\alpha e^{-\alpha})/\alpha = 1$. More precisely,
using \eqref{T'},
\begin{equation*}
\frac{1}{1-\frac{T\left(y^{\xfrac{1}{b}} \alpha
e^{-\alpha}\right)}{\alpha}} \quad^{~\displaystyle{\sim}}_{y \to 1}
~\frac{b(1-\alpha)}{1-y}.
\end{equation*} 
 and  the residue of 
$\Lambda_0(b\alpha,y^{1/b}e^{-\alpha})$ at $y=1$ is
\begin{equation}\label{T00}
 T_0(b\alpha)
:= \frac{b(1-\alpha)}{\prod_{j=1}^{b-1}
\left(1 - \frac{T(\go^j\alpha e^{-\alpha})}{\alpha} \right)}.  
\end{equation}
Then the following result presented in
(\cite{Bigbuck,Viola}) is rederived, 
see also \refT{Tconv}\ref{tconvPm}\ref{tconvres}: 
\begin{equation}\label{T0}
\lim_{m\to\infty}   
{\mathbf{P}}_m[Q_{m,n,0}/m^n;b\alpha] 
= T_0(b\alpha)
.  
\end{equation}
Moreover, from \eqref{Gen-mn1} we similarly obtain, for a property $P$,
\begin{align}
\lim_{m\to \infty} \mathbf{P}_m[p_{m,n}(q)/m^n;b\alpha]  &= T_0(b\alpha)
\hat{N}_0(b\alpha,e^{-\alpha},q).  \label{Ptrans}
\end{align}
By de-Poissonization, we further find asymptotics for $Q_{m,n,0}$ and 
(for suitable properties $P$) $p_{m,n}(q)$, 
see \refT{Tconv}\ref{tconvmn}.

Even though by equations \eqref{laPoiss} and \eqref{Ptrans}
the results in the exact model can be directly
translated into their counterpart in  the Poisson model, in this paper we
present derivations for both approaches. We feel this is very important to
present a unified point of view of the problem. Furthermore, the
deriviations made in each model are also unified. For the exact
model, a direct approach using the symbolic method and the
powerful tools from Analytic Combinatorics \cite{FlaSed09} is
presented, while for the Poisson model, a unified approach using
random walks is used. Presenting both related but independently
derived analyses, helps in the better understanding of the
combinatorial, analytic and probabilistic properties of linear
probing.

\section{A $q$-calculus to specify random variables}
\label{q-calculus}

All the generating functions in this paper are exponential in $n$ and
ordinary in $m$. Moreover, the variable $q$ marks the value of
the random variable at hand.
As a consequence all the labelled constructions in \cite{FlaSed09}
and their respective translation into EGF can be used. However, to
specify the combinatorial properties related with the analysis of
linear probing hashing, new constructions have to be added.
These ideas have been presented by Flajolet in \cite{Flajolet:slides},
but they do not seem to have been published in the context of hashing.
As a consequence, we briefly summarize them in this section.

\begin{figure}
\begin{center}
\begin{tabular}{|l|l|}
\hline
\underline{Marking a position} $\mapsto$ $\partial_w$ &
$C_{bn+d}=(n+1)A_{bn+d}$\\
$\mathcal{C} = \Pos(\mathcal{A}$) & $C(z,w)=\frac{w}{b}\frac{\partial}{\partial
w}(A(z,w))$  \\
\hline
\underline{Adding a key} $\mapsto$ $\int$ &
$C_{bn+d}=A_{{bn+d}-1}$  \\
$\mathcal{C} = \Add(\mathcal{A}$)  & $C(z,w)=\int_0^z A(u,w) du$  \\
\hline
\underline{Bucketing} $\mapsto$ $\exp$ & $C_{m,n} = 
\delta(m,1)$ \\
 $\mathcal{C} = \Bucket(\mathcal{Z}$)  & $C(z,w) = w^b\exp(z)$  \\
\hline
\underline{Marking a key}  $\mapsto$ $\partial_z$ & $C_{m,n} = 
n A_{m,n}$ \\
$\mathcal{C} = \Mark(\mathcal{A}$)  &
$C(z,w)=z\frac{\partial}{\partial z}A(z,w)$  \\
\hline
\end{tabular}
\caption{Constructions used in hashing}
\label{newcons}
\end{center}
\end{figure}

We first concentrate in counting generating functions ($q=1$),
and we generalize these constructions in Section \ref{qanalogues} for distributional
results where their $q$-analogue counterparts are presented.
\refFig{newcons} presents a list of combinatorial constructions used
in hashing and their corresponding translation into EGF, where
$\mathcal{Z}$ is an atomic class comprising a single key of size 1.
We motivate these constructions that are specifically defined for
this analysis.

Because of the sequence interpretation of linear probing,
insertions are done in an almost full table with $n+1$ buckets
of size $b$ (total capacity $b(n+1)$) and $bn+d$ keys.
As it is seen in \eqref{Gen-mn2} in $\hat{N}_0(z,w,q)$ the
variable $w$ marks the total capacity, while $z$ marks the
number of keys. So, in this context all the generating
functions to be used for the first two constructions have the form
\begin{equation*}
A(z,w) = 
\sum_{n\geq 0} w^{b(n+1)}.
\sum_{d=0}^{b-1}
A_{bn+d}~\frac{z^{bn+d}}{(bn+d)!}. 
\end{equation*}

To help in fixing ideas, we may think (as an analogue with
equation  \eqref{Gen-mn2}) that
$A_{bn+d} = F_{bn+d}~p_{bn+d}(1)$.

To insert a key, a position is first chosen, and then the
key is inserted. Both actions can be formally specified
using the symbolic method.

\begin{itemize}
\item
\underline{Marking a position}.

Given an almost full table with $n+1$ buckets (the last one 
with $d$ keys, $0 \leq d < b$), a new key can hash in
$n+1$ different places.
As a consequence, we have the counting relation
$C_{bn+d}=(n+1)A_{bn+d}$, leading to the $\partial_w$ relation in their
respective multivariate generating functions. 

Notice that the key has not been inserted yet, only a
position is chosen, and so the total number of keys in the
table does not change.
\item
\underline{Adding a key}. 

Once a position is chosen, then a key is added.
In this setting $C_{bn+d} = A_{bn+d-1}$, leading to the $\int$
relation. No further calculations are needed, since it only
specifies that the number of keys has been increased by 1
(all the other calculations were done when marking the
position).
\end{itemize}

Other constructions are also useful to analyze linear probing
hashing. 

\begin{itemize}

\item
\underline{Bucketing}. 

When considering a single bucket, it has capacity $b$, giving the factor
$w^b$ in the generating function. Moreover, for each $n$, there
is only one way to hash $n$ keys in this bucket (all these
key have this hash position). Since the generating functions
are exponential in $n$, this gives the factor $e^z$ (reflecting
the fact that $C_{m,n} = 1$ for $n\geq 0$ and $m=1$ since there
is only one bucket). In this context $\delta$ is the Dirac's
$\delta$ function ($\delta(a,b)=1$ if $a=b$ and $0$ otherwise).

This construction is used for general hash tables with
$m$ buckets and $n$ keys (not necessarily almost full)
in Sections \ref{SH} and \ref{SRH}.

\item

\underline{Marking a key}.

In some cases, we need to choose a key among $n$
keys that hash to some specific location.
The $q$-analogue of this construction is
used in Section \ref{SRH}. 
The counting relation $C_{m,n} =
nA_{m,n}$ leads to the $\partial_z$ relation in their
respective multivariate generating functions.

\end{itemize}

\subsection{The $q$-calculus.}\label{qanalogues}

In an almost full table with $n+1$ buckets, there are $n+1$
places where a new key can hash. However, if a
distributional analysis is done, its displacement depends on the
place where it has hashed: it is $i$ if the
key hashes to bucket $n+1-i$ with $1 \leq i \leq
n+1$. 
In this context, to keep
track of the distribution of random variables (e.g.\ the displacement
of
a new inserted key), we need generalizations of the constructions above
that belong to the area
of $q$-calculus (equations \eqref{qq2} and \eqref{qq3}).

The same happens to the $\Mark$ construction We rank the $n$ keys 
by the labels $0,\dots,n-1$ (in arbitrary order),
and give the key with label $k$ the weight $q^k$
 (equations \eqref{qq4} and \eqref{qq5}).

We present below some of these translations, where the variable
$q$ marks
the value of the random variable at hand.
Moments result from using the operators $\partial_q$ (differentiation
w.r.t.\ $q$) and $\Uq$ (setting $q=1$).
\begin{align}
n &\mapsto [n]:=1+q+q^2+\ldots+q^{n-1} = \frac{1-q^n}{1-q}, \label{qq1}
\\
\sum_{n\geq 0} (n+1) w^{b(n+1)}&\cdot  
 \sum_{d=0}^{b-1} A_{bn+d}~\frac{z^{bn+d}}{(bn+d)!} \nonumber \\
&\mapsto 
\sum_{n\geq 0} [n+1] w^{b(n+1)}\cdot  \sum_{d=0}^{b-1}
A_{bn+d}(q)~\frac{z^{bn+d}}{(bn+d)!}, \label{qq2} 
\\
\frac{w}{b}\frac{\partial}{\partial w}A(z,w) &\mapsto 
\HH[A(z,w)] :=
\frac{A(z,w)-A(z,wq^{\frac{1}{b}})}{1-q}, \label{qq3} 
\\
\sum_{m\geq 0} w^{m}\cdot  
 \sum_{n\geq 0}n A_{m,n}~\frac{z^{n}}{n!}
&\mapsto 
\sum_{m\geq 0} w^{m}\cdot  \sum_{n\geq 0}
[n] 
A_{m,n}(q)~\frac{z^{n}}{n!}, \label{qq4} 
\\
z \frac{\partial}{\partial z}A(z,w) &\mapsto 
\hat{\HH}[A(z,w)] :=
\frac{A(z,w)-A(qz,w)}{1-q}. \label{qq5} 
\end{align}

\section{Probabilistic method: finite and infinite hash tables}\label{Shash}

In general, consider a hash table, with locations (``buckets'') each
having capacity $b$; we suppose that the buckets are labelled
by $i\in\HT$, for a suitable index set $\HT$.
Let for each bucket $i\in\HT$,
$X_i$ be the number of keys that have hash
address $i$, and thus first try bucket $i$. 
We are mainly interested in the case when the $X_i$ are random, but in this
section $X_i$ can be any (deterministic or random) non-negative integers; we
consider the random case further in the next section.

Moreover, let
$H_i$ be the total number of keys that try bucket $i$ and 
let $Q_i$ be 
the \emph{overflow} from bucket $i$, i.e.,
the number of keys that try bucket $i$ but fail
to find room and thus are transferred to the next bucket.
We call the sequence $H_i$, $i\in \HT$, the \emph{profile} of the hash
table. (We will see that many quantities of interest are determine by the
profile.) 
These definitions yield the equations
\begin{align}
  H_i&=X_i+Q_{i-1},
& 
\label{h}
\\
 Q_i&=(H_i-b)_+. \label{q}
\end{align}
The final number of keys stored in bucket $i$ is
$  Y_i:=H_i\bmin b:=\min(H_i,b)$; 
in particular, the bucket is full if and
only if $H_i\ge b$.

\begin{remark}
The equations \eqrefhq{} are the same as in queuing theory, 
with $Q_i$ the queuing process generated by the random variables $X_i-b$,
see
\cite[Section VI.9, in particular Example (c)]{FellerII}.
\end{remark}

Standard hashing is when
the index set  $\HT$ is the cyclic group $\bbZ_m$.
Another standard case is the {parking problem},
where $\HT$ is an interval $\set{1,\dots,m}$ for some integer $m$; in this
case the $Q_m$ keys that try the last bucket but fail to find room there
are lost (overflow), and \eqrefhq{} use the initial value $Q_0:=0$.

In the probabilistic analysis, we will mainly study infinite hash tables,
either one-sided 
with $\HT=\bbN:=\set{1,2,3,\dots}$, or two-sided with $\HT=\bbZ$; as we
shall see, these occur naturally as limits of finite hash tables.
In the one-sided case, we again define $Q_0:=0$, and then, given
$(X_i)_1^\infty$,  $H_i$ and $Q_i$
are uniquely determined recursively for all $i\ge1$ by 
\eqrefhq{}. 
In the doubly-infinite case, it is not obvious that the equations
\eqrefhq{} 
really have a solution; we return to this question 
in \refL{LH} below.

In the case $\HT=\bbZ_m$, we allow (with a minor abuse of notation) also
the index $i$ in these quantities
to be an arbitrary integer with the obvious interpretation; 
then $X_i$, $H_i$ and so on are periodic sequences defined for
$i\in\bbZ$.

We can express $H_i$ and $Q_i$ in $X_i$ by the following lemma,
which generalizes (and extends to infinite hashing) the case $b=1$ 
treated in \cite[Exercise 6.4-32]{KnuthIII}, \cite[Proposition 5.3]{SJ129},
\cite[Lemma 2.1]{SJ133}.

\begin{lemma}\label{LH}
  Let $X_i$, $i\in\HT$, be given non-negative integers.
  \begin{romenumerate}[5pt]
  \item \label{LH+}
If\/  $\HT=\set{1,\dots,m}$ or  $\bbN$,
then the equations \eqrefhq{}, 
for  $i\in\HT$,
have a unique solution given by, considering $j\ge0$,
\begin{align}
  H_i &=  \max _{j< i} \sum_{k=j+1}^i (X_k-b)+b ,  \label{h1}
\\
  Q_i &=  \max _{j\le i} \sum_{k=j+1}^i (X_k-b) .\label{q1}
\end{align}

\item \label{LHm}
If\/ $\HT=\bbZ_m$, and moreover $n:=\sum_1^m X_i<bm$, 
then the equations \eqrefhq{},
for $i\in\HT$,
have a unique solution given by \eqref{h1}--\eqref{q1}, 
now with  $j\in\bbZ$.
Furthermore, there exists $i_0\in\HT$ such that $H_{i_0}<b$
and thus $Q_{i_0}=0$. 

 \item \label{LHz}
If\/ $\HT=\bbZ$, assume that
\begin{equation}\label{infty}
 \sum_{i=0}^{N-1} (b-X_{-i})\to\infty \qquad \text{as \Ntoo}.  
\end{equation}
Then the equations \eqrefhq{}, 
for $i\in\HT$,
have a solution given by \eqref{h1}--\eqref{q1}, 
with $j\in\bbZ$. This is the minimal solution to \eqrefhq{}, and,
furthermore, for each $i\in\HT$ there exists $i_0<i$ such that $H_{i_0}<b$
and thus $Q_{i_0}=0$. Conversely, 
this is the only solution such that for every $i$ there exists $i_0<i$ with
$Q_{i_0}=0$. 
  \end{romenumerate}
\end{lemma}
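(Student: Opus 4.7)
The plan is to reduce the system \eqrefhq{} to the scalar \emph{Lindley recursion}
\[
Q_i = (Q_{i-1} + X_i - b)_+,
\]
obtained by substituting \eqref{h} into \eqref{q}. Writing $S_j := \sum_{k=1}^j (X_k-b)$ (with the natural extension to $j \le 0$ or $j \in \bbZ$), the partial sums in \eqref{h1}--\eqref{q1} are just $S_i - S_j$, so the target formulas read $Q_i = S_i - \min_{j \le i} S_j$ and $H_i = b + S_i - \min_{j < i} S_j$, with $j$ ranging over the appropriate index set.

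For part \ref{LH+}, with $Q_0 := 0$ the Lindley recursion uniquely determines $(Q_i)$ by induction on $i$. Iterating the two inequalities $Q_i \ge Q_{i-1} + X_i - b$ and $Q_i \ge 0$ yields the lower bound $Q_i \ge \max_{0 \le j \le i}(S_i - S_j)$; the reverse inequality is an easy induction showing that the right-hand side itself satisfies the Lindley recursion. Substituting the resulting formula into \eqref{h} and absorbing $X_i$ inside the maximum produces \eqref{h1}.

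For part \ref{LHz}, hypothesis \eqref{infty} says precisely that $S_j \to +\infty$ as $j \to -\infty$, so the minimum $\min_{j \le i} S_j$ is attained at a finite $j$ and \eqref{h1}--\eqref{q1} define finite non-negative integers $H_i, Q_i$; a direct algebraic check verifies \eqrefhq{}. For the existence of $i_0 < i$ with $Q_{i_0} = 0$, pick any $i_0 \le i-1$ at which $S$ attains its minimum on $(-\infty, i-1]$; then automatically $S_{i_0} = \min_{j \le i_0} S_j$ and hence $Q_{i_0} = 0$. Minimality follows by iterating $\tilde Q_i \ge \tilde Q_{i-1} + X_i - b$ together with $\tilde Q_i \ge 0$ to get $\tilde Q_i \ge S_i - S_j$ for all $j \le i$, yielding $\tilde Q_i \ge Q_i$. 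For uniqueness under the extra reset-point hypothesis, one must bootstrap: applying the hypothesis repeatedly produces a descending chain $i > i_0 > i_1 > \dotsb$ with $\tilde Q_{i_k} = 0$, and forward iteration of the Lindley recursion from $i_k$ gives $\tilde Q_i = \max_{i_k \le j \le i}(S_i - S_j)$, so that sending $k \to \infty$ recovers $\tilde Q_i = Q_i$.

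Part \ref{LHm} is then obtained by periodic extension. Extend $X_i$ to an $m$-periodic sequence on $\bbZ$; since $\sum_1^m X_i = n < bm$, the tail sum $\sum_{k=-N+1}^0(b - X_k)$ grows linearly at rate $b - n/m > 0$, so \eqref{infty} holds and part \ref{LHz} supplies a solution on $\bbZ$. The shifted pair $(H_{i+m}, Q_{i+m})$ is another such solution inheriting the same reset-point property (by periodicity of $X$ together with the iterated existence of resets), so uniqueness in \ref{LHz} forces $m$-periodicity, whence the solution descends to $\HT = \bbZ_m$ and formulas \eqref{h1}--\eqref{q1} with $j \in \bbZ$ are inherited. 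Existence of $i_0$ with $Q_{i_0} = 0$ can also be seen directly by summing the identity $Q_i - Q_{i-1} = X_i - Y_i$ (where $Y_i := H_i \bmin b$) around one period, giving $\sum_i Y_i = n < bm$, and hence $Y_{i_0} < b$ (equivalently $H_{i_0} < b$) for some $i_0$. The main technical obstacle is the bootstrap in \ref{LHz}: upgrading the single-step hypothesis "there is a reset point $i_0 < i$" into an infinite descending chain of reset points is what licenses pushing the initial condition of the Lindley recursion to $-\infty$ and identifies any solution with the explicit formula.
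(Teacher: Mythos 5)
Your argument is correct and follows essentially the same route as the paper: reduce \eqrefhq{} to the Lindley recursion $Q_i=(Q_{i-1}+X_i-b)_+$, verify the maximum formula by induction in \ref{LH+}, and in \ref{LHz} combine minimality (iterating the two inequalities) with forward iteration from a reset point. Two small remarks. First, in \ref{LHz} the bootstrap chain $i>i_0>i_1>\dotsb$ is more than you need: a \emph{single} reset point $i_0<i$ for an arbitrary solution $\tilde Q$ already yields $\tilde Q_i=\max_{i_0\le j\le i}(S_i-S_j)\le Q_i$, which together with minimality $\tilde Q_i\ge Q_i$ finishes the proof; the paper instead chooses the \emph{largest} such $i_0$, so the recursion is linear on $(i_0,i]$ and $\tilde Q_i$ is directly one of the terms in the maximum defining $Q_i$. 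Second, in \ref{LHm} you establish existence, $m$-periodicity of the explicit formula, and a reset point, but do not explicitly close the uniqueness loop: to show an \emph{arbitrary} solution on $\bbZ_m$ equals the formula, apply your period-sum identity $\sum_i Y_i=n<bm$ to that solution, produce a reset, and invoke periodicity plus the converse in \ref{LHz}. The paper does the same via the contrapositive ($H_i\ge b$ for all $i$ would force $Q_m<Q_0$, a contradiction). These are cosmetic differences; the mathematics is sound.
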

In the sequel, we will always use this solution of 
\eqrefhq{} 
for
hashing on $\bbZ$ (assuming that \eqref{infty} holds); we can regard this as
a definition of hashing on $\bbZ$.

Before giving the proof, we introduce the partial sums $S_k$ of $X_i$; 
these are defined by $S_0=0$ and $S_k-S_{k-1}=X_k$,
where for the four cases above we let $S_k$ be defined for
$k\in\set{0,\dots,m}$ when $\HT=\set{1,\dots,m}$,
$k\ge0$ when $\HT=\bbN$, 
$k\in\bbZ$ when $\HT=\bbZ_m$ or $\HT=\bbZ$.
Explicitly, for such $k$, (with an empty sum defined as 0)
\begin{align}
  S_k:=
  \begin{cases}
\sum_{i=1}^k X_i,&  k\ge0,
\\
-\sum_{i=k+1}^0 X_i,&  k<0.	
  \end{cases}
\end{align}
Note that in a finite hash table with $\HT=\bbZ_m$ or $\set{1,\dots,m}$,
the total number $n$ of keys is $S_m$.
(For $\HT=\bbZ_m$, note also that
$S_{k+m}=S_k+n$ for all $k\in\bbZ$, so $S_k$ is not periodic.)

In terms of $S_k$, \eqref{h1}--\eqref{q1} can be written
\begin{align}
  H_i 
&= \max_{j< i} \bigpar{S_i-S_{j}-b(i-j)+b}\label{h2}
\\
&= S_i-bi - \min_{j< i} \xpar{S_j-bj}+b,\label{h3}
\\
  Q_i 
&= \max_{j\le i} \bigpar{S_i-S_{j}-b(i-j)}\label{q2}
\\
&= S_i-bi - \min_{j\le i} \xpar{S_j-bj}.\label{q3}
\end{align}

\begin{remark}
  In the doubly-infinte Poisson model discussed further in \refS{Sconv},
the $X_i$ are \iid{} with $X_i\sim\Po(b\ga)$.
Thus $S_i-bi$ is a random walk with negative drift $\E X_i-b=-b(1-\ga)$.
We can interpret \eqref{h3} and \eqref{q3} as saying that $H_i$ and $Q_i$
are two variants of the corresponding reflected random walk, \ie, this
random walk forced to stay non-negative.
\end{remark}

\begin{proof}[of \refL{LH}]
\pfitemref{LH+}
Here the maxima in \eqref{h1}--\eqref{q1}
are over finite sets (and thus well-defined),
since we consider only  $j\ge0$.
It is clear by induction that the equations
\eqrefhq{} have a unique solution
with $Q_0=0$. Furthermore, \eqrefhq{}  yield
\begin{equation}\label{qi}
  Q_i=(Q_{i-1}+X_i-b)_+ = \max(Q_{i-1}+X_i-b,0)
\end{equation}
and \eqref{q1} follows by induction for all $i\ge0$.
(Note that the term $j=i$ in \eqref{q1}
is $\sum_{i+1}^i(X_k-b)=0$, by definition of an empty sum.) 
Then \eqref{h1} follows by \eqref{h}.

\pfitemref{LHz}
The assumption \eqref{infty} implies that the maxima in
\eqref{h1}--\eqref{q1}  are well defined, since the expressions tend to
$-\infty$ as $j\to-\infty$.
If we define $H_i$ and $Q_i$ by these equations, then 
\eqref{h1}--\eqref{q1} imply $H_i=Q_{i-1}+X_i$, \ie, \eqref{h}.
Furthermore,
\eqref{q1} implies \eqref{qi}, and thus also \eqref{q}. 
Hence $H_i$ and $Q_i$ solve \eqrefhq.
We denote temporarily this solution by $H_i^*$ and $Q_i^*$.

Suppose that $H_i,Q_i$ is any other solution of \eqrefhq{}.
Then
\begin{equation}
  Q_i =(H_i-b)_+ \ge H_i-b =Q_{i-1}+X_i-b
\end{equation}
and thus by induction, for any $j\le i$,
\begin{equation}
  Q_i \ge Q_{j}+ \sum_{k=j+1}^i (X_k-b)
\ge \sum_{k=j+1}^i (X_k-b).
\end{equation}
Taking the  maximum over all $j\le i$ we find $Q_i\ge Q_i^*$,
and thus by \eqref{h} also $H_i\ge H_i^*$.
Hence, $H_i^*,Q_i^*$ is the minimal solution to \eqrefhq{}.

Furthermore, since $S_j-bj\to\infty$ as $j\to-\infty$ by \eqref{infty},
for any $i$ there exists $i_0<i$ such that 
$\min_{j<i_0}(S_j-bj) > S_{i_0}-bi_0$, and hence, by \eqref{h3},
$H^*_{i_0}<b$, which implies $Q^*_{i_0}=0$ by \eqref{q}.

Conversely, if $H_i,Q_i$ is any solution and $Q_{i_0}=0$ for some $i_0\le i$,
let $i_0$ be the largest such index. Then $Q_j>0$ for $i_0<j\le i$, 
and thus by \eqref{q} and \eqref{h},
$Q_j=H_j-b=Q_{j-1}+X_j-b$. Consequently,
\begin{equation}
  Q_i=Q_{i_0}+\sum_{j=i_0+1}^i(X_j-b)
= \sum_{j=i_0+1}^i(X_j-b) \le Q_i^*.
\end{equation}
On the other hand, we have shown that $Q_i\ge Q_i^*$ for any solution. Hence
$Q_i=Q_i^*$. If this holds for all $i$, then also $H_i=H_i^*$ by \eqref{h}.

\pfitemref{LHm}
Solutions of \eqrefhq{} with $i\in\bbZ_m$ can be regarded as periodic
solutions of \eqrefhq{} with $i\in\bbZ$ (with period $m$), with the same
$X_i$. 
The assumption 
$S_m<bm$ implies \eqref{infty}, as is easily seen. (If $N=k+\ell m$, then
$bN+S_{-N}=bk+S_{-k}+\ell (bm-S_m)$.)
Hence we can use \ref{LHz} (for hashing on $\bbZ$) and see that
\eqref{h1}--\eqref{q1} yield a periodic solution to \eqrefhq{}.

Conversely,
suppose that $H_i,Q_i$ is a periodic solution to \eqrefhq{}.
Suppose first that
$H_i\ge b$ for all $i$.
Then, by \eqref{q} and \eqref{h}, $Q_i=H_i-b=Q_{i-1}+X_i-b$, and by induction
\begin{equation*}
Q_m=Q_0+\sum_{j=1}^m(X_j-b)=Q_0+S_m-bm <Q_0,   
\end{equation*}
which contradicts the fact
that $Q_m=Q_0$. 
(This just says that with fewer than $bm$ keys, we cannot fill every
bucket.) 
Consequently, every periodic solution must have $H_i<b$ and $Q_i=0$
for some $i$, and thus by \ref{LHz} the periodic solution is unique.
\end{proof}

\begin{remark}
  In case \ref{LHz}, there exists other solutions, with $Q_i>0$ for all
  $i<-M$ for some $M$. These solutions have 
$H_i\to\infty$ and $Q_i\to\infty$ as $i\to-\infty$.
They correspond to hashing with an
  infinite number of keys entering from $-\infty$, in addition to the
  $X_i$ keys at each finite $i$; these solutions are not interesting for
  our purpose.

In case \ref{LHm}, we have assumed $n=S_m<bm$. There is obviously no solution
if $n>bm$, since then the $n$ keys cannot fit in the $m$ buckets.
If $n=S_m=bm$, so the $n$ keys fit precisely, it is easy to see that
\eqref{h1}--\eqref{q1} still yield a solution; this is the
unique solution with $Q_j=0$ for some $j$. 
(We have $H_i\ge b$, so $Q_i=H_i-b$ and $Y_i=b$ for all $i$.)
There are also other solutions,
giving by adding a positive constant to all $H_i$ and $Q_i$; these
correspond to hashing with some additional keys eternally 
circling around the completely filled hash table, searching in vain for a place;
again these solutions are not interesting.
\end{remark}

\section{Convergence to an infinite hash table}\label{Sconv}

In the exact model, we consider hashing on $Z_m$
with $n$ keys having independent uniformly random hash addresses; thus
$X_1,\dots,X_m$ have a multinomial distribution
with parameters $n$ and $(1/m,\allowbreak\dots,1/m)$. 
We denote these $X_i$ by $X\mni$, and
denote the profile of the resulting random 
hash table by $H\mni$, where as above $i\in Z_m$ but we also can allow $i\in
\bbZ$ in the obvious way. 

We consider a limit with $m,n\to\infty$ and $n/bm\to\ga\in(0,1)$.
The appropriate limit object turns out to be 
an infinite hash table on $\bbZ$ with $X_i=X\gai$ that are 
independent and identically distributed (\iid{}) with the
Poisson distribution $X_i\sim\Po(\ga b)$; this is an infinite version of 
the Poisson model defined in \refS{SPoissonTransform}.
Note that $\E X_i=\ga b<b$, so $\E(b-X_i)>0$ and
\eqref{infty} holds almost 
surely by the law of large numbers; hence this infinite hash table is
well-defined almost surely (a.s.).  
We denote the profile of this hash table by
$H\gai$.

\begin{remark}
We will use subscripts $m,n$ and $\ga$ in the same way for other
random variables too, with $m,n$ signifying the exact model and $\ga$ the
infinite Poisson model. However, we often omit the $\ga$ when it is clear
from the context.
\end{remark}

We claim that the profile $(H\mni)\ioooo$,
regarded as a random element of the product space $\bbZ^\bbZ$, 
converges in distribution to the profile  $(H\gai)\ioooo$.
By the definition of the product topology,
this is equivalent to convergence in distribution of any finite vector
$(H\mni)_{-M}^N$ to $(H\gai)_{-M}^N$.

\begin{lemma}\label{Llim}
  Let $m,n\to\infty$ with $n/bm\to\ga$ for some $\ga$ with $0\le\ga<1$.
Then 
$(H\mni)\ioooo \dto (H\gai)\ioooo$.
\end{lemma}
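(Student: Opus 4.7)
My plan is to reduce this statement to convergence of finite-dimensional marginals, which is enough because weak convergence on the product space $\bbZ^\bbZ$ with the product topology equals convergence of every finite-dimensional projection. Fixing arbitrary $M, N \ge 0$, I therefore want to show $(H\mni)_{i=-M}^N \dto (H\gai)_{i=-M}^N$. Formula \eqref{h3} can be rewritten as
\begin{equation*}
H_i = b - \min_{k \ge 1} W_k^{(i)}, \qquad W_k^{(i)} := \sum_{\ell = i-k+1}^i (b - X_\ell),
\end{equation*}
so that $H_i$ is a deterministic measurable function of $(X_\ell)_{\ell \le i}$, obtained by looking back from $i$ along the $k$-walk $W_\cdot^{(i)}$ until one reaches its minimum. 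Everything then reduces to matching the joint law of the relevant $X_j$'s and controlling how far back the minimum is taken.

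The main step, and the principal technical obstacle, is a uniform truncation of this look-back. I will show that for every $\eps > 0$ there exists $K = K(\eps, M, N, \alpha)$ such that, with probability at least $1 - \eps$ in both models and for all sufficiently large $(m,n)$, every one of the minima $\min_{k \ge 1} W_k^{(i)}$ for $i \in [-M, N]$ is already attained at some $k \in [1, K]$. In the infinite Poisson model the walks $W_\cdot^{(i)}$ have i.i.d.\ increments with strictly positive mean $b(1-\alpha)$, so by the strong law of large numbers $W_k^{(i)} \to +\infty$ a.s.\ as $k \to \infty$, and the minimum is a.s.\ attained at some finite random $k$. In the exact model, the $m$-periodicity of the $X_j$'s combined with \refL{LH}\ref{LHm} and the assumption $n < bm$ gives the identity $W_{k+m}^{(i)} = W_k^{(i)} + (bm - n)$, so the infimum automatically reduces to $\min_{1 \le k \le m}$; within that range the increments $b - X_{m,n;\ell}$ still have positive mean $b - n/m \to b(1-\alpha)$ and bounded variance, so a Kolmogorov-type maximal inequality confines the minimizer to $k \in [1, K]$ with probability $\ge 1 - \eps$ uniformly in $(m,n)$.

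The proof concludes via the continuous mapping theorem. For each fixed $K$, the map $\Phi_K : \bbZgeo^{M + K + N + 1} \to \bbZ^{M + N + 1}$ sending the window $(x_j)_{j = -M - K}^N$ to the truncated profile $\bigl(b - \min_{1 \le k \le K} \sum_{\ell = i-k+1}^i (b - x_\ell)\bigr)_{i = -M}^N$ is a deterministic function of integer inputs, hence continuous. By the classical Poisson limit for a fixed window of a multinomial whose total grows proportionally to the number of cells, $(X_{m,n;j})_{j = -M - K}^N \dto (X_{\alpha;j})_{j = -M - K}^N$ (i.i.d.\ $\Po(b\alpha)$) as $m, n \to \infty$ with $n/bm \to \alpha$; applying $\Phi_K$ transfers this to convergence of the truncated profiles. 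The truncation step then guarantees that the truncated profile agrees with the true $(H_i)_{-M}^N$ on an event of probability $\ge 1 - \eps$ in each model, and letting $\eps \downarrow 0$ yields $(H\mni)_{-M}^N \dto (H\gai)_{-M}^N$, completing the proof.
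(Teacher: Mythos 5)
Your proposal follows essentially the same route as the paper's proof: reduce to finite-dimensional marginals, use the representation of $H_i$ as the running maximum (here phrased as a running minimum) of the backward random walk, truncate the look-back window to some finite $K$, control the truncation error via a concentration bound that exploits positive drift and is uniform in $(m,n)$, and pass the convergence of the multinomial $X$-window to its Poisson limit through the truncated (continuous) map before letting $\eps\downarrow0$. The paper carries this out by first establishing $Q_{m,n;0}\dto Q_{\ga;0}$ using a Chernoff bound for the binomial partial sums, then extending to $(Q_0,X_1,\dots,X_N)$ and recovering $(H_1,\dots,H_N)$ from the recursion \eqrefhq{}, whereas you work with $H_i$ directly and invoke the continuous mapping theorem; the observation $W_{k+m}^{(i)}=W_k^{(i)}+(bm-n)$ plays the same role as the paper's remark that the maximum in \eqref{q1} is attained at some $j>-m$. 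The one place where the paper is more explicit than your sketch is the uniform tail bound: in the exact model the $X_{m,n;\ell}$ over a period are dependent (multinomial), so one cannot quote a Kolmogorov maximal inequality for independent summands directly; the paper instead uses the fact that each partial sum $\sum_{\ell=1-J}^{0}X_{m,n;\ell}$ is $\Bin(n,J/m)$ and applies a Chernoff bound term by term, which gives the required uniform geometric decay. Your argument is correct in spirit but would need this detail filled in to be complete.
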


\begin{proof}
We note first that each $X_{m,n;i}$ has a binomial distribution,
$X_{m,n;i}\sim\Bin(n,1/m)$. 
Since $1/m\to0$ and $n/m\to b\ga$, 
it is well-known that then each $X_{m,n;i}$ is asymptotically Poisson
distributed; more precisely, $X_{m,n;i}\dto \Po(b\ga)$.
Moreover, this extends to the joint distribution for any fixed number of $i$'s;
this well-known fact is easily verified by noting that for every fixed
$M\ge1$ and 
$k_1,\dots,k_M\ge0$, for $m\ge M$ and with $K:=k_1+\dots+k_M$,
\begin{equation*}
  \begin{split}
  \Pr\bigpar{X_{m,n;i}=k_i, i=1,\dots,M}
&
=\binom{n}{k_1,\dots,k_M,n-K} 
\cdot
\prod_{i=1}^M\Bigparfrac{1}{m}^{k_i}
\cdot
\Bigpar{1-\frac{M}{m}}^{n-K}
\\&
=\etto 	
\prod_{i=1}^M\frac{1}{k_i!}\Bigparfrac{n}{m}^{k_i}
\cdot
e^{-M n/m}
\\&
\to 	
\prod_{i=1}^M\frac{(b\ga)^{k_i}}{k_i!}
\cdot
e^{-M b\ga}
=
\prod_{i=1}^M\Pr\bigpar{X_{i;\ga}=k_i}.
  \end{split}
\end{equation*}
Hence, using also the translation invariance, for any $M_1,M_2\ge0$,
\begin{equation}\label{convX}
  \bigpar{X_{m,n;i}}_{i=-M_1}^{M_2}
\dto
  \bigpar{X_{\ga;i}}_{i=-M_1}^{M_2}.
\end{equation}

Next, denote the overflow $Q_i$ for the finite exact model and the infinite
Poisson model by $Q_{m,n;i}$ and $Q_{\ga;i}$, respectively.
We show first that $Q_{m,n;i}\dto Q_{\ga;i}$ for each fixed $i$. By
translation invariance, we may choose $i=0$.

Recall that by \refL{LH}, $Q_0$ is given by \eqref{q1} for both models.
We introduce truncated versions of this: For $L\ge0$, let
\begin{align*}
  Q_{m,n;0}\LL&:=\max_{-L\le j\le 0} \sum_{k=j+1}^0(X_{m,n;k}-b),
&
  Q_{\ga;0}\LL&:=\max_{-L\le j\le 0} \sum_{k=j+1}^0(X_{\ga;k}-b).
\end{align*}
Then \eqref{convX} implies
\begin{equation}\label{erika}
  Q_{m,n;0}\LL\dto Q_{\ga;0}\LL
\end{equation}
for any fixed $L$.

Almost surely \eqref{infty} holds, and thus
$Q_{\ga;0}\LL= Q_{\ga;0}$ for large $L$; hence
\begin{equation}\label{faster}
  \Pr\bigpar{Q_{\ga;0}\neq Q_{\ga;0}\LL}
\to0
\qquad\text{as $L\to\infty$}.
\end{equation}
Moreover,
since $X_{m,n;i}$ is periodic in $i$ with
period $m$, and the sum over a period is $n<bm$, the maximum in \eqref{q1}
is attained for some $j\le0$ with $j>-m$.
Hence, $Q_{m,n;0}=Q_{m,n;0}\Lm$, and furthermore,
$Q_{m,n;0}\neq Q_{m,n;0}\LL$ only if $L<m$ and
the maximum is attained for some
$j\in[-m,\dots,-L-1]$. 
Hence, for any $L\ge0$,
\begin{equation}\label{paddington}
  \Pr\bigpar{Q_{m,n;0}\neq Q_{m,n;0}\LL}
\le 
\sum_{j=-m}^{-L-1} 
  \Pr\lrpar{\sum_{k=j+1}^0(X_{m,n;k}-b)>0}.
\end{equation}
Moreover, for $j\le0$ and $J:=|j|$,
\begin{equation}
  \Pr\lrpar{\sum_{k=j+1}^0(X_{m,n;k}-b)>0}
=
  \Pr\lrpar{\sum_{k=1-J}^{0}X_{m,n;k}>Jb}.
\end{equation}
Here, for $m\ge J$, $\sum_{k=1-J}^{0}X_{m,n;k}\sim \Bin(n,J/m)$,
and a simple Chernoff bound \cite[Remark 2.5]{JLR} yields
\begin{equation}\label{magn}
    \Pr\lrpar{\sum_{k=1-J}^{0}X_{m,n;k}>Jb}
\le \exp\lrpar{-\frac{2 (Jb-Jn/m)^2}{J}}
= \exp\bigpar{-{2J (b-n/m)^2}}.
\end{equation}
By assumption, $b-n/m\to b-b\ga=(1-\ga)b>0$, and thus, 
for sufficiently large $m$,
$b-n/m>0.9(1-\ga)b$ and then
\eqref{paddington}--\eqref{magn} yield
\begin{equation}\label{sw}
  \Pr\lrpar{Q_{m,n;0}\neq Q_{m,n;0}\LL}
\le
\sum_{J=L+1}^m
\exp\bigpar{-J(1-\ga)^2b^2}
\le
\sum_{J=L+1}^\infty
\exp\bigpar{-J(1-\ga)^2b^2}.
\end{equation}

It follows from \eqref{faster} and \eqref{sw} that given any $\eps>0$, we can
find $L$ such that, for all large $m$,
$\Pr\lrpar{Q_{m,n;0}\neq Q_{m,n;0}\LL}<\eps$
and
$\Pr\lrpar{Q_{\ga;0}\neq Q_{\ga;0}\LL}<\eps$. Hence, for any $k\ge0$,
\begin{equation}
\bigabs{\Pr(Q_{m,n;0}=k)-  \Pr(Q_{\ga;0}=k)}
<
\bigabs{\Pr(Q_{m,n;0}\LL=k)-\Pr(Q_{\ga;0}\LL=k)}+2\eps,
\end{equation}
which by \eqref{erika} is $<3\eps$ if $m$ is large enough.
Thus $\Pr(Q_{m,n;0}=k)\to  \Pr(Q_{\ga;0}=k)$ for every $k$, i.e.,
\begin{equation}
  Q_{m,n;0} \dto Q_{\ga;0}.
\end{equation}

Moreover, the argument just given extends to the vector
$(Q_0,X_1,X_2\dots,X_N)$ for any 
$N\ge0$; hence also
\begin{equation}\label{emm}
  \bigpar{Q_{m,n;0},X_{m,n;1},\dots,X_{m,n;N}} 
\dto \bigpar{Q_{\ga;0},X_{\ga;1},\dots,X_{\ga;N}}.
\end{equation}
Since $H_1,\dots,H_N$ by \eqrefhq{} are determined by 
$Q_{0},X_{1},\dots,X_{N}$, \eqref{emm} implies
\begin{equation}
  \bigpar{H_{m,n;1},\dots,H_{m,n;N}} 
\dto \bigpar{H_{\ga;1},\dots,H_{\ga;N}}.
\end{equation}
By translation invariance, this yields also
$(H_{m,n;i})_{i=-M}^N\dto (H_{\ga;i})_{i=-M}^N$
 for any fixed $M$ and $N$, which completes the proof.
\end{proof}

For future use, we note also the following uniform estimates.

\begin{lemma}\label{Lexp}
Suppose that $\ga_1<1$.
Then there exists $C$ and $c>0$ such that $\E e^{c H\mni} \le C$ and
$\E e^{c Q\mni} \le C$ for all $m$ and $n$ with $n/bm\le \ga_1$.
\end{lemma}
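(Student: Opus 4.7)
The plan is to reduce both bounds to a single tail estimate for $Q_{m,n;0}$, using the elementary inequality $H_i \le Q_i + b$ (immediate from \eqrefhq{}: if $H_i \ge b$ then $Q_i = H_i - b$, while otherwise $H_i < b \le Q_i + b$) together with the shift-invariance of the exact model on $\bbZ_m$, which lets me restrict attention to $i = 0$. Once I have an exponential moment bound $\E e^{cQ_{m,n;0}} \le C$, it follows that $\E e^{cH_{m,n;i}} \le e^{cb} C$.

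For the tail of $Q_{m,n;0}$ I would follow the template of \eqref{paddington}--\eqref{sw} but sharpen the Chernoff step. By \refL{LH}\ref{LHm} and periodicity,
\begin{equation*}
Q_{m,n;0} = \max_{1 \le J \le m-1} \bigpar{Y_J - Jb}, \qquad Y_J := \sum_{k=1-J}^0 X_{m,n;k} \sim \Bin(n, J/m),
\end{equation*}
so $\E Y_J = nJ/m \le \ga_1 bJ$. The standard inequality $(1+x)^n \le e^{nx}$ gives
\begin{equation*}
\E e^{\theta Y_J} = \Bigpar{1 + \tfrac{J}{m}(e^\theta - 1)}^n \le \exp\bigpar{\ga_1 bJ(e^\theta - 1)}
\end{equation*}
for every $\theta > 0$.

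Since the function $\theta \mapsto \theta - \ga_1(e^\theta - 1)$ vanishes at $\theta = 0$ with derivative $1 - \ga_1 > 0$, I can fix $\theta_0 > 0$ and $\delta > 0$ (depending only on $\ga_1$ and $b$) such that $\theta_0 b - \ga_1 b(e^{\theta_0} - 1) \ge \delta$. A Markov bound and a union bound over $J$ then yield, for every $t \ge 0$,
\begin{equation*}
\Pr\bigpar{Q_{m,n;0} > t} \le \sum_{J=1}^\infty e^{-\theta_0(Jb+t)} \E e^{\theta_0 Y_J} \le \sum_{J=1}^\infty e^{-J\delta} e^{-\theta_0 t} = K e^{-\theta_0 t},
\end{equation*}
with $K := e^{-\delta}/(1 - e^{-\delta})$, and this bound holds uniformly in $m, n$ with $n/bm \le \ga_1$. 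Finally, for any $c \in (0, \theta_0)$ the layer-cake identity gives $\E e^{cQ_{m,n;0}} = 1 + c \int_0^\infty e^{ct} \Pr(Q_{m,n;0} > t) \dd t \le 1 + cK/(\theta_0 - c) =: C$, and hence $\E e^{cH_{m,n;i}} \le e^{cb} C$.

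The only delicate point is the uniformity in Step~2: one must check that the Chernoff estimate depends on $(m, n)$ only through the mean $nJ/m$, which is dominated by $\ga_1 bJ$, so a single choice of $\theta_0$ (depending on $\ga_1$ and $b$ alone) works for every admissible pair $(m, n)$. Once this is observed, the remaining calculations are completely routine, and the same $c$ and $C$ will serve for both $H$ and $Q$.
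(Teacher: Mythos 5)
Your proof is correct and follows essentially the same route as the paper: reduce $H$ to $Q$ via $H_i \le Q_i + b$, write $Q_{m,n;0}$ as a maximum over partial sums by \refL{LH}\ref{LHm}, apply a Chernoff-type bound to each binomial term, sum a geometric series to get an exponential tail bound, and integrate. The only (harmless) difference is that the paper invokes the explicit Hoeffding inequality from \cite[Remark 2.5]{JLR} in place of your direct MGF computation with a tuned parameter $\theta_0$.
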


\begin{proof}\newcommand\ccga{c}
We may choose $i=m$ by symmetry. 
Let $\ccga:=b(1-\ga_1)>0$, so $b-n/m\ge \ccga$.
By \refL{LH}\ref{LHm} and \eqref{q1},
and a Chernoff bound as in \eqref{magn}, 
for any $x>0$,
\begin{equation*}
  \begin{split}
\Pr(Q_{m,n;m}	\ge x)
&
\le \sum_{j=1}^{m-1} \Pr\lrpar{\sum_{j+1}^m(X_j-b)\ge x}
=\sum_{k=1}^{m-1}\Pr\lrpar{S_k-kb\ge x}
\\&
\le \sum_{k=1}^{m-1}\exp\lrpar{-\frac{2 (kb+x-kn/m)^2}{k}}
\\&
\le \sum_{k=1}^{\infty}\exp\lrpar{-2k(b-n/m)^2-4x(b-n/m)}
\\&
\le \sum_{k=1}^{\infty}\exp\lrpar{-2k\ccga^2-4\ccga x}
=\CC e^{-4\ccga x}.
  \end{split}
\end{equation*}
Hence, $\E e^{\ccga Q_{m,n;m}}\le \CC$. The result for $H_i$ follows because
$H_i\le Q_i+b$ by \eqref{q}.
\end{proof}

Many interesting properties of a hash table are determined by the profile,
and \refL{Llim} then implies limit results in many cases.
We give one explicit general theorem, which apart from applying to 
asymptotic results for several variables
also shows a connection between the combinatorial and probabilistic
approaches.

\begin{theorem}\label{Tconv}
Let  $P$ be a (possibly random) non-negative integer-valued property of a
hash table,  
and assume that (the distribution of) $P$ is determined by the
profile $H_i$ of the hash table. 
Let $0\le\ga<1$ and suppose further that almost surely the profile
$H_{\ga;i}$,
$i\in\bbZ$,
is such that there exists some $N$ such that
(the distribution of) $P$ is the same for every hash
table with a profile that equals $H_{\ga;i}$ for $|i|\le N$.

Let $P_{m,n}$ and $P_\ga$ be the random variables given by $P$ for the exact
model  with $m$ buckets and $n$ keys, and the doubly infinite Poisson model 
with $\HT=\bbZ$ and each $X_i\sim\Po(b\ga)$, respectively; 
furthermore, denote the corresponding probability generating functions by
$p_{m,n}(q)$ and $p_\ga(q)$.

\begin{romenumerate}[5pt]
\item \label{tconvmn}
If\/ $m,n\to\infty$ with $n/bm\to\ga$, then
$P_{m,n}\dto P_\ga$, and thus
$p_{m,n}(q)\to p_\ga(q)$ when $|q|\le1$.

\item \label{tconvPm}
If\/ $m\to\infty$ and $|q|\le1$, then
$\ppm[p_{m,n}(q);b\ga] \to p_{\ga}(q)$.

\item \label{tconvres}
If\/ $0\le\ga<1$ and\/ $|q|\le1$, 
and the 
generating function $P(b\ga,y^{1/b}e^{-\ga},q)$
in \eqref{laPoiss} has a simple 
pole at $y=1$ but otherwise is analytic in a disc
$|y|<r$ with radius $r>1$, then the residue at $y=1$ equals
$-p_\ga(q)$.
\end{romenumerate}
\end{theorem}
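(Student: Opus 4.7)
For part \ref{tconvmn}, I apply Lemma \ref{Llim} together with a continuous mapping argument. Let $g((H_i)) := \E[q^{P} \mid \text{profile}=(H_i)]$; this is a deterministic function of the profile by hypothesis. The additional assumption states that almost surely (under the infinite Poisson law) there exists a finite $N$ such that $g((H_i')) = g((H_{\ga;i}))$ for every sequence $(H_i')$ agreeing with $(H_{\ga;i})$ on $|i| \le N$. Since the cylinders $\{(H_i') : H_i' = H_{\ga;i} \text{ for } |i| \le N\}$ form a neighborhood base at $(H_{\ga;i})$ in the product topology on $\bbZ^{\bbZ}$, this says $g$ is almost surely continuous at the limit profile. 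Lemma \ref{Llim} and the continuous mapping theorem, together with the bound $|g| \le 1$ valid for $|q|\le 1$, give $p_{m,n}(q) = \E g((H\mni)) \to \E g((H\gai)) = p_\ga(q)$. Running the same argument with $g$ replaced by $h_k((H_i)) := \Pr(P=k \mid \text{profile}=(H_i))$ for each $k \ge 0$ yields $\Pr(P_{m,n}=k) \to \Pr(P_\ga = k)$, hence $P_{m,n} \dto P_\ga$.

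For part \ref{tconvPm}, I re-express the Poisson transform as an expectation in a \emph{finite} Poisson hash table on $\bbZ_m$ with i.i.d.\ $\Po(b\ga)$ hash counts. Conditioning a $\Po(b\ga m)$ variable on its value gives a multinomial, so the Poisson model with $N\sim\Po(b\ga m)$ keys placed uniformly at random is the same as the i.i.d.\ Poisson model for the $X_i$; hence $\ppm[p_{m,n}(q); b\ga] = \E[q^{\tilde P_m}]$, where $\tilde P_m$ is the property in this i.i.d.\ Poisson hash table (up to an exponentially small correction from $\{N>bm\}$, which is absorbed by the convention fixed in \refS{SPoissonTransform}). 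It remains to show $\tilde P_m \dto P_\ga$. For this, the proof of Lemma \ref{Llim} transfers: the $X_i$ already match the infinite Poisson model \emph{exactly} on any fixed window, and the Chernoff estimate used in \eqref{sw} applies since $\ga<1$, localizing the maximum in \eqref{q1} to a bounded index range with probability $1-o(1)$ uniformly in $m$. The a.s.\ continuity of $g$ from \ref{tconvmn} then gives, by bounded convergence, $\ppm[p_{m,n}(q); b\ga] \to p_\ga(q)$.

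For part \ref{tconvres}, a standard singularity-analysis step finishes the argument. By the analyticity hypothesis, we may write
\begin{equation*}
P(b\ga, y^{1/b}e^{-\ga}, q) = \frac{A}{1-y} + h(y),
\end{equation*}
where $h(y)$ is analytic in $|y| < r$ and $A$ is chosen to cancel the pole; the residue of the left-hand side at $y=1$ is then $-A$. Matching the $m$-th coefficients against \eqref{laPoiss} yields
\begin{equation*}
\ppm[p_{m,n}(q); b\ga] = [y^m]P(b\ga, y^{1/b}e^{-\ga}, q) = A + [y^m] h(y),
\end{equation*}
and Cauchy's estimate gives $[y^m] h(y) = O((r')^{-m})$ for any $1 < r' < r$. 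Letting $m\to\infty$ and invoking part \ref{tconvPm} forces $A = p_\ga(q)$, so the residue equals $-p_\ga(q)$.

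The substantive step is part \ref{tconvPm}. The input from part \ref{tconvmn} concerns the \emph{exact} model with a fixed key count, whereas the Poisson transform is most naturally an expectation in a Poisson model; bridging the two rests on the Poisson--multinomial duality plus a cyclic-boundary variant of the profile convergence of Lemma \ref{Llim}. The potential wrap-around interaction between very negative and very positive indices in $\bbZ_m$ is precisely what the tail bound from that lemma rules out, exploiting $\ga<1$; once this is in hand, both \ref{tconvPm} and the residue identification in \ref{tconvres} follow without further work.
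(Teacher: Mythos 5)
Your argument is correct and recovers all three parts, but part \ref{tconvPm} takes a noticeably heavier route than the paper's, and it is worth seeing why.

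Parts \ref{tconvmn} and \ref{tconvres} match the paper's proof essentially line by line: the paper likewise observes that $P=f((H_i)_i)$ for an $f$ that is a.s.\ continuous at $(H_{\ga;i})_i$ (by the cylinder-neighbourhood argument you give) and invokes \refL{Llim} with the continuous mapping theorem; and for \ref{tconvres} it simply says that a residue $\rho$ gives $\ppm[p_{m,n}(q);b\ga]\sim-\rho$ by singularity analysis and then identifies $-\rho=p_\ga(q)$ via \ref{tconvPm}, exactly your coefficient-extraction computation packaged in one sentence.

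For part \ref{tconvPm} the paper's proof is considerably shorter and avoids the machinery you rebuild. It writes $\ppm[p_{m,n}(q);b\ga]=\E\, p_{m,N}(q)$ with $N\sim\Po(b\ga m)$ --- this is just the definition \eqref{pt}, no Poisson--multinomial splitting needed --- then couples the $N$ across $m$ by putting $N=\sum_{i=1}^m X_i$ for one fixed i.i.d.\ sequence $X_i\sim\Po(b\ga)$. The strong law gives $N/bm\to\ga$ a.s., so part \ref{tconvmn} applies \emph{pathwise} to the (deterministic, once $\omega$ is fixed) sequence $(m,N(m)(\omega))$, yielding $p_{m,N}(q)\to p_\ga(q)$ a.s.; dominated convergence using $|p_{m,N}(q)|\le1$ finishes. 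In particular there is no need to re-derive a version of \refL{Llim} for the i.i.d.\ Poisson table on $\bbZ_m$, nor to re-examine the Chernoff localisation or the $\{N>bm\}$ convention --- these are all inherited from part \ref{tconvmn}. Your route (Poisson splitting, then a direct profile-convergence argument for $\tilde P_m$) is valid and arguably more transparent about what the i.i.d.\ Poisson model on $\bbZ_m$ looks like, and as you note the finite-dimensional step becomes trivial; but it expends effort on a sub-claim that the paper's coupling trick gets for free. Your closing remark that \ref{tconvPm} is ``the substantive step'' and requires ``Poisson--multinomial duality plus a cyclic-boundary variant of Lemma \ref{Llim}'' is therefore an overstatement of what is actually needed: once \ref{tconvmn} is in hand, \ref{tconvPm} reduces to a one-line conditioning-plus-LLN argument.
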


\begin{proof}
\pfitemref{tconvmn}
Assume for simplicity that $P$ is competely determined by the profile. (The
random case is similar.)
By the assumptions, $P=f((H_i)_{i=-\infty}^\infty)$  for some function
$f:\bbZ^\bbZ\to \bbZ$, where, moreover, the function $f$ is continuous at
$(H_{\ga;i})_i$ a.s.
Hence, \refL{Llim} and the continuous mapping theorem 
\cite[Theorem 5.1]{Billingsley}
imply
\begin{equation*}
  P_{m,n}=f\bigpar{(H_{m,n;i})_{i}} 
\dto f\bigpar{(H_{\ga;i})_{i}}=P_\ga.
\end{equation*}

\pfitemref{tconvPm}
Fix $q$ with $|q|\le1$.
We can write \eqref{pt} as 
$\ppm[p_{m,n}(q);b\ga]=\E p_{m,N}(q)$,
with $N\sim\Po(b\ga m)$. 
We may assume $N=\sum_{i=1}^m X_i$ 
for a fixed \iid{} sequence $X_i\sim\Po(b\ga)$, and then
$N/m\pto b\ga$ as $m\to\infty$ \as{} by the law of large numbers.
Consequently, by \ref{tconvmn}, 
$p_{m,N}(q)\to p_\ga(q)$ a.s., and thus, 
by dominated convergence using $|p_{m,N}(q)|\le1$, 
\begin{equation*}
  \ppm[p_{m,n}(q);b\ga]=\E p_{m,N}(q)
\to p_\ga(q).
\end{equation*}

\pfitemref{tconvres}
Let the residue be $\rho$. Then, by 
\eqref{laPoiss} and simple singularity analysis
\cite{FlaOdl90,FlaSed09},
$\ppm[p_{m,n}(q);b\ga]\sim -\rho$
as $m\to\infty$, and thus $-\rho=p_\ga(q)$ by \ref{tconvPm}.
\end{proof}

A Boolean property that a hash table either has or has not can be regarded
as a $0/1$-valued property, but in this context it is more natural to
consider the probability that a random hash table has the property.
In this case we obtain the following version of \refT{Tconv}.

\begin{corollary}\label{Cconv}
Let  $P$ be a Boolean property of a
hash table,  
and assume that $P$ satisfies the assumptions of \refT{Tconv}.

Let $p_{m,n}$ and $p_\ga$ be the probabilities that $P$ hold in the exact
model  with $m$ buckets and $n$ keys, and in the doubly infinite Poisson model 
with $\HT=\bbZ$ and each $X_i\sim\Po(b\ga)$, respectively.

\begin{romenumerate}[5pt]
\item \label{cconvmn}
If\/ $m,n\to\infty$ with $n/bm\to\ga$, then
$p_{m,n}\to p_\ga$.

\item \label{cconvPm}
If\/ $m\to\infty$ and, then
$\ppm[p_{m,n};b\ga] \to p_{\ga}$.

\item \label{cconvres}
If\/ $0\le\ga<1$, 
and the bivariate 
generating function $P(b\ga,y^{1/b}e^{-\ga})$
in \eqref{laPoiss2}
has a simple 
pole at $y=1$ but otherwise is analytic in a disc
$|y|<r$ with radius $r>1$, then the residue at $y=1$ equals
$-p_\ga$.
\end{romenumerate}
\end{corollary}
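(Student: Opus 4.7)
The plan is to deduce the corollary from \refT{Tconv} by applying that theorem to the indicator random variable of the Boolean property $P$. Let $\tilde P := \boldsymbol{1}\{P\text{ holds}\}$; this is a non-negative integer-valued property taking only the values $0$ and $1$. It is determined by the profile exactly when $P$ is, and inherits the local continuity hypothesis at $(H_{\ga;i})_i$, so \refT{Tconv} applies to $\tilde P$. The probability generating function of $\tilde P_{m,n}$ is $\tilde p_{m,n}(q) = (1-p_{m,n}) + p_{m,n} q$, and likewise $\tilde p_\ga(q) = (1-p_\ga) + p_\ga q$.

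For part \ref{cconvmn}, \refT{Tconv}\ref{tconvmn} gives $\tilde P_{m,n} \dto \tilde P_\ga$; since both are Bernoulli, convergence in distribution is equivalent to $\E \tilde P_{m,n} \to \E \tilde P_\ga$, that is, $p_{m,n}\to p_\ga$.

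For part \ref{cconvPm}, observe that the Poisson transform \eqref{pt} is linear in $f_{m,n}$, so
\begin{equation*}
\ppm[\tilde p_{m,n}(q); b\ga]
= \bigpar{1-\ppm[p_{m,n};b\ga]} + q\,\ppm[p_{m,n};b\ga].
\end{equation*}
Applying \refT{Tconv}\ref{tconvPm} to $\tilde P$ with, say, $q=0$ yields $1-\ppm[p_{m,n};b\ga]\to 1-p_\ga$, hence $\ppm[p_{m,n};b\ga]\to p_\ga$.

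For part \ref{cconvres}, note that by \eqref{laPoiss2} the bivariate generating function $P(b\ga, y^{1/b} e^{-\ga})$ is precisely the ordinary generating function in $y$ of the sequence $\ppm[p_{m,n}; b\ga]$. Under the stated analyticity assumption, standard singularity analysis \cite{FlaOdl90,FlaSed09} from a simple pole at $y=1$ with residue $\rho$ gives $\ppm[p_{m,n};b\ga]\sim -\rho$ as $m\to\infty$; combined with part \ref{cconvPm}, this forces $-\rho = p_\ga$. The whole argument is a routine specialization; the only point requiring any care is checking that the $\{0,1\}$-valued indicator $\tilde P$ genuinely satisfies the continuity-at-the-limit-profile hypothesis of \refT{Tconv}, which is immediate since the event $\{\tilde P = 1\}$ coincides with the event that $P$ holds.
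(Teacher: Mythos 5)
Your proof is correct and follows essentially the same route as the paper: reduce the Boolean statement to the integer-valued Theorem~\ref{Tconv} by working with an indicator and evaluating the probability generating function at $q=0$. The only cosmetic difference is that the paper applies Theorem~\ref{Tconv} to $\overline P:=1-P$ so that $\overline p(0)=p$ directly, while you apply it to $\tilde P=\etta\{P\}$ and recover $p$ from $1-\tilde p(0)$; these are interchangeable, and your re-derivation of part \ref{cconvres} via singularity analysis combined with part \ref{cconvPm} is precisely the content of Theorem~\ref{Tconv}\ref{tconvres}.
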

\begin{proof}
 Regard $P$ as a $0/1$-valued property and let $\overline P:=1-P$. 
The results  follow by taking $q=0$ in \refT{Tconv}, applied to $\overline P$,
since
$p_{m,n}=\overline p_{m,n}(0)$ and   
$p_{\ga}=\overline p_{\ga}(0)$.
\end{proof}

\begin{remark}
  The case $n/bm\to\ga=0$ is included above, but rather trivial since
  $X_{\ga;i}=0$ so the limiting infinite hash table is empty.
In the sequel, we omit this case.
\end{remark}

\section{The profile and overflow}\label{SH}
\subsection{\Combinatorial}
Let  $\gO(z,w,q)$ be the generating function for the number of
keys that overflow from 
a hash table (i.e., the number of cars that cannot find a place in the
parking problem)
\begin{equation}\label{Omega}
  \gO(z,w,q) := \sum_{m\ge0}\sum_{n\ge0}\sum_{k\ge0} 
N_{m,n,k} w^{bm} \frac{z^n}{n!} q^k,
\end{equation}
where $N_{m,n,k}$ is the number of hash tables of length $m$ with $n$
keys and overflow $k$.
(We include an empty hash table with $m=n=k=0$ in the sum \eqref{Omega}.)
Thus $w$ marks the number of places in the table, $z$ the number of keys
and $q$ the number of keys that overflow.
The following result has also been presented by
Panholzer \cite{Panholzer:slides} 
and  Seitz \cite{SeitzDiploma}.

\begin{theorem}\label{Tomega}
  \begin{align}\label{omega3}
	\Omega(bz,w,q) 
& = \frac{1}{q^b-w^b e^{bqz}}\cdot
\frac{\prod_{j=0}^{b-1} \left(q - \frac{T(\omega^jzw)}{z}\right)}
{\prod_{j=0}^{b-1} \left(1 - \frac{T(\omega^jzw)}{z}\right)}
.
  \end{align}
\end{theorem}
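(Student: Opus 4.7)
The plan is to apply the symbolic method by decomposing each parking configuration uniquely into a (possibly empty) sequence of non-full clusters followed by either nothing or a trailing \emph{overflow block} consisting of $m'\ge 1$ consecutive fully occupied buckets together with the $k\ge 0$ keys that are carried past the end of the table. Since the bucket just before an overflow block is non-full, no key hashed elsewhere enters it, so the block's keys are exactly those hashing into its $m'$ positions. The prefix contributes $\Lambda_0(bz,w)=\prod_{j=0}^{b-1}(1-T(\go^j zw)/z)^{-1}$ by \eqref{laLambda0}, and writing $1+G(z,w,q)$ for the EGF of the optional overflow block (the $1$ standing for the empty case), the theorem reduces to verifying
\begin{equation*}
1+G(z,w,q)=\frac{\prod_{j=0}^{b-1}(q-T(\go^j zw)/z)}{q^b-w^b e^{bqz}}.
\end{equation*}

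To enumerate overflow blocks, observe via \refL{LH} that the $m'$ buckets of a block are all full iff the hash partial sums $S_j=|\{i:h_i\le j\}|$ satisfy $S_j\ge bj$ for $j=1,\ldots,m'$, with overflow then $k=S_{m'}-bm'$. Letting $\Phi_{m'}(v)$ denote the EGF (in the number of keys) of valid hash sequences in $[m']^n$, the $q^k$-weighting gives $1+G=\sum_{m'\ge 0}(w/q)^{bm'}\Phi_{m'}(bqz)$ with $\Phi_0=1$. I compute $\Phi_{m'}$ by complement: $e^{m'v}$ is the EGF of all $[m']^n$-sequences, and I classify each invalid one by the smallest index $\ell\in[1,m']$ with $S_\ell<b\ell$. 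For such a sequence, the ballot inequality $S_j\ge bj$ for $j<\ell$ forces buckets $1,\ldots,\ell-1$ to be completely full, and bucket $\ell$ then contains exactly $d:=S_\ell-b(\ell-1)\in\{0,\ldots,b-1\}$ keys; thus buckets $1,\ldots,\ell$ form an almost full table of length $\ell$ with $b(\ell-1)+d$ keys, contributing the factor $F_{b(\ell-1)+d}$, while the remaining $n-b(\ell-1)-d$ keys hash freely to the $m'-\ell$ positions $\{\ell+1,\ldots,m'\}$. In EGF form,
\begin{equation*}
\Phi_{m'}(v)=e^{m'v}-\sum_{\ell=1}^{m'}\sum_{d=0}^{b-1}F_{b(\ell-1)+d}\,\frac{v^{b(\ell-1)+d}}{(b(\ell-1)+d)!}\,e^{(m'-\ell)v}.
\end{equation*}

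Summation over $m'$ then yields the desired closed form. The $e^{m'v}$ pieces combine geometrically into $q^b/(q^b-w^be^{bqz})$; in the subtracted sum, swapping the order of summation extracts $\sum_{m'\ge\ell}(w/q)^{bm'}e^{(m'-\ell)bqz}=(w/q)^{b\ell}q^b/(q^b-w^be^{bqz})$, and the identity $(bqz)^{b(\ell-1)+d}(w/q)^{b(\ell-1)}=(bqz)^d(bzw)^{b(\ell-1)}$ collapses the remaining double sum (after re-indexing $i=\ell-1$ and using $\sum_i F_{bi+d}(bzw)^{bi}/(bi+d)!=F_d(bzw)/(bzw)^d$) into $(w/q)^b F(bzw,q/w)$ by \eqref{laFd}. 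Invoking \eqref{Fd} and simplifying via $(w/q)^b\prod_j(q/w-T(\go^j zw)/(zw))=q^{-b}\prod_j(q-T(\go^j zw)/z)$ turns this into $1-q^{-b}\prod_j(q-T(\go^j zw)/z)$, and the two pieces combine to give the claimed formula for $1+G$. Multiplying by $\Lambda_0(bz,w)$ produces the formula of the theorem. The main obstacle is the first-violation decomposition; once the almost-full structure of buckets $1,\ldots,\ell$ is justified via \refL{LH}, the remaining manipulations are routine applications of \eqref{laFd} and \eqref{Fd}.
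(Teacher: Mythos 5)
Your proof is correct, and it takes a genuinely different route from the paper's. The paper peels off the \emph{last} bucket to set up a functional equation for $\Omega$ itself, namely $\Omega = 1 + \Omega\, w^b e^{zq}/q^b + \sum_{d=0}^{b-1}(1-q^{d-b})O_d(z,w)$, where the sum is a boundary correction for the cases when the last bucket holds fewer than $b$ keys; the equation is then resolved by the algebraic identities for $O_d$ and $N_0$. You instead exhibit the clean multiplicative structure $\Omega = \Lambda_0 \cdot (1+G)$: split the parking table at the last non-full bucket into a cluster prefix (giving $\Lambda_0$) and a maximal all-full suffix carrying the overflow (giving $1+G$), and enumerate the suffix from scratch via a first-violation (ballot) argument based on \refL{LH}. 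This second piece is a genuine addition --- it makes the random-walk structure $S_j\ge bj$ behind the overflow explicit, and the subsequent algebra (geometric summation in $m'$, reindexing with $F_d$ and $F$, then \eqref{Fd}) is a transparent route to the same product of factors $q-T(\omega^j zw)/z$. The paper's derivation stays closer to its $q$-calculus theme (a fixed-point equation from the $\Bucket$ construction), while yours is the combinatorial/ballot decomposition; both arrive at \eqref{omega3}.
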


\begin{proof}
In the empty hash table, there is no overflow, and so $N_{0,0,0}
= 1$.

Let consider now a hash table with $m\ge1$ buckets of size $b$
and its last bucket $m$. The number of keys that probe the
last bucket, are the ones that overflow from bucket $m-1$ 
plus the ones that hash into bucket $m$. All these keys but
the $b$ that stay in the last buckets overflow from this table.

Formally, as a first approach, this can be expressed by
\begin{center}
table $\approx$ empty + table${}*\Bucket(\mathcal{Z})$,
\end{center}
that by means of the constructions presented in Section \ref{q-calculus}
translates into
\begin{equation}
\label{gfovemal}
\Omega(z,w,q) \approx 1 + \Omega(z,w,q) \frac{w^b e^{zq}}{q^b}.
\end{equation}
The variable $q$ in $zq$ marks all the keys that hash
into bucket $m$, and the division by $q^b$
indicates that $b$ keys stay in the last bucket, and as a
consequence do no overflow.

We have to include however, a correction
factor when the total number of 
keys that probe position $m$ is $0\le d < b$. In this case
equation \eqref{gfovemal} gives terms with negative powers
$q^{d-b}$. As a consequence,
\begin{align}
\Omega(z,w,q) = 1 + \Omega(z,w,q)\frac{w^be^{zq}}{q^b} + \sum_{d=0}^{b-1}
(1-q^{d-b}) O_d(z,w),
\label{eqomegabien}
\end{align}
where $O_d(z,w)$ is the generating function for the number of hash
tables that have $d$ keys in bucket $m$.

From equations \eqref{LaOgen}, \eqref{laFd} and \eqref{N0}
we have the following chain of identities:
\begin{align*}
q^b\left(1+ \sum_{d=0}^{b-1} (1-q^{d-b}) O_d(bz,w) \right) 
&=
q^b\left(1+ \frac{N_0(bz,w)}{1-N_0(bz,w)} 
-\frac{N_0(bzq,w/q)}{1-N_0(bz,w)} 
\right) 
\\&=
q^b\frac{1-N_0(bzq,w/q)}{1-N_0(bz,w)} 
\\&=
\frac{q^b{\prod_{j=0}^{b-1} 
\left(1 -\frac{T(\omega^jzw)}{qz}\right)}}
{{\prod_{j=0}^{b-1} 
\left(1 -\frac{T(\omega^jzw)}{z}\right)}}
. 
\end{align*}
Then, the result follows from equation
\eqref{eqomegabien}.
\end{proof}

We can obtain a closed form for the expectation of the overflow from the
generating function in \eqref{omega3}.
Let $Q\mn$ denote the overflow in a random hash table with $m$ buckets
and $n$ keys.
\begin{corollary} \label{CEQ}
\begin{equation}
\E Q\mn =
m^{-n}\sum_{j=0}^n\sum_{k=1}^{\floor{j/b}} \binom{n}j(j-kb)k^{j-1}(m-k)^{n-j}.
\end{equation}
\end{corollary}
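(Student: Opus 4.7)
The plan is to derive the formula from the generating function in \refT{Tomega} by differentiating once in $q$ and extracting the right coefficient. Since $\Omega(z,w,q) = \sum_{m,n,k} N_{m,n,k}\, w^{bm} z^n/n! \cdot q^k$ and $\E Q_{m,n} = m^{-n}\sum_k k\, N_{m,n,k}$, one has
\begin{equation*}
\E Q_{m,n} = \frac{n!}{m^n b^n}\,[z^n w^{bm}]\, \partial_q \Omega(bz,w,q)\big|_{q=1}.
\end{equation*}
First I would write $\Omega(bz,w,q) = P(q)/(q^b - w^b e^{bqz})$ with
\begin{equation*}
P(q) := \prod_{j=0}^{b-1} \frac{q - T(\omega^j zw)/z}{1 - T(\omega^j zw)/z},
\end{equation*}
a polynomial of degree $b$ in $q$ satisfying $P(1)=1$. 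Quotient-rule differentiation and a short algebraic rearrangement then give the clean split
\begin{equation*}
\partial_q \Omega(bz,w,q)\big|_{q=1} = \frac{bw^b e^{bz}(z-1)}{(1-w^b e^{bz})^2} + \frac{P'(1)-b}{1-w^b e^{bz}}.
\end{equation*}

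The main obstacle will be to find $P'(1) = \sum_{j=0}^{b-1} z/(z-T(\omega^j zw))$ as an explicit power series in $z$ and $w$. I would expand each summand as a geometric series in $T(\omega^j zw)/z$, use the Lagrange-inversion formula $[y^n]\, T(y)^k = (k/n)\, n^{n-k}/(n-k)!$ to expand $(T(\omega^j zw))^k$, and then apply the root-of-unity filter $\sum_{j=0}^{b-1}\omega^{jn} = b\cdot\mathbf{1}[b\mid n]$ to sum over $j$. Reindexing by $r := bl - k$ delivers
\begin{equation*}
P'(1) = b + b\sum_{l\ge 1} w^{bl}\sum_{r=0}^{bl-1} \frac{(bl-r)(bl)^{r-1}}{r!}\, z^r.
\end{equation*}

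Next, using $(1-w^b e^{bz})^{-1} = \sum_{s\ge 0} w^{bs} e^{bsz}$ and $(1-w^b e^{bz})^{-2} = \sum_{s\ge 0} (s+1)\, w^{bs} e^{bsz}$, I would extract $[z^n w^{bm}]$ from each of the two pieces above. The rational piece collapses to $(bm)^n(n-bm)/n!$, while the $P'(1)$-piece produces a double sum in $l$ and $r$. After multiplying by $n!/(m^n b^n)$ and simplifying $(bl)^{r-1} = b^{r-1} l^{r-1}$ and $(b(m-l))^{n-r} = b^{n-r}(m-l)^{n-r}$, this yields the intermediate formula
\begin{equation*}
\E Q_{m,n} = (n-bm) + \frac{1}{m^n}\sum_{l=1}^m \sum_{r=0}^{\min(bl-1,n)} \binom{n}{r}(bl-r)\, l^{r-1}(m-l)^{n-r}.
\end{equation*}

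Finally I would reconcile this with the target (where $k=l$, $j=r$, but the inner sum runs over $j \ge bk$ with the opposite sign $(j-bk)$). The bridge is the binomial identity
\begin{equation*}
A(l) := \sum_{r=0}^n \binom{n}{r}(bl-r)\, l^{r-1}(m-l)^{n-r} = m^{n-1}(bm-n),
\end{equation*}
an immediate consequence of $\sum_r\binom{n}{r} l^r(m-l)^{n-r} = m^n$ and $\sum_r r\binom{n}{r} l^r(m-l)^{n-r} = n l m^{n-1}$. Splitting $A(l)$ at $r=bl$ shows that my intermediate inner sum (the $r<bl$ part of $A(l)$) equals $A(l)$ plus the target inner sum (the $r\ge bl$ part, with the sign flipped to $(r-bl) = (j-kb)$). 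Summing over $l=1,\dots,m$ contributes $\sum_l A(l) = m^n(bm-n)$, which, after division by $m^n$, exactly cancels the term $(n-bm)$ and produces the formula stated in \refC{CEQ}.
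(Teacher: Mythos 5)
Your proof is correct and follows essentially the same route as the paper's own: differentiate the generating function of \refT{Tomega} at $q=1$, split the result into a rational piece and a piece involving $\sum_{j}T(\omega^jzw)/(z-T(\omega^jzw))$, expand the latter via Lagrange inversion and a root-of-unity filter, extract $[z^nw^{bm}]$, and finally reconcile with the stated formula using the binomial identity $\sum_{k=1}^m\sum_{j=0}^n\binom{n}{j}(j-kb)k^{j-1}(m-k)^{n-j}=nm^n-bm^{n+1}$. The only value added over the paper's version is that you actually verify this last elementary identity (by splitting $A(l)=blm^{n}/l-nm^{n-1}$), whereas the paper cites it without proof.
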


\begin{proof}
Taking the derivative at $q=1$ in \eqref{Omega} and \eqref{omega3}, 
we obtain, 
since there are $m^n$ hash tables with $m$ buckets and $n$ keys,
\begin{align}
\sum_{m\ge0}\sum_{n\ge0} \E\,& Q_{m,n} m^n w^{bm}\frac{(bz)^n}{n!}
=
\Uq \partial_q \gO(bz,w,q)
\nonumber
\\
&=-\frac{b-bzw^be^{bz}}{(1-w^be^{bz})^2}
+ \frac{1}{1-w^be^{bz}}\sum_{j=0}^{b-1}\frac{1}{1-T(\go^jzw)/z}
\nonumber
\\
&=b(z-1)\frac{w^be^{bz}}{(1-w^be^{bz})^2}
+ \frac{1}{1-w^be^{bz}}\sum_{j=0}^{b-1}\frac{T(\go^jzw)/z}{1-T(\go^jzw)/z}.
\label{qk1}
\end{align}
We have 
\begin{align}
\frac{1}{1-w^be^{bz}}
&=\summ w^{bm}e^{bmz}
=\summ \sumn m^nw^{bm}\frac{(bz)^n}{n!},
\\
\frac{w^be^{bz}}{(1-w^be^{bz})^2}
&=\summ m w^{bm}e^{bmz}
=\summ \sumn m^{n+1} w^{bm}\frac{(bz)^n}{n!}.
\end{align}
Furthermore, by a well-known application of the Lagrange inversion formula,
for any real $r$,
\begin{equation}
\parfrac{T(z)}{z}^r=e^{rT(z)}= \sumn r(n+r)^{n-1}\frac{z^n}{n!}
\end{equation}
and thus
\begin{equation}\label{LagT}
\frac{T(zw)/z}{1-T(zw)/z}=\sum_{r=1}^\infty w^r\parfrac{T(zw)}{zw}^r
= \sum_{r=1}^\infty\sumn r (n+r)^{n-1}\frac{z^n}{n!} w^{n+r}.
\end{equation}
Substituting $\go^j w$ for $w$ and summing over $j$, we kill all powers of
$w$ that are not multiples of $b$ and we obtain, writing $n+r=kb$,
\begin{equation}\label{LagT1}
\sum_{j=0}^{b-1}\frac{T(\go^jzw)/z}{1-T(\go^jzw)/z}
= b\sum_{k=1}^\infty \sum_{n=0}^{kb-1}(kb-n)(kb)^{n-1}\frac{z^n}{n!} w^{bk}.
\end{equation}
Substituting these expansions in \eqref{qk1} and extracting coefficients we
obtain 
\begin{equation}\label{qll}
m^n\E Q\mn =
nm^n-bm^{n+1}+
\sum_{k=1}^m\sum_{j=0}^n \binom{n}j(kb-j)_+k^{j-1}(m-k)^{n-j},
\end{equation}
where $x_+:=x\etta[x\ge0]$.
Using the the elementary summation
\begin{equation}
\sum_{k=1}^m\sum_{j=0}^n \binom{n}j(j-kb)k^{j-1}(m-k)^{n-j}
=nm^n-bm^{n+1},
\end{equation}
we obtain from \eqref{qll} also
\begin{equation}\label{qll+}
m^n\E Q\mn =
\sum_{k=1}^m\sum_{j=0}^n \binom{n}j(j-kb)_+ k^{j-1}(m-k)^{n-j},
\end{equation}
and the result follows.
\end{proof}

We note the following alternative 
exact formula and asymptotic formula for almost full
tables, both 
taken from
\cite[Theorem 14]{RHBuckets}.
An asymptotic formula when $n/bm\to\ga\in(0,1)$ is given in \refC{CEHQ} below.
\begin{align}\label{eqmn98}
{\E} [Q_{m,n}] &=
\sum_{i\geq 2} \binom{n}{i} \frac{(-1)^{i}}{m^i} \sum_{k=1}^{m} 
k^{i-1} \binom{bk-i}{bk - 1}, \\
 {\E}[Q_{m,bm-1}] &= \frac{\sqrt{2\pi b m}}{4}-\frac{7}{6} +
 \sum_{d=1}^{b-1} \frac{T\left(\go^de^{-1}\right)}{1-T\left(\go^de^{-1}\right)}
 + \frac{1}{48}\sqrt{\frac{2\pi}{bm}}+O\left(\frac{1}{m}\right).
\end{align}

\subsection{\Probabilistic}
For the probabilistic version,
we 
use \refT{Tconv}
and study
in the sequel infinite hashing on $\bbZ$,
with $X_i=X\gai$ 
\iid{} random Poisson variables with  $X_i\sim\Po(\ga b)$, where $0<\ga<1$.
(We consider a fixed $\ga$ and omit it from the notations for convenience.)
Thus $X_i$ has the \pgf{}
\begin{equation}\label{g}
  \psix(\zq):=\E \zq^{X_i} = e^{\ga b(\zq-1)}.
\end{equation}
We begin by finding the distributions of $H_i=H\gai$ and $Q_i=Q\gai$.
Let $\psih(\zq):=\E \zq^{H_i}$ and $\psiq(\zq):=\E \zq^{Q_i}$ denote the
\pgf{s} of 
$H_i$ and $Q_i$ (which obviously do not depend on $i\in\bbZ$), defined at
least for $|\zq|\le1$. 


\begin{theorem}\label{TH}
Let $0<\ga<1$.
The random variables $H\gai$ and $Q\gai$ in the infinite Poisson model have
\pgf{s} $\psih(\zq)$ and $\psiq(\zq)$ that extend to meromorphic functions
given by,
with $\zetal=\zetalx$ as in \eqref{zetal},
  \begin{align}
\psih(\zq)&=
\frac{b(1-\ga)(\zq-1)}
{\zq^b e^{\ga b (1-\zq)}-1} 
\frac
{\prod_{\ell=1}^{b-1}\bigpar{\zq-\zetal}}
{\prod_{\ell=1}^{b-1}\bigpar{1-\zetal}},
\label{psih}
\\
\psiq(\zq)&=
\frac{b(1-\ga)(\zq-1)}{\zq^b- e^{\ga b (\zq-1)}} 
\frac
{\prod_{\ell=1}^{b-1}\bigpar{\zq-\zetal}}
{\prod_{\ell=1}^{b-1}\bigpar{1-\zetal}}.
\label{psiq}
  \end{align}

Moreover, for the exact model,
$H\mni$ and $Q\mni$ converge in distribution to $H\gai$ and\/ $Q\gai$,
respectively, 
as $m,n\to\infty$ with $n/bm\to\ga$;
furthermore, for some $\gd>0$,
their \pgf{s} converge to $\psi_H(q)$ and $\psi_Q(q)$,
uniformly for $|q|\le1+\gd$. 
Hence,
$\E H\mni^\ell\to \E H_\ga^\ell$  and
$\E Q\mni^\ell\to \E Q_\ga^\ell$ 
for any $\ell\ge0$.
\end{theorem}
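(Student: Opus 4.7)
The strategy is to derive the \pgf{} formulas for the stationary infinite Poisson model from the Lindley-type recursion \eqrefhq{}, then transfer to the exact model via \refT{Tconv} combined with the exponential tail bounds of \refL{Lexp}.

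\emph{Step 1 (functional equation).} By stationarity in $i\in\bbZ$ of the iid Poisson inputs $X_i$ and the solution given by \refL{LH}\ref{LHz}, the variables $H_i$ and $Q_i$ have distributions independent of $i$. The identity $H_i=X_i+Q_{i-1}$ with $X_i\perp Q_{i-1}$ gives $\psih(\zq)=\psix(\zq)\psiq(\zq)=e^{\ga b(\zq-1)}\psiq(\zq)$. From $Q_i=(H_i-b)_+$ I split $\psiq(\zq)=\E \zq^{(H-b)_+}$ into the contributions from $H\ge b$ and $H<b$; multiplying through by $\zq^b$ gives
\begin{equation*}
\zq^b\psiq(\zq)=\psih(\zq)+\sum_{k=0}^{b-1}(\zq^b-\zq^k)\Pr(H=k).
\end{equation*}
Using $\psih=\psix\psiq$ yields $\psiq(\zq)\bigpar{\zq^b-e^{\ga b(\zq-1)}}=P(\zq)$, where $P$ is a polynomial of degree $\le b$ with $P(1)=0$.

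\emph{Step 2 (identifying $P$).} \refL{Lzetalzq} with $\zq=1$ says that $\zq^b=e^{\ga b(\zq-1)}$ has exactly the $b$ roots $1=\zeta_0,\zeta_1,\dots,\zeta_{b-1}$ in the closed unit disc. Since $Q$ is an honest random variable, $\psiq$ is continuous on $|\zq|\le1$; hence $P$ must vanish at every $\zeta_\ell$, so
\begin{equation*}
P(\zq)=C(\zq-1)\prod_{\ell=1}^{b-1}(\zq-\zetal).
\end{equation*}
The constant is fixed by $\psiq(1)=1$: l'H\^opital at the simple pole $\zq=1$ of the factor $1/(\zq^b-e^{\ga b(\zq-1)})$, using $\frac{d}{d\zq}(\zq^b-e^{\ga b(\zq-1)})|_{\zq=1}=b(1-\ga)$, yields $C=b(1-\ga)/\prod_{\ell=1}^{b-1}(1-\zetal)$. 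This gives \eqref{psiq} directly, and \eqref{psih} follows after multiplying by $\psix(\zq)$ and rewriting $\zq^b-e^{\ga b(\zq-1)}=e^{\ga b(\zq-1)}\bigpar{\zq^b e^{\ga b(1-\zq)}-1}$.

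\emph{Step 3 (convergence from the exact model).} The property $H_0$ is a component of the profile, so it trivially satisfies the hypothesis of \refT{Tconv} with $N=0$; the same holds for $Q_0$. \refT{Tconv}\ref{tconvmn} therefore gives $H\mni\dto H\gai$ and $Q\mni\dto Q\gai$, and the \pgf s converge pointwise for $|\zq|\le 1$.

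\emph{Step 4 (uniform convergence and moments).} By \refL{Lexp}, there exist $c>0$ and $C$ with $\E e^{cH\mni}\le C$ and $\E e^{cQ\mni}\le C$ uniformly in $m,n$ (with $n/bm$ bounded above by some $\ga_1<1$). Hence $\psi_{H\mni}(\zq)$ and $\psi_{Q\mni}(\zq)$ are analytic and uniformly bounded on $|\zq|\le e^c$. By Montel's theorem this family is normal; any subsequential limit agrees with $\psih$ (resp.\ $\psiq$) on $|\zq|\le1$ by Step 3 and therefore, by the identity theorem, on the whole disc $|\zq|<e^c$. Thus convergence is uniform on compact subsets of $|\zq|<e^c$, and in particular on $|\zq|\le 1+\gd$ for any $0<\gd<e^c-1$. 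Cauchy's formula converts this into uniform convergence of all derivatives at $\zq=1$, hence convergence of factorial, and therefore ordinary, moments of every order.

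\emph{Expected obstacle.} The only delicate point is justifying that each root $\zetal$ with $|\zetal|=1$ forces $P(\zetal)=0$: since a \pgf{} is only guaranteed to be analytic on $|\zq|<1$, one must invoke boundary continuity (or, equivalently, the exponential moment bound which extends analyticity past $|\zq|=1$) before concluding $P(\zetal)=0$ at such boundary roots. Everything else is routine algebra and standard complex-analytic arguments.
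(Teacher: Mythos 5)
Your proof is correct and follows essentially the same route as the paper: derive $\psih=\psix\psiq$ from \eqref{h}, the degree-$b$ polynomial relation from \eqref{q}, identify the polynomial via the roots $\zetal$ from \refL{Lzetalzq}, normalize by l'H\^opital, and transfer convergence through \refT{Tconv} and \refL{Lexp}. One small clarification on your ``expected obstacle'': for $\ell\ge1$ the roots satisfy $|\zetal|<1$ strictly (since $T$ has positive coefficients and $\go^\ell\aea$ is not a positive real), and $\zeta_0=1$ is handled by the directly-checked fact $P(1)=0$, so no boundary continuity argument is actually needed, though the one you supply is valid.
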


The formula \eqref{psiq}, which easily implies \eqref{psih}, 
see \eqref{a1} below,
was shown by
the combinatorial method in \cite[Theorem 9]{Viola}.
Indeed, it follows from \refT{Tomega} by \refT{Tconv}\ref{tconvres} 
and \eqref{T'}; we
omit the details.
The formula is also implicit in \cite[Exercise 6.4-55]{KnuthIII}.
We give a  probabilistic proof very similar to the argument in \cite{KnuthIII}.

\begin{proof}
By \eqref{q1}, 
$Q_{i-1}$ depends only on $X_j$ for $j\le i-1$; hence, 
$Q_{i-1}$ is
independent of $X_i$ and thus \eqref{h} yields, for $|\zq|\le1$, using
\eqref{g}, 
\begin{equation}\label{a1}
  \psih(\zq)=\psix(\zq)\psiq(\zq)=e^{\ga b(\zq-1)}\psiq(\zq).
\end{equation}
Furthermore, by \eqref{q}, $Q_i+b = \max(H_i,b)$ and thus, for $|\zq|\le1$,
\begin{equation}\label{a2}
  \begin{split}
\zq^b  \psiq(\zq)-\psih(\zq)
&= \sum_{k=0}^{\infty}\zq^{\max\xpar{k,b}} \PP(H_i=k) 
-\sum_{k=0}^{\infty}\zq^{k} \PP(H_i=k) 
\\&
= \sum_{k=0}^{b-1} \PP(H_i=k) \bigpar{\zq^b-\zq^{k}}
=\pi\xpar{\zq}	
  \end{split}
\end{equation}
for some polynomial $\pi$ of degree $b$.
Combining \eqref{a1} and \eqref{a2} we obtain, 
\begin{equation}\label{a3}
  \bigpar{\zq^b-e^{\ga b(\zq-1)}}\psiq(\zq)=\pi(\zq),
\qquad  |\zq|\le1.
\end{equation}

By \eqref{zetalzq2}, with $q=1$, we have for every $\ell$
\begin{equation}
   \zetal^b= e^{b \ga(\zetal-1)}.
\end{equation}
Substituting this in \eqref{a3} (recalling $|\zetal|\le1$ by \refL{Lzetalzq})
shows that 
$\pi(\zetal)=0$ for every $\ell$.
Since  $\zeta_0,\dots,\zeta_{b-1}$ are distinct, again by \refL{Lzetalzq},
these numbers are 
the $b$ roots of the $b$:th degree polynomial $\pi$, and thus 
\begin{equation}
  \pi(\zq)=\ctt\prod_{\ell=0}^{b-1} (\zq-\zetal)
\end{equation}
for some constant $\ctt$. Using this in \eqref{a3} yields,
recalling $\zeta_0=1$ by \eqref{zeta0}, 
\begin{equation}\label{a5}
  \psiq(\zq)=
\ctt
\frac{\prod_{\ell=0}^{b-1}(\zq-\zetal)}{\zq^b- e^{\ga b (\zq-1)}}
=\ctt
\frac{(\zq-1)\prod_{\ell=1}^{b-1}(\zq-\zetal)}{\zq^b- e^{\ga b (\zq-1)}}.
\end{equation}
To find $\ctt$, we let $\zq\to1$ and use \lhopitals,
which yields
\begin{equation}
1= \psiq(1)=
\ctt
\frac{\prod_{\ell=1}^{b-1}(1-\zetal)}{b- \ga b }.
\end{equation}
Hence, recalling $T_0(b\ga)$ from
\eqref{T00}, see
 \cite[Theorem 7]{Viola} and
\cite[Theorem 4.1]{Bigbuck},
\begin{equation}\label{ctt}
\ctt= 
\frac{b(1-\ga)}{\prod_{\ell=1}^{b-1}(1-\zetal)}
=T_0(b\ga).
\end{equation}
We now obtain \eqref{psiq} from \eqref{a5} and \eqref{ctt};
\eqref{psih} then follows by \eqref{a1}.

The convergence in distribution in the final statement follows from
\refT{Tconv}\ref{tconvmn} (or \refL{Llim}); note that
$H_i$ and $Q_i$ trivially satisfy the condition in \refT{Tconv}. 
The convergence of the \pgf{s} follows from this and Lemma \ref{Lexp} by a
standard argument (for any $\gd<e^c-1$, with $c$ as in \refL{Lexp}).
By another standard result, the convergence of the probability generating
functions in a neighbourhood of $1$ implies convergence of all moments.
\end{proof}

The moments can  be computed from the \pgf{s} \eqref{psih} and 
\eqref{psiq}. We do this explicitly for the expectation only; the formulas for
higher moments are similar but more complicated.
The expectation \eqref{eq} was given in  \cite[Exercise 6.4-55]{KnuthIII}. 
\begin{corollary}\label{CEHQ}
As $m,n\to\infty$ with $n/bm\to\ga\in(0,1)$,
  \begin{align}
\E H\mni &\to \E H_\ga 
= \frac{1}{2(1-\ga)}-\frac{(1-\ga)b}2+\sum_{\ell=1}^{b-1}\frac{1}{1-\zetal},
\label{eh}
\\
\E Q\mni &\to \E Q_\ga 
= \frac{1}{2(1-\ga)}-\frac{(1+\ga)b}2+\sum_{\ell=1}^{b-1}\frac{1}{1-\zetal}.
\label{eq}
  \end{align}
\end{corollary}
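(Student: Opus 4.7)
The plan is to deduce the corollary from \refT{TH}. That theorem already gives moment convergence $\E H_{m,n;i} \to \E H_\ga$ and $\E Q_{m,n;i} \to \E Q_\ga$ as $m,n\to\infty$ with $n/bm\to\ga$, so the task reduces to evaluating $\psih'(1)$ and $\psiq'(1)$ from \eqref{psih}--\eqref{psiq}.

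I would begin with $\E Q_\ga$. Factor $\psiq(q)=g(q)\,h(q)$ where
\begin{equation*}
 g(q):=\frac{b(1-\ga)(q-1)}{q^b-e^{\ga b(q-1)}},\qquad
 h(q):=\prod_{\ell=1}^{b-1}\frac{q-\zetal}{1-\zetal}.
\end{equation*}
Both $g$ and $h$ are analytic near $q=1$ with $g(1)=h(1)=1$ (for $g$ apply \lhopitals{}; the denominator $q^b-e^{\ga b(q-1)}$ has a simple zero at $q=1$ with derivative $b(1-\ga)$). Hence $\psiq(1)=1$ and by logarithmic differentiation
\begin{equation*}
 \E Q_\ga=\psiq'(1)=g'(1)+h'(1)=g'(1)+\sum_{\ell=1}^{b-1}\frac{1}{1-\zetal}.
\end{equation*}
The sum contributes the advertised $\sum\frac{1}{1-\zetal}$ term, so everything comes down to computing $g'(1)$.

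For $g'(1)$, I would expand the denominator at $q=1$:
\begin{equation*}
 q^b-e^{\ga b(q-1)}=b(1-\ga)(q-1)+\tfrac12\bigpar{b(b-1)-\ga^2 b^2}(q-1)^2+O\bigpar{(q-1)^3}.
\end{equation*}
Dividing numerator $b(1-\ga)(q-1)$ by this gives
\begin{equation*}
 g(q)=1-\frac{b-1-\ga^2 b}{2(1-\ga)}(q-1)+O\bigpar{(q-1)^2},
\end{equation*}
so $g'(1)=-\dfrac{b-1-\ga^2 b}{2(1-\ga)}=\dfrac{1-b(1-\ga^2)}{2(1-\ga)}=\dfrac{1}{2(1-\ga)}-\dfrac{(1+\ga)b}{2}$, which is \eqref{eq}.

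For \eqref{eh} the quickest route is the identity \eqref{a1}: $\psih(q)=e^{\ga b(q-1)}\psiq(q)$, i.e.\ $H_\ga\eqd Q_\ga+X$ with $X\sim\Po(\ga b)$ independent of $Q_\ga$. Hence $\E H_\ga=\E Q_\ga+\ga b$, and a short calculation turns $\frac{1}{2(1-\ga)}-\frac{(1+\ga)b}{2}+\ga b$ into $\frac{1}{2(1-\ga)}-\frac{(1-\ga)b}{2}$, giving \eqref{eh}. (One could equally well expand $q^b e^{\ga b(1-q)}-1$ directly in the same way as above; this would be the main obstacle if one did not notice the shortcut via \eqref{a1}, but as it stands the argument is essentially a \lhopitals{} computation.)
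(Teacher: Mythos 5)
Your argument is correct and is essentially the same as the paper's: deduce moment convergence from Theorem~\ref{TH}, obtain $\E H_\ga = \ga b + \E Q_\ga$ from \eqref{h} (equivalently \eqref{a1}), and compute $\E Q_\ga = \psiq'(1)$ by Taylor-expanding the first factor of \eqref{psiq} at $q=1$. The only difference is that you spell out the second-order expansion and the resulting algebra that the paper leaves implicit, and both computations check out.
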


\begin{proof}
By the last claim in \refT{TH}, it suffices to compute $\E H_\ga$ and $\E
Q_\ga$. Moreover, in the infinite Poisson model, $\E X_i=b\ga$, and thus
\eqref{h} implies $\E H_\ga=b\ga+\E Q_\ga$. Finally, $\E Q_\ga=\psiq'(1)$
is easily found from \eqref{psiq}, using Taylor expansions in the first
factor.  
\end{proof}

We obtain also results for individual probabilities.
Recall that $Y_i=\min(H_i,b)$ denotes the final number of keys that are
stored in bucket $i$.

\begin{corollary}
  \label{CH}
In the \infpoi{}, for $k=0,\dots,b-1$,
\begin{multline}
\label{ch}  \PP(Y_i=k)=\PP(H_i=k)
=
-b(1-\ga)
\frac{[\zq^k]\prodlb \bigpar{\zq-\zetalx}}
{\prod_{\ell=1}^{b-1}\bigpar{1-\zetalx}}	
\\
=(-1)^{b-k+1}
\frac{b(1-\ga)\ga^{k-b}e_{b-k}\bigpar{\zetaxy0,\dots,\zetaxy{b-1}}}
{\prod_{\ell=1}^{b-1}\bigpar{1-\zetalx}}		  
	\end{multline}
where $e_{b-k}$ is the $(b-k)$:th elementary symmetric function.
In particular,
\begin{equation}\label{ch0}
  \PP(Y_i=0)=\PP(H_i=0)
=(-1)^{b-1}b(1-\ga)
\frac{\prodlbi \zetaly}{\prod_{\ell=1}^{b-1}\bigpar{\ga-\zetaly}}
.
\end{equation}
Furthermore, the probability that a bucket is not full is given by
\begin{equation}\label{ch<b}
  \PP(Y_i<b) = 
  \PP(H_i<b) = 
T_0(b\ga)=
\frac{b(1-\ga)}{\prod_{\ell=1}^{b-1}\bigpar{1-\zetalx}}
\end{equation}
and thus
\begin{equation}
\PP(Y_i=b)=  \PP(H_i\ge b) = 1-T_0(b\ga).
\end{equation}

In the exact model, these results hold asymptotically {\asmn}.
\end{corollary}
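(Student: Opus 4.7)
The plan is to read off the individual probabilities $\PP(H_i=k)$ for $0\le k<b$ directly from the polynomial identity that is already a byproduct of the proof of \refT{TH}, rather than re-expanding the full formula \eqref{psih}. Combining \eqref{a2} with the factorization \eqref{a5}, that proof in fact established
\begin{equation*}
\sum_{k=0}^{b-1}\PP(H_i=k)\bigpar{\zq^b-\zq^k}=\pi(\zq)=\ctt\prodlb(\zq-\zetal),
\end{equation*}
with $\ctt=T_0(b\ga)=b(1-\ga)\big/\prodlbi(1-\zetal)$ from \eqref{ctt}. Since $Y_i=\min(H_i,b)$, the events $\set{Y_i=k}$ and $\set{H_i=k}$ coincide whenever $k<b$, which justifies the first equality in \eqref{ch}.

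Next, I extract $[\zq^k]$ on both sides of the displayed identity for $0\le k<b$. The left side contributes $-\PP(H_i=k)$, so $\PP(H_i=k)=-\ctt\,[\zq^k]\prodlb(\zq-\zetal)$, which is the first boxed form in \eqref{ch}. For the symmetric-function form, expand $\prodlb(\zq-\zetal)=\sum_{j=0}^b(-1)^{b-j}e_{b-j}(\zeta_0,\dots,\zeta_{b-1})\zq^j$; since $\zetal=\zetaly/\ga$ and $e_{b-k}$ is homogeneous of degree $b-k$, a rescaling gives $e_{b-k}(\zeta_0,\dots,\zeta_{b-1})=\ga^{k-b}e_{b-k}(\zetaxy0,\dots,\zetaxy{b-1})$, yielding the second expression in \eqref{ch}. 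The case $k=0$ in \eqref{ch0} is just the specialization $e_b(\zeta_0,\dots,\zeta_{b-1})=\prodlb\zetal=\prodlbi\zetal$ (using $\zeta_0=1$), after which writing both $\ctt$ and the $\zetal$-product in terms of $\zetaly$ — using $\prodlbi(1-\zetal)=\ga^{-(b-1)}\prodlbi(\ga-\zetaly)$ and $\prodlbi\zetal=\ga^{-(b-1)}\prodlbi\zetaly$ — collapses the $\ga$-factors to produce \eqref{ch0}.

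For \eqref{ch<b}, I sum the preceding expression over $0\le k<b$. The polynomial $\prodlb(\zq-\zetal)$ is monic of degree $b$ and vanishes at $\zq=1$ (since $\zeta_0=1$), so its coefficients sum to $0$ while the leading coefficient equals $1$; hence
\begin{equation*}
\sum_{k=0}^{b-1}[\zq^k]\prodlb(\zq-\zetal)=-1,
\end{equation*}
and $\PP(H_i<b)=-\ctt\cdot(-1)=\ctt=T_0(b\ga)$. Combining with $\PP(Y_i=b)=\PP(H_i\ge b)=1-\PP(H_i<b)$ gives the remaining equality.

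Finally, the asymptotic statements for the exact model \asmn{} follow at once from the last clause of \refT{TH}: the convergence $H\mni\dto H\gai$ entails convergence of every individual probability $\PP(H\mni=k)\to\PP(H\gai=k)$, so each of \eqref{ch}, \eqref{ch0}, \eqref{ch<b} transfers to the exact model in the limit. I do not expect any genuine obstacle; the only delicate point is the consistent rescaling between the normalizations $\zetal$ and $\zetaly=\ga\zetal$ when passing between the two forms displayed in \eqref{ch}.
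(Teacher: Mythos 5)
Your argument is correct and follows essentially the same route as the paper: both reduce to extracting low-degree Taylor coefficients of the polynomial $\ctt\prod_{\ell=0}^{b-1}(\zq-\zetal)$, you directly from the auxiliary polynomial $\pi(\zq)$ in \eqref{a2} and \eqref{a5}, the paper from the small-$\zq$ expansion of $\psih$ in \eqref{psih} (which is that same polynomial up to $O(|\zq|^b)$ terms). The rescalings between $\zetal$ and $\zetaly=\ga\zetal$, the vanishing-at-$\zq=1$ trick for \eqref{ch<b}, and the appeal to \refT{TH} for the exact model all match the paper's treatment.
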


\begin{proof}
  By \eqref{psih}, for small $|\zq|$, again using \eqref{zeta0},
  \begin{equation}
\psih(\zq)=-
\frac{b(1-\ga)}{\prod_{\ell=1}^{b-1}(1-\zetal)}
{\prod_{\ell=0}^{b-1}(\zq-\zetal)} + O(|\zq|^b)
  \end{equation}
and \eqref{ch} follows by identifying Taylor coefficients,
recalling  \eqref{zetal}.
Taking $k=0$ we obtain \eqref{ch0}.
Summing \eqref{ch} over $k\le b-1$ yields, using \eqref{ctt},
\begin{equation}
  \begin{split}
\PP(H_i<b)&=
\tba
\sum_{k=0}^{b-1} (-1)^{b-k+1}e_{b-k}(\zeta_0,\dots,\zeta_{b-1})	
\\&
=\tba
\Bigpar{1-\sum_{k=0}^{b}  (-1)^{b-k}e_{b-k}(\zeta_0,\dots,\zeta_{b-1})}
\\&
=\tba
\Bigpar{1-\prodlb(1-\zetal)}
=\tba,
  \end{split}
\end{equation}
since $\zeta_0=1$ by \eqref{zeta0}.
\end{proof}

The generating functions $T_d(u)$ defined in \cite{Viola}
for $0\le d\le b-1$ have the property \cite[p.~318]{Viola}
that $T_d(b\ga)$ is the limit of the probability that in the exact model,
a given bucket
contains more than $d$ empty slots, when $m\to\infty$ and 
$n\sim \Po(\ga bm)$.
This can now be extended by \refC{Cconv}, and we find also the following
relation. 

\begin{theorem} 
In the \infpoi, for $d=0,\dots,b-1$, 
\begin{equation}\label{xa}
T_d(b\ga) =  \PP(Y_i < b-d) = \PP(H_i < b-d)
=\sum_{s=0}^{b-d-1}\PP(Y_i=s).
\end{equation}
Equivalently, for $k=0,\dots,b-1$,
\begin{equation}\label{xb}
  \PP(Y_i=k) = T_{b-k-1}(b\ga)-T_{b-k}(b\ga),
\end{equation}
with $T_{-1}(b\ga):=1$ and $T_{b}(b\ga):=0$.

In the exact model, these results hold asymptotically {\asmn}.
\qed
\end{theorem}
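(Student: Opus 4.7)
The plan is to treat the event ``a given bucket $i$ contains more than $d$ empty slots'' as a Boolean property $P$ of the hash table and apply \refC{Cconv}. Since $Y_i = \min(H_i,b)$ and $b-d \le b$ for every $d \ge 0$, the event $\{Y_i < b-d\}$ coincides with $\{H_i < b-d\}$, and in particular it depends only on the single coordinate $H_i$ of the profile. Thus the locality/continuity hypothesis of \refT{Tconv} is satisfied trivially (take $N = |i|$), and \refC{Cconv} applies to $P$.

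Next I would invoke the fact recalled just before the theorem: by \cite[p.~318]{Viola}, $T_d(b\ga)$ is by definition the limit, as $m\to\infty$ with $n\sim\Po(\ga bm)$, of the probability that a given bucket in the exact model contains more than $d$ empty slots. In the notation of the Poisson transform \eqref{pt}, with $p\mn := \Pr(Y\mni < b-d)$, this reads $T_d(b\ga) = \lim_{m\to\infty} \mathbf{P}_m[p\mn;b\ga]$. Applying \refC{Cconv}\ref{cconvPm} to $P$ identifies the right-hand side with $\Pr(Y\gai < b-d) = \Pr(H\gai < b-d)$ in the doubly infinite Poisson model, establishing the first two equalities of \eqref{xa}; the last equality in \eqref{xa} is the trivial decomposition $\Pr(Y_i < b-d) = \sum_{s=0}^{b-d-1}\Pr(Y_i = s)$.

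The equivalent form \eqref{xb} then follows by telescoping: for $0 \le k \le b-1$,
\begin{equation*}
\Pr(Y_i = k) = \Pr(Y_i < k+1) - \Pr(Y_i < k) = T_{b-k-1}(b\ga) - T_{b-k}(b\ga),
\end{equation*}
with the conventions $T_b(b\ga):=0$ and $T_{-1}(b\ga):=1$ matching the trivial identities $\Pr(Y_i < 0) = 0$ and $\Pr(Y_i \le b) = 1$. The asymptotic statement for the exact model is immediate from \refC{Cconv}\ref{cconvmn} applied to the same Boolean property. The argument presents no real obstacle: all the heavy lifting was done in \refS{Sconv}, and the only point worth verifying --- the locality hypothesis of \refT{Tconv} --- is trivial here because $P$ depends on a single coordinate of the profile.
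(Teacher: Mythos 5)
Your proposal is correct and takes essentially the same approach as the paper: the paper leaves the proof as a \textup{\qedsymbol} precisely because, as the paragraph preceding the theorem indicates, the identity $T_d(b\ga)=\Pr(Y_{\ga;i}<b-d)$ is immediate from the cited characterization of $T_d$ in \cite{Viola} combined with \refC{Cconv}\ref{cconvPm}, and the remaining equalities and the telescoping form \eqref{xb} are trivial. Your observation that the Boolean property depends on a single profile coordinate, so the locality hypothesis of \refT{Tconv} holds trivially, is exactly the right (and only) point needing verification.
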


Using \eqref{xb}, it is easy to verify that the formula \eqref{ch} is
equivalent to \cite[Theorem 8]{Viola}.
The results above also yield a simple proof of the following result from
\cite[Theorem 10]{Viola} on 
the asymptotic probability of success in the parking problem,
as $m,n\to\infty$ with $n/m\to\ga$.

\begin{corollary}\label{CQ0}
In the \infpoi,
the probability of no overflow from a given bucket is
  \begin{equation}
	\PP(Q_i=0)
=(-1)^{b-1}b(1-\ga) e^{\ga b} 
\frac{\prodlbi \zetaly}{\prod_{\ell=1}^{b-1}\bigpar{\ga-\zetaly}}
=e^{b\ga} T_{b-1}(b\ga)
.
  \end{equation}

This is the asymptotic probability of success in the parking problem,
as $m,n\to\infty$ with $n/m\to\ga$,
\end{corollary}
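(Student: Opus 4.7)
The plan is to evaluate $\psiq$ at $\zq=0$, since $\PP(Q_i=0)=\psiq(0)$. Substituting into the formula \eqref{psiq} for $\psiq$, the denominator $\zq^b-e^{\ga b(\zq-1)}$ at $\zq=0$ reduces to $-e^{-b\ga}$, the factor $\zq-1$ contributes $-1$, and each of the $b-1$ factors $\zq-\zetal$ contributes $-\zetal$, so
\begin{equation*}
\psiq(0)=(-1)^{b-1}b(1-\ga)e^{b\ga}\frac{\prodlbi\zetal}{\prodlbi(1-\zetal)}.
\end{equation*}
Pulling the factor $\ga^{-(b-1)}$ out of numerator and denominator via the identity $\zetal=\zetaly/\ga$ (from \eqref{zetal}) then gives exactly the first claimed expression.

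For the equality with $e^{b\ga}T_{b-1}(b\ga)$, I would combine two earlier identities. Equation \eqref{a1} at $\zq=0$ reads $\psih(0)=e^{-b\ga}\psiq(0)$, so $\PP(Q_i=0)=e^{b\ga}\PP(H_i=0)$. Then specializing \eqref{xa} to $d=b-1$ gives $T_{b-1}(b\ga)=\PP(H_i<1)=\PP(H_i=0)$, which finishes the identity. As a sanity check, one could instead multiply \eqref{ch0} by $e^{b\ga}$ and verify that the result matches the first formula, confirming both forms agree.

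The remaining task is the parking-problem interpretation. My plan is to first work in the one-sided Poisson model with $\HT=\set{1,\dots,m}$, $Q_0=0$, and $X_i\sim\Po(b\ga)$ iid; success means $Q_m=0$. Using \refL{LH}\ref{LH+} at $i=m$ and reversing the iid sequence (the substitution $k=m-j$ followed by relabelling $X'_i:=X_{m-i+1}$), $Q_m$ has the same distribution as $\max_{0\le k\le m}(S'_k-bk)$, where $S'_k$ is a random walk with $\Po(b\ga)$ increments. As $m\to\infty$, this sequence of suprema increases monotonically, so $\PP(Q_m=0)\to\PP(\sup_{k\ge0}(S'_k-bk)\le0)$; by a time-reversal of \eqref{q1} at $i=0$, this last probability equals $\PP(Q_0=0)$ in the doubly infinite Poisson model. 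Finally, de-Poissonization along the lines of \refT{Tconv}\ref{tconvmn}, using the uniform exponential tails from \refL{Lexp}, transfers the result to the exact parking model as $m,n\to\infty$ with $n/bm\to\ga$. The main subtlety here is justifying the monotone limit exchange and the de-Poissonization cleanly; the algebra of paragraphs one and two is straightforward.
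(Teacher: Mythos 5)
Your first two paragraphs are correct and match the paper's proof exactly: set $q=0$ in \eqref{psiq} and simplify, then use $\psih(0)=e^{-b\ga}\psiq(0)$ (the paper phrases this via \eqref{ch0} and \eqref{h}, but it is the same relation) together with $T_{b-1}(b\ga)=\PP(H_i=0)$ from \eqref{xa}. The algebra checks out, including the rescaling $\zetal=\zetaly/\ga$.

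For the parking-problem interpretation your route is genuinely different from, and longer than, the paper's. The paper simply cites \refC{Cconv}, implicitly relying on the fact that for the exact model with the same $(X_1,\dots,X_m)$, the parking overflow $Q_m$ (on the interval $\{1,\dots,m\}$, $Q_0=0$) and the cyclic overflow $Q_0$ (on $\bbZ_m$) are the \emph{same} random variable: by \eqref{q1}, both equal $\max_{0\le\ell< m}\sum_{i=1}^\ell (X_{m-i+1}-b)$, with the extra $\ell=m$ term in the parking max being $n-bm<0$ and hence irrelevant. Once that is observed, \refC{Cconv} gives the limit directly, with no de-Poissonization required. You instead work in a one-sided Poisson parking model, obtain the limit as a monotone increase of suprema, identify the limit with the doubly-infinite $Q_0$ by time-reversal, and then de-Poissonize. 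This is valid, but the appeal to "\refT{Tconv}\ref{tconvmn} and \refL{Lexp}" glosses over the fact that those are stated and proved for the cyclic model $\bbZ_m$, not the interval parking model; you would need to re-derive the uniform exponential tail bound (an easy adaptation of \refL{Lexp}) and the convergence statement for the interval case. The cleaner and shorter bridge is the reversal identity above, which is what the paper's one-line citation is implicitly using.

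One small notational discrepancy: the corollary as stated says $n/m\to\ga$, whereas your proof (and the rest of the paper) uses $n/bm\to\ga$; this looks like a typo in the paper's statement, and you have the intended reading.
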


\begin{proof}
  Let $\zq=0$ in \eqref{psiq} to obtain the first equality. 
Alternatively, use \eqref{ch0} and 
$\PP(H_i=0)=\PP(Q_i=0)\PP(X_i=0)=\PP(Q_i=0)e^{-b\ga}$ 
from \eqref{h}; this also yields the second equality by \eqref{xa}.
The final claim follows by \refC{Cconv}.
\end{proof}

\section{Robin Hood displacement}\label{SRH}
We follow the ideas presented in \cite{Exact}, \cite{Viola},
\cite{RHBuckets}
and the references therein.
\refF{frh} 
shows the result of inserting keys with the
keys
36, 77, 24, 69, 18, 56, 97, 78, 49, 79, 38 and 10 in a table
with ten buckets
of size 2,
with hash function $h(x)=x$ mod 10, and resolving collisions by
linear
probing using the Robin Hood heuristic.
When there is a collision in bucket $i$ (bucket $i$ is already full),
then the key in this bucket that has probed the least number
of locations,
probes bucket $(i+1)$ mod $m$. In the case of a tie, we
(arbitrarily)
move the key whose key has largest value. 

\begin{figure}[htb]
\noindent\hrule
\medskip
\setlength{\unitlength}{0.00043300in}%
\begingroup\makeatletter\ifx\SetFigFont\undefined%
\gdef\SetFigFont#1#2#3#4#5{%
  \reset@font\fontsize{#1}{#2pt}%
  \fontfamily{#3}\fontseries{#4}\fontshape{#5}%
  \selectfont}%
\fi\endgroup%
\begin{picture}(6024,824)(-189,-673)
\thicklines
\put(3001,-61){\line( 0,-1){1200}}
\put(3601,-61){\line( 0,-1){1200}}
\put(4201,-61){\line( 0,-1){1200}}
\put(4801,-61){\line( 0,-1){1200}}
\put(5401,-61){\line( 0,-1){1200}}
\put(6001,-61){\line( 0,-1){1200}}
\put(6601,-61){\line( 0,-1){1200}}
\put(7201,-61){\line( 0,-1){1200}}
\put(7801,-61){\line( 0,-1){1200}}
\put(2401,-661){\framebox(6000,600){}}
\put(2401,-1261){\framebox(6000,600){}}
\put(1901,-736){\makebox(0,0)[lb]{\smash{\SetFigFont{12}{14.4}{\rmdefault}{\mddefault}{\updefault}${\textstyle a}$}}}
\put(2501,-436){\makebox(0,0)[lb]{\smash{\SetFigFont{12}{14.4}{\rmdefault}{\mddefault}{\updefault}${\textstyle 69}$}}}
\put(3101,-436){\makebox(0,0)[lb]{\smash{\SetFigFont{12}{14.4}{\rmdefault}{\mddefault}{\updefault}${\textstyle 10}$}}}
\put(3751,-436){\makebox(0,0)[lb]{\smash{\SetFigFont{12}{14.4}{\rmdefault}{\mddefault}{\updefault}${\textstyle }$}}}
\put(4351,-436){\makebox(0,0)[lb]{\smash{\SetFigFont{12}{14.4}{\rmdefault}{\mddefault}{\updefault}${\textstyle   }$}}}
\put(4951,-436){\makebox(0,0)[lb]{\smash{\SetFigFont{12}{14.4}{\rmdefault}{\mddefault}{\updefault}${\textstyle 24}$}}}
\put(5551,-436){\makebox(0,0)[lb]{\smash{\SetFigFont{12}{14.4}{\rmdefault}{\mddefault}{\updefault}${\textstyle   }$}}}
\put(6151,-436){\makebox(0,0)[lb]{\smash{\SetFigFont{12}{14.4}{\rmdefault}{\mddefault}{\updefault}${\textstyle 36}$}}}
\put(6751,-436){\makebox(0,0)[lb]{\smash{\SetFigFont{12}{14.4}{\rmdefault}{\mddefault}{\updefault}${\textstyle 77}$}}}
\put(7351,-436){\makebox(0,0)[lb]{\smash{\SetFigFont{12}{14.4}{\rmdefault}{\mddefault}{\updefault}${\textstyle 18}$}}}
\put(7951,-436){\makebox(0,0)[lb]{\smash{\SetFigFont{12}{14.4}{\rmdefault}{\mddefault}{\updefault}${\textstyle 78}$}}}
\put(2501,-1066){\makebox(0,0)[lb]{\smash{\SetFigFont{12}{14.4}{\rmdefault}{\mddefault}{\updefault}${\textstyle 79}$}}}
\put(3101,-1066){\makebox(0,0)[lb]{\smash{\SetFigFont{12}{14.4}{\rmdefault}{\mddefault}{\updefault}${\textstyle }$}}}
\put(3701,-1066){\makebox(0,0)[lb]{\smash{\SetFigFont{12}{14.4}{\rmdefault}{\mddefault}{\updefault}${\textstyle   }$}}}
\put(4301,-1066){\makebox(0,0)[lb]{\smash{\SetFigFont{12}{14.4}{\rmdefault}{\mddefault}{\updefault}${\textstyle   }$}}}
\put(4901,-1066){\makebox(0,0)[lb]{\smash{\SetFigFont{12}{14.4}{\rmdefault}{\mddefault}{\updefault}${\textstyle   }$}}}
\put(5501,-1066){\makebox(0,0)[lb]{\smash{\SetFigFont{12}{14.4}{\rmdefault}{\mddefault}{\updefault}${\textstyle   }$}}}
\put(6101,-1066){\makebox(0,0)[lb]{\smash{\SetFigFont{12}{14.4}{\rmdefault}{\mddefault}{\updefault}${\textstyle 56 }$}}}
\put(6701,-1066){\makebox(0,0)[lb]{\smash{\SetFigFont{12}{14.4}{\rmdefault}{\mddefault}{\updefault}${\textstyle 97}$}}}
\put(7301,-1066){\makebox(0,0)[lb]{\smash{\SetFigFont{12}{14.4}{\rmdefault}{\mddefault}{\updefault}${\textstyle 38}$}}}
\put(7901,-1066){\makebox(0,0)[lb]{\smash{\SetFigFont{12}{14.4}{\rmdefault}{\mddefault}{\updefault}${\textstyle 49}$}}}
\put(2601,-1666){\makebox(0,0)[lb]{\smash{\SetFigFont{12}{14.4}{\rmdefault}{\mddefault}{\updefault}${\textstyle 0}$}}}
\put(3201,-1666){\makebox(0,0)[lb]{\smash{\SetFigFont{12}{14.4}{\rmdefault}{\mddefault}{\updefault}${\textstyle 1}$}}}
\put(3801,-1666){\makebox(0,0)[lb]{\smash{\SetFigFont{12}{14.4}{\rmdefault}{\mddefault}{\updefault}${\textstyle 2}$}}}
\put(4401,-1666){\makebox(0,0)[lb]{\smash{\SetFigFont{12}{14.4}{\rmdefault}{\mddefault}{\updefault}${\textstyle 3}$}}}
\put(5001,-1666){\makebox(0,0)[lb]{\smash{\SetFigFont{12}{14.4}{\rmdefault}{\mddefault}{\updefault}${\textstyle 4}$}}}
\put(5601,-1666){\makebox(0,0)[lb]{\smash{\SetFigFont{12}{14.4}{\rmdefault}{\mddefault}{\updefault}${\textstyle 5}$}}}
\put(6201,-1666){\makebox(0,0)[lb]{\smash{\SetFigFont{12}{14.4}{\rmdefault}{\mddefault}{\updefault}${\textstyle 6}$}}}
\put(6801,-1666){\makebox(0,0)[lb]{\smash{\SetFigFont{12}{14.4}{\rmdefault}{\mddefault}{\updefault}${\textstyle 7}$}}}
\put(7401,-1666){\makebox(0,0)[lb]{\smash{\SetFigFont{12}{14.4}{\rmdefault}{\mddefault}{\updefault}${\textstyle 8}$}}}
\put(8001,-1666){\makebox(0,0)[lb]{\smash{\SetFigFont{12}{14.4}{\rmdefault}{\mddefault}{\updefault}${\textstyle 9}$}}}
\end{picture}
\vspace{1cm}
{\small \caption{ \label{frh} A Robin Hood Linear Probing hash
table.}}
\centerline{}
\medskip
\noindent\hrule
\end{figure}

\refF{rh1} 
shows the partially filled table after
inserting 58. There
is a collision with 18 and 38. Since there is a tie (all of them
are in their first
probe bucket), we arbitrarily decide to move 58, the largest
key.
Then 58 is in its second probe bucket, 78 also, but 49 is in
its first one.
So 49 has to move. Then 49, 69, 79 are all in their second probe
bucket, so
79 has to move to its final position by the tie-break policy
described above.

\begin{figure}[htb]
\noindent\hrule
\medskip
\setlength{\unitlength}{0.00043300in}%
\begingroup\makeatletter\ifx\SetFigFont\undefined%
\gdef\SetFigFont#1#2#3#4#5{%
  \reset@font\fontsize{#1}{#2pt}%
  \fontfamily{#3}\fontseries{#4}\fontshape{#5}%
  \selectfont}%
\fi\endgroup%
\begin{picture}(6024,824)(-189,-673)
\thicklines
\put(3001,-61){\line( 0,-1){1200}}
\put(3601,-61){\line( 0,-1){1200}}
\put(4201,-61){\line( 0,-1){1200}}
\put(4801,-61){\line( 0,-1){1200}}
\put(5401,-61){\line( 0,-1){1200}}
\put(6001,-61){\line( 0,-1){1200}}
\put(6601,-61){\line( 0,-1){1200}}
\put(7201,-61){\line( 0,-1){1200}}
\put(7801,-61){\line( 0,-1){1200}}
\put(2401,-661){\framebox(6000,600){}}
\put(2401,-1261){\framebox(6000,600){}}
\put(1901,-736){\makebox(0,0)[lb]{\smash{\SetFigFont{12}{14.4}{\rmdefault}{\mddefault}{\updefault}${\textstyle a}$}}}
\put(2501,-436){\makebox(0,0)[lb]{\smash{\SetFigFont{12}{14.4}{\rmdefault}{\mddefault}{\updefault}${\textstyle 49}$}}}
\put(3101,-436){\makebox(0,0)[lb]{\smash{\SetFigFont{12}{14.4}{\rmdefault}{\mddefault}{\updefault}${\textstyle 79}$}}}
\put(3751,-436){\makebox(0,0)[lb]{\smash{\SetFigFont{12}{14.4}{\rmdefault}{\mddefault}{\updefault}${\textstyle }$}}}
\put(4351,-436){\makebox(0,0)[lb]{\smash{\SetFigFont{12}{14.4}{\rmdefault}{\mddefault}{\updefault}${\textstyle   }$}}}
\put(4951,-436){\makebox(0,0)[lb]{\smash{\SetFigFont{12}{14.4}{\rmdefault}{\mddefault}{\updefault}${\textstyle 24}$}}}
\put(5551,-436){\makebox(0,0)[lb]{\smash{\SetFigFont{12}{14.4}{\rmdefault}{\mddefault}{\updefault}${\textstyle   }$}}}
\put(6151,-436){\makebox(0,0)[lb]{\smash{\SetFigFont{12}{14.4}{\rmdefault}{\mddefault}{\updefault}${\textstyle 36}$}}}
\put(6751,-436){\makebox(0,0)[lb]{\smash{\SetFigFont{12}{14.4}{\rmdefault}{\mddefault}{\updefault}${\textstyle 77}$}}}
\put(7351,-436){\makebox(0,0)[lb]{\smash{\SetFigFont{12}{14.4}{\rmdefault}{\mddefault}{\updefault}${\textstyle 18}$}}}
\put(7951,-436){\makebox(0,0)[lb]{\smash{\SetFigFont{12}{14.4}{\rmdefault}{\mddefault}{\updefault}${\textstyle 58}$}}}
\put(2501,-1066){\makebox(0,0)[lb]{\smash{\SetFigFont{12}{14.4}{\rmdefault}{\mddefault}{\updefault}${\textstyle 69}$}}}
\put(3101,-1066){\makebox(0,0)[lb]{\smash{\SetFigFont{12}{14.4}{\rmdefault}{\mddefault}{\updefault}${\textstyle 10}$}}}
\put(3701,-1066){\makebox(0,0)[lb]{\smash{\SetFigFont{12}{14.4}{\rmdefault}{\mddefault}{\updefault}${\textstyle   }$}}}
\put(4301,-1066){\makebox(0,0)[lb]{\smash{\SetFigFont{12}{14.4}{\rmdefault}{\mddefault}{\updefault}${\textstyle   }$}}}
\put(4901,-1066){\makebox(0,0)[lb]{\smash{\SetFigFont{12}{14.4}{\rmdefault}{\mddefault}{\updefault}${\textstyle   }$}}}
\put(5501,-1066){\makebox(0,0)[lb]{\smash{\SetFigFont{12}{14.4}{\rmdefault}{\mddefault}{\updefault}${\textstyle   }$}}}
\put(6101,-1066){\makebox(0,0)[lb]{\smash{\SetFigFont{12}{14.4}{\rmdefault}{\mddefault}{\updefault}${\textstyle 56 }$}}}
\put(6701,-1066){\makebox(0,0)[lb]{\smash{\SetFigFont{12}{14.4}{\rmdefault}{\mddefault}{\updefault}${\textstyle 97}$}}}
\put(7301,-1066){\makebox(0,0)[lb]{\smash{\SetFigFont{12}{14.4}{\rmdefault}{\mddefault}{\updefault}${\textstyle 38}$}}}
\put(7901,-1066){\makebox(0,0)[lb]{\smash{\SetFigFont{12}{14.4}{\rmdefault}{\mddefault}{\updefault}${\textstyle 78}$}}}
\put(2601,-1666){\makebox(0,0)[lb]{\smash{\SetFigFont{12}{14.4}{\rmdefault}{\mddefault}{\updefault}${\textstyle 0}$}}}
\put(3201,-1666){\makebox(0,0)[lb]{\smash{\SetFigFont{12}{14.4}{\rmdefault}{\mddefault}{\updefault}${\textstyle 1}$}}}
\put(3801,-1666){\makebox(0,0)[lb]{\smash{\SetFigFont{12}{14.4}{\rmdefault}{\mddefault}{\updefault}${\textstyle 2}$}}}
\put(4401,-1666){\makebox(0,0)[lb]{\smash{\SetFigFont{12}{14.4}{\rmdefault}{\mddefault}{\updefault}${\textstyle 3}$}}}
\put(5001,-1666){\makebox(0,0)[lb]{\smash{\SetFigFont{12}{14.4}{\rmdefault}{\mddefault}{\updefault}${\textstyle 4}$}}}
\put(5601,-1666){\makebox(0,0)[lb]{\smash{\SetFigFont{12}{14.4}{\rmdefault}{\mddefault}{\updefault}${\textstyle 5}$}}}
\put(6201,-1666){\makebox(0,0)[lb]{\smash{\SetFigFont{12}{14.4}{\rmdefault}{\mddefault}{\updefault}${\textstyle 6}$}}}
\put(6801,-1666){\makebox(0,0)[lb]{\smash{\SetFigFont{12}{14.4}{\rmdefault}{\mddefault}{\updefault}${\textstyle 7}$}}}
\put(7401,-1666){\makebox(0,0)[lb]{\smash{\SetFigFont{12}{14.4}{\rmdefault}{\mddefault}{\updefault}${\textstyle 8}$}}}
\put(8001,-1666){\makebox(0,0)[lb]{\smash{\SetFigFont{12}{14.4}{\rmdefault}{\mddefault}{\updefault}${\textstyle 9}$}}}
\end{picture}
\vspace{1cm}
{\small \caption{ \label{rh1} The table after inserting 58.}}
\centerline{}
\medskip
\noindent\hrule
\end{figure}


The following properties are easily verified:
\begin{itemize}
\item
At least one key is in its home bucket.
\item
The keys are stored in nondecreasing order by hash value,
starting at some
bucket $k$ and wrapping around. In our example $k$=4
(corresponding to
the home bucket of 24).
\item
If a fixed rule (that depends only on the value of the keys and
not in the
order they are inserted) is used to break ties among the
candidates to probe their
next probe bucket (eg: by sorting these keys in increasing
order), then
the resulting table is independent of the order in which the
keys were
inserted \cite{CelisT}.

As a consequence, the last inserted key
has the same distribution as any other key, and without 
loss of generality we may assume that it hashes to bucket $0$.
\end{itemize}

If we look at a hash table with $m$ buckets 
(numbered $0,\dots,m-1$)
after the first $n$ keys have been
inserted, all the keys that hash to
bucket $0$ (if any) will be occupying contiguous buckets, near
the beginning of the table. The buckets
preceding them will be filled by keys that wrapped around
from the right end of the table, as can
be seen in 
\refF{rh1}. 
The key observation here is that those
keys are exactly the ones that would have
gone to the overflow area. 
Since the displacement $\drh$ of a key $x$
that hashes to 0 is 
the number of buckets before the one containing $x$, and each bucket has
capacity $b$,
it follows that
\begin{equation}\label{drhcrh}
  \drh = \floor{\crh/b},
\end{equation}
where $\crh$, the number of keys that win over $x$ in the
competition for slots in the buckets,
is the sum
\begin{equation}\label{crh}
\crh=Q_{-1}+V
\end{equation}
of the number $Q_{-1}=Q_{m-1}$ of keys that
overflow into $0$ and the number $V$ of keys that hash to 0 that win over
$x$. 
Furthermore, it is easy to see that
the number $Q_{m-1}$ of keys that overflow
does not change when the keys that hash to 0 are removed. 
Hence, 
we may here regard $Q_{-1}=Q_{m-1}$ as the overflow from the hash table
obtained by considering only the buckets $1,\dots,m-1$; this is thus
independent of $V$.

The discussion above assumes $n\le bm$, since otherwise there are no hash
tables with $m$ buckets and $n$ keys. However, for the purpose of defining
the generating functions in \refS{SRHc}, we formally allow any $n\ge0$, taking
\eqref{drhcrh}--\eqref{crh} as a definition when $n>bm$, and then ignoring
bucket 0 and the keys that hash to it when computing $Q_{-1}=Q_{m-1}$.

\subsection{\Combinatorial}\label{SRHc}
We consider the displacement $\drh$ of a marked key $\bullet$.
By symmetry,
it suffices to consider the case when $\bullet$ hashes to the first bucket.
Thus, 
let
\begin{equation}\label{eqRH}
  RH(z,w,q) := \sum_{m\ge0}\sum_{n\ge0}\sum_{k\ge0} 
CRH_{m,n,k} w^{bm} \frac{z^n}{n!} q^k,
\end{equation}
where $CRH_{m,n,k}$ is the number of hash tables of length $m$ with $n$
keys (one of them marked as $\bullet$) such that 
$\bullet$ hashes to the first bucket and
the
displacement $\drh$ of $\bullet$
equals $k$. (I.e., $\bullet$ hashes to bucket 0 but is eventually placed in
bucket $k$.)
Moreover, let $C_{m,n,k}$ be the 
number of hash tables of length $m$ with $n$ keys
keys (one of them marked as $\bullet$) such that 
$\bullet$ hashes to the first bucket and
the variable $\crh$ of $\bullet$ equals $k$,
and let $C(z,w,q)$ be its trivariate generating
function.

In terms of generating function, \eqref{drhcrh} translates to, see
\cite[equation (32)]{Viola},
\begin{equation}\label{complicado}
  \begin{split}
RH(z,w,q) &=  
  \sum_{m\ge0}\sum_{n\ge0}\sum_{k\ge0} 
C_{m,n,k} w^{bm} \frac{z^n}{n!} q^{\floor{\xfrac{k}{b}}} \\
&= \frac{1}{b} \sum_{d=0}^{b-1}
C\left(z,w,\go^dq^{1/b}\right)
\sum_{p=0}^{b-1} \left(\go^dq^{1/b}\right)^{-p}.
	  \end{split}
\end{equation}

\begin{theorem}\label{TRH1}
  \begin{align}\label{trh1d}
	RH(bz,w,q) 
= \frac{1}{b} \sum_{d=0}^{b-1}
C\left(bz,w,\go^dq^{1/b}\right)
\sum_{p=0}^{b-1} \left(\go^dq^{1/b}\right)^{-p},
  \end{align}
with
  \begin{align}\label{trh1c}
C(bz,w,q) 
=\frac{w^b(e^{bz}-e^{bzq})}{(1-q)(q^b-w^be^{bzq})}
\frac{\prod_{j=0}^{b-1} \left(q - \frac{T(\omega^jzw)}{z}\right)}
{\prod_{j=0}^{b-1} \left(1 - \frac{T(\omega^jzw)}{z}\right)}.
  \end{align}
\end{theorem}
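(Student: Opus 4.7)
For \eqref{trh1d}, no new argument is needed: the identity $\drh = \floor{\crh/b}$ combined with the standard roots-of-unity filter extracting $q^{\floor{k/b}}$ from $q^{k}$ gives exactly \eqref{complicado}, which is \eqref{trh1d}.

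The substantive part is \eqref{trh1c}. My plan is to exploit the decomposition $\crh = Q_{-1} + V$ established in the paragraphs preceding \refSS{SRHc}, where $Q_{-1}$ is the overflow entering bucket $0$ and $V$ is the number of keys hashing to bucket $0$ that beat $\bullet$ in the tiebreak. The crucial observation recorded in the text is that $Q_{-1}$ is unchanged if we remove bucket $0$ together with all keys hashing to it; in other words, $Q_{-1}$ is determined entirely by the sub-table on buckets $1,\dots,m-1$ and is combinatorially independent of the contents of bucket $0$. This independence translates the sum $Q_{-1}+V$ into a \emph{product} of generating functions.

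The sub-table on the remaining $m-1$ buckets with overflow marked by $q$ is exactly $\Omega(bz,w,q)$ from \refT{Tomega}. Bucket $0$, containing $\bullet$ together with any other keys hashing to it and weighted by $q^{V}$, is obtained by applying $\hat\HH$ (the $q$-analogue of $\Mark$, equation \eqref{qq5}) to $\Bucket(\mathcal Z) = w^b e^{bz}$: for a bucket with $k$ labelled keys, marking one of them and summing $q^{V}$ over the $k$ possible ranks of $\bullet$ in the tiebreak replaces $k$ by $[k] = (1-q^{k})/(1-q)$, giving the factor
\begin{equation*}
\hat\HH[w^b e^{bz}] \;=\; w^b\cdot\frac{e^{bz}-e^{bzq}}{1-q}.
\end{equation*}
Multiplying the two factors yields
\begin{equation*}
C(bz,w,q) \;=\; \Omega(bz,w,q)\cdot w^b\cdot\frac{e^{bz}-e^{bzq}}{1-q},
\end{equation*}
and substituting the formula of \refT{Tomega} for $\Omega(bz,w,q)$ produces \eqref{trh1c}. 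The main obstacle is the factorization step: one must argue carefully that the sub-table's overflow is genuinely independent of bucket $0$'s contents (so that the EGF convolution correctly accounts for all ways to split the $n$ keys between bucket $0$ and the sub-table without double-counting); but this is precisely the observation already recorded in the text just before the theorem.
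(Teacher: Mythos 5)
Your proposal is correct and follows essentially the same route as the paper: equation \eqref{trh1d} is the roots-of-unity filter already recorded in \eqref{complicado}, and \eqref{trh1c} comes from the decomposition $\crh = Q_{-1}+V$ with the independence observation, translated via the specification $C = \text{Overflow} * \Mark(\Bucket)$ into $C(bz,w,q)=\Omega(bz,w,q)\cdot\hat\HH[w^b e^{bz}]$ and then substituting \refT{Tomega}. The paper's proof is just a more compact statement of the same argument.
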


\begin{proof}
Equation \eqref{trh1d} has been derived in \eqref{complicado}.
Moreover, in \eqref{crh} $Q_{-1}$ is the overflow already
studied in Section \ref{SH} , so we present here the
combinatorial specification of $V$.

We assume that the marked key $\bullet$
hashes to the first bucket.
Moreover, if $k$ keys collide in the first bucket, then exactly one of
of them has the variable $V$ equal to $i$, for $0\leq i \leq k-1$, 
leading to a contribution of $q^i$ in the generating function.
Then this cost is specified by adding a bucket,
and marking as $\bullet$ an arbitrary key from the ones that hash to it.
As a consequence, we have the specification
\begin{center}
$C$ = Overflow${}* \Mark(\Bucket)$. 
\end{center}
Thus, by 
\eqref{Omega}
and the constructions presented in Section \ref{q-calculus}
(including \eqref{qq5}, the $q$-analogue version of $\Mark$),
\begin{align*}
C(bz,w,q) 
&= \Omega(bz,w,q)  \frac{w^be^{bz}-w^be^{bzq}}{1-q},
\end{align*}  
and the result follows by \refT{Tomega}.
\end{proof}

Moments of the displacement can in principle be found from the 
generating function \eqref{trh1d}. We consider here only the expectation,
for which it is easier to use \refC{CEQ} (or \eqref{eqmn98})
for the overflow together with the
following simple lemma, see 
\cite[6.4-(45) and Exercise 6.4-55]{KnuthIII}.
Note that the expectation of the displacement (but not the variance) is the
same for  any insertion heuristic.
We let $D\mn$ denote the displacement of a random element
in a hash table with $m$ buckets and $n$ keys.
\begin{lemma}\label{LED}
For linear probing with the Robin Hood, FCFS or LCFS (or any other)
heuristic,
\begin{equation}
\E D\mn = \frac{m}n \E Q\mn.
\end{equation}
\end{lemma}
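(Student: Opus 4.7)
The plan is to use a simple double-counting identity: the total displacement of all keys equals the sum of per-bucket overflows. Crucially, both quantities depend only on the final configuration of occupied cells, which in turn depends only on the multiset of hash addresses and not on the order of insertion or the heuristic used; this is the classical invariance property of linear probing. Once the identity is in place, the result follows by cyclic symmetry.

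First I would fix a realization of the hash addresses and establish the identity
\begin{equation*}
\sum_{k=1}^{n} D(k) \;=\; \sum_{i=0}^{m-1} Q_i,
\end{equation*}
where $D(k) := (p(k)-h(k))\bmod m$ is the displacement of key $k$ (with hash address $h(k)$ and final position $p(k)$) and $Q_i$ is the overflow from bucket $i$ as defined by \eqref{h}--\eqref{q}. Interpret $Q_i$ as the number of keys whose cluster ``passes through'' bucket $i$ without being stored there, i.e., keys $k$ with $h(k)$ at or before $i$ (within the cluster containing $i$) and $p(k)$ strictly after $i$. Each key $k$ contributes to exactly $D(k)$ such bucket indices $i$, while each bucket $i$ is contributed to by exactly $Q_i$ keys; this gives the identity. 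Since the final set of occupied positions (with multiplicities) is the same for every heuristic, and $D(k)$ depends only on $h(k)$ and $p(k)$, the identity is heuristic-independent.

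Next, I would take expectations in the cyclic exact model and invoke symmetry: the joint law of $(X\mni)_{i\in\bbZ_m}$ is translation invariant, so $\E Q_i$ does not depend on $i\in\bbZ_m$ and equals the quantity $\E Q\mn$ computed in \refC{CEQ}. Since $D\mn$ is the displacement of a uniformly chosen key, $\E D\mn = \tfrac1n \E \sum_k D(k)$, and so
\begin{equation*}
n\,\E D\mn \;=\; \E \sum_{k=1}^{n} D(k) \;=\; \sum_{i=0}^{m-1} \E Q_i \;=\; m\,\E Q\mn.
\end{equation*}
Dividing by $n$ yields the lemma.

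The main obstacle is the justification of the double-counting step for heuristics (such as Robin Hood) in which keys are physically moved during insertion, since the ``passes through'' interpretation of $Q_i$ is a priori tied to the FCFS operational picture. The resolution is to define $Q_i$ combinatorially via \eqref{h}--\eqref{q}, which depends only on the $X_i$, and to note that $\sum_i Q_i = \sum_k D(k)$ follows purely from the final placement (identical across heuristics). The remaining computation is a one-line symmetry argument.
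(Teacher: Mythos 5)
Your proof is correct and follows essentially the same route as the paper: the paper's proof is the one-sentence observation that ``the sum of the $n$ displacements of the keys equals the sum of the $m$ overflows $Q_i$'', followed by taking expectations and using cyclic symmetry. You have simply spelled out the double-counting behind that identity (each key $k$ contributes to $Q_i$ for exactly the $D(k)$ buckets $i$ with $h(k)\le i<p(k)$ within its cluster) and made explicit that both sides depend only on the final cluster structure, hence are heuristic-independent; this is a correct and helpful elaboration, not a different method.
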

\begin{proof}
For any hash table, and any linear probing
insertion policy, the sum of the $n$ displacements of the keys
equals the sum of the $m$ overflows $Q_i$. Take the expectation.
\end{proof}

We note also (for use in Section \ref{SER})
the following results for the expectation of the displacement for
full tables presented in \cite{TubaT}
and \cite{RHBuckets}.
\begin{align}
b {\E} [D_{m,bm}] &=
\sum_{i\geq 2} \binom{bm-1}{i} \frac{(-1)^i}{m^i} \sum_{k=1}^{m} 
k^{i-1} \binom{bk - i}{bk - 1}
+\frac{m-1}{2m} , \\
b{\E}[D_{m,bm}] &= \frac{\sqrt{2\pi b
m}}{4}-\frac{2}{3}+
\sum_{d=1}^{b-1} \frac{T\left(\go^de^{-1}\right)}{1-T\left(\go^de^{-1}\right)}
+ \frac{1}{48}\sqrt{\frac{2\pi}{bm}}+O\left(\frac{1}{m}\right).
\label{asymptRH}
\end{align}

\subsection{\Probabilistic}

In the infinite Poisson model, it is not formally well defined to talk about
a ``random key''. Instead, we add a new key to the table and consider its
displacement. By symmetry, we may assume that the new key hashes to 0, and
then its displacement $\drh_\ga$ is given by \eqref{drhcrh}--\eqref{crh},
with $Q_{-1}=Q_{\ga;-1}$ and $V=V_\ga$ independent. Furthermore, $Q_{-1}$
has the probability generating function \eqref{psiq} and given $X_0$, $V$ is
a uniformly random integer in \set{0,\dots,X_0}. 

Similarly, in the exact model, by the discussion above we may study the
Robin Hood displacement of the last key instead of taking a random key;
this is the same as the displacement of a new key added to a table with
$m$ buckets and $n-1$ keys.

\begin{theorem}\label{TRH}
  Let $0<\ga<1$.
In the infinite Poisson model,
the variable $V_\ga$, 
the number of keys that win over the new key $\crh_\ga$ and its
Robin Hood displacement $\drh_\ga$ have the \pgf{s} 
\begin{align}
  \psiv(q)
&=\frac{1-e^{b\ga (q-1)}}{b\ga(1-q)}
\label{psiv}
\\
\psicrh(q)&=\psiq(q)\psiv(q)=
\frac{1-\ga}{\ga}\,
\frac{1-e^{b\ga(q-1)}}{e^{b\ga  (q-1)}-q^b} 
\,
\frac
{\prod_{\ell=1}^{b-1}\bigpar{q-\zetal}}
{\prod_{\ell=1}^{b-1}\bigpar{1-\zetal}}.
\label{psic}
\\
\psirh(q)
&=\frac1b\sum_{j=0}^{b-1} \psicrh \bigpar{\go^j q^{1/b}}
\frac{1-q\qw}{1-\go^{-j}q^{-1/b}}
.
\label{psirh}
\end{align}

Moreover, for the exact model,
as $m,n\to\infty$ with $n/bm\to\ga$,
$V\mn\dto V_\ga$,
$\crh\mn\dto \crh_\ga$ and
$\drh\mn\dto \drh_\ga$,
with convergence of all moments;
furthermore, 
for some $\gd>0$, the corresponding \pgf{s} converge,
uniformly for $|q|\le1+\gd$.
\end{theorem}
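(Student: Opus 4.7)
\textbf{Computing the \pgf{s}.} For $\psiv$, condition on $X_0 \sim \Po(b\ga)$: given $X_0 = k$, $V$ is uniform on $\{0,\dots,k\}$, so $\E[q^V \mid X_0 = k] = (1-q^{k+1})/((k+1)(1-q))$. Summing against the Poisson weights, the series $\sum_{k\ge 0} (b\ga)^{k+1}/(k+1)! = e^{b\ga}-1$ (and analogously with $q^{k+1}$ in place of $1$), which produces \eqref{psiv} after cancelling $(1-q)$. Next, as noted just before the theorem, $Q_{-1}$ equals the overflow into bucket $0$ computed after \emph{deleting} bucket $0$; hence $Q_{-1}$ is independent of $(X_0, V)$, so \eqref{crh} gives $\psicrh(q)=\psiq(q)\psiv(q)$. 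Substituting \eqref{psiq} and \eqref{psiv}, the factor $(q-1)/(1-q)=-1$ cancels and $b(1-\ga)/(b\ga)$ simplifies to $(1-\ga)/\ga$, yielding \eqref{psic}. To pass from $\crh$ to $\drh$, use the standard roots-of-unity trick: for any non-negative integer \rv{} $N$ with \pgf{} $\psi_N$,
\begin{equation*}
\E q^{\lfloor N/b\rfloor} = \sum_{r=0}^{b-1} q^{-r/b}\cdot\frac{1}{b}\sum_{j=0}^{b-1} \go^{-jr}\psi_N(\go^j q^{1/b}),
\end{equation*}
obtained by decomposing $N$ according to its residue mod $b$; collapsing the geometric sum over $r$ gives the factor $(1-q^{-1})/(1-\go^{-j}q^{-1/b})$, and specializing $N=\crh_\ga$ yields \eqref{psirh}.

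\textbf{Convergence.} By the tie-breaking discussion preceding the theorem, I may assume the marked last key in the exact model hashes to bucket $0$. Then $V\mn$ depends only on $X_{m,n-1;0}$ and independent tie-breaking randomness, while $Q_{m,n-1;-1}$ is the overflow into $0$ from the reduced table on buckets $1,\dots,m-1$ after deleting the keys hashing to $0$; thus the same independence structure holds as in the infinite Poisson model. \refL{Llim} supplies the joint limit of $(X_{m,n-1;0}, Q_{m,n-1;-1})$, and the continuous mapping theorem then yields $V\mn\dto V_\ga$, $\crh\mn\dto \crh_\ga$ and $\drh\mn\dto\drh_\ga$. For uniform convergence of the \pgf{s} on a disc $|q|\le 1+\gd$, I use \refL{Lexp} to control the exponential moments of $Q_{-1}$; the trivial bound $V\le X_0$, combined with $X_0\sim\Bin(n-1,1/m)$ being stochastically dominated by a Poisson variable of bounded rate when $n/bm\to\ga<1$, controls those of $V$. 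Convergence of all moments then follows by the standard extraction from uniform convergence of the \pgf{s} on a neighborhood of $1$.

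\textbf{Main obstacle.} The algebraic steps are routine; the real work is the bookkeeping in the convergence argument---cleanly separating the profile-driven randomness (to which \refL{Llim} applies directly) from the extra tie-breaking randomness that defines $V$, and confirming in both the finite and the infinite Poisson models the independence of $Q_{-1}$ from $(X_0,V)$ needed to multiply the \pgf{s}.
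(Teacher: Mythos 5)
Your proposal is correct and follows essentially the same route as the paper: the same conditioning on $X_0\sim\Po(b\ga)$ for $\psiv$, the same independence observation to get $\psicrh=\psiq\psiv$, the same roots-of-unity extraction of $\floor{\crh/b}$ for $\psirh$, and convergence via \refL{Llim}/\refT{Tconv} plus exponential-moment control from \refL{Lexp} and the bound $V\le X_0$ (the paper combines the $V$ and $Q$ bounds via H\"older where you invoke Poisson domination of $\Bin(n-1,1/m)$, but this is only a cosmetic difference).
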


\begin{proof}
For the convergence, we consider the displacement of a new key added to
the hash table. By symmetry we may assume that the new key hashes to 0,
and then \eqref{drhcrh}--\eqref{crh} apply, 
where $Q_{-1}$ by  \eqref{q} is determined by 
$H_{-1}$ and $V$ is random, given the hash table, with a
distribution determined by $X_0$, and thus by $H_0$ and $Q_{-1}$, and thus by
$H_{-1}$ and $H_0$. Consequently, \refT{Tconv} applies, and yields the
convergence in distribution. (Since we consider adding a new key to the
table, 
this really proves
$\drh_{m,n+1}\dto\drh_\ga$, etc.; we may obviously replace $n$ by $n-1$ in
order to get the result.)

For the \pgf{s} and moments, note that for the exact model, 
for every $c>0$,
\begin{equation}
\E (1+c)^{ V_{m,n}}
\le \E (1+c)^{X_0} = \biggpar{1+\frac{c}m}^n \le e^{c n/m} \le e^{bc}.
\end{equation}
This together with \eqref{crh}, \refL{Lexp} and \Holder's
inequality yields, for some $c_1>0$ and  $C_1<\infty$,
\begin{equation}
  \E e^{c_1\crh_{m,n}}
\le \lrpar{\E e^{2c_1 V_{m,n}} \E e^{2c_1Q_{m,n}}}\qq \le C_1.
\end{equation}
The convergence of \pgf{s} and moments now follows from the convergence in
distribution, using $0\le\drh\le\crh$.

For the distributions for the Poisson model, note that
if $X_0=k\ge0$, then there are, together with the new key, $k+1$ keys
competing at 0, and the number $V=V_\ga$ of them that wins over the new key is
uniform on \set{0,\dots,k}. Thus
  \begin{equation}
	\begin{split}
	  \E (q^V\mid X_0=k)
= \frac{1+\dots+q^k}{k+1}
=\frac{1-q^{k+1}}{(k+1)(1-q)}.
	\end{split}
  \end{equation}
Hence, since $X_0\sim \Po(b\ga)$,
  \begin{equation*}
	\begin{split}
	  \E (q^V)
=\sumk \frac{(b\ga)^k}{k!}e^{-b\ga}\frac{1-q^{k+1}}{(k+1)(1-q)}
=\frac{e^{-b\ga}}{b\ga(1-q)}
\sumk \frac{(b\ga)^{k+1}-(b\ga q)^{k+1}}{(k+1)!},
	\end{split}
  \end{equation*}
yielding \eqref{psiv}. 
This, \eqref{crh} and \eqref{psiq} yields \eqref{psic}.
Finally, \eqref{drhcrh} then yields \eqref{psirh}, 
\cf{} \cite{Viola},
\end{proof}

\begin{corollary}\label{CRH}
As $m,n\to\infty$ with $n/bm\to\ga\in(0,1)$,
\begin{align}\label{ec}
\E \crh_{m,n} &\to \E\crh_\ga 
= \frac{1}{2(1-\ga)}-\frac{b}2+\sum_{\ell=1}^{b-1}\frac{1}{1-\zetal},
\\
\E \drh_{m,n} &\to \E\drh_\ga 
= \frac{1}{2b\ga}\biggpar{\frac{1}{1-\ga}-b-b\ga}
+\frac{1}{b\ga}\sum_{\ell=1}^{b-1}\frac{1}{1-\zetal}.
\label{ed}
\end{align}
\end{corollary}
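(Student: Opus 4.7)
The plan is to rely on the convergence of means established in \refT{TRH}, which reduces the task to computing the two limit expectations $\E\crh_\ga$ and $\E\drh_\ga$ in the \infpoi; with moment convergence in hand, there is no de-Poissonization issue to worry about. The strategy is to avoid differentiating the full \pgf{s} \eqref{psic} or \eqref{psirh} directly (which is algebraically messy because of the $\zetal$ factors), and instead to decompose each expectation into pieces that have already been computed.

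For \eqref{ec}, I would use the decomposition \eqref{crh}, $\crh_\ga = Q_{-1;\ga}+V_\ga$, with the two summands independent. By \refC{CEHQ},
\[
\E Q_\ga = \frac{1}{2(1-\ga)}-\frac{(1+\ga)b}{2}+\sum_{\ell=1}^{b-1}\frac{1}{1-\zetal},
\]
so only $\E V_\ga=\psiv'(1)$ remains. Writing $q=1+u$ and expanding \eqref{psiv},
\[
\psiv(1+u)=\frac{e^{b\ga u}-1}{b\ga u}=1+\frac{b\ga}{2}u+O(u^2),
\]
so $\E V_\ga = b\ga/2$. Adding gives $\E\crh_\ga = \E Q_\ga+b\ga/2$, and the terms $-(1+\ga)b/2 + b\ga/2 = -b/2$ collapse to produce exactly \eqref{ec}.

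For \eqref{ed} the cleanest route is not to touch \eqref{psirh}, but to invoke \refL{LED}, which gives the exact identity $\E\drh_{m,n}=(m/n)\E Q_{m,n}$ valid for every $(m,n)$ with $n\le bm$ (using that under the Robin Hood heuristic the displacement of a random key in the table has the same law as the marked displacement $\drh_{m,n}$). Passing to the limit as $m,n\to\infty$ with $n/bm\to\ga$, the left-hand side tends to $\E\drh_\ga$ by \refT{TRH} and the right-hand side tends to $(1/(b\ga))\E Q_\ga$ by \refC{CEHQ}, so
\[
\E\drh_\ga=\frac{1}{b\ga}\E Q_\ga
=\frac{1}{2b\ga}\Bigpar{\frac{1}{1-\ga}-b(1+\ga)}+\frac{1}{b\ga}\sum_{\ell=1}^{b-1}\frac{1}{1-\zetal},
\]
which matches \eqref{ed} since $b(1+\ga)=b+b\ga$.

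There is no real obstacle: both computations are short, and the only thing to verify carefully is the identification of $\drh_{m,n}$ with the $D_{m,n}$ of \refL{LED} under Robin Hood. This is immediate from the order-independence of the final table under RH together with the symmetry observation preceding \refF{rh1} that the last inserted key has the same distribution as any other key, so the limit passage is legitimate.
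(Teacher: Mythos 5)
Your proof is correct, and for the key formula \eqref{ed} it takes a genuinely different route from the paper's proof while coinciding with an alternative the paper itself flags. For \eqref{ec} you and the paper do essentially the same thing: both write $\crh_\ga = Q_{-1}+V_\ga$ with the two terms independent and reuse $\E Q_\ga$ from \refC{CEHQ}; the only cosmetic difference is that you extract $\E V_\ga = b\ga/2$ by Taylor-expanding $\psiv$ at $q=1$, whereas the paper gets it instantly from the symmetry $\E(V_\ga\mid X_0)=\tfrac12 X_0$. For \eqref{ed}, however, the paper's proof is noticeably heavier: it differentiates $\psirh$ as given in \eqref{psirh}, which produces a sum over $b$:th roots of unity $\sum_{j=1}^{b-1}\psicrh(\go^j)/(1-\go^{-j})$, and evaluates that sum by a contour-integral/residue argument. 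You bypass all of this by invoking \refL{LED}, the exact (finite-$m,n$) identity $\E\drh_{m,n}=(m/n)\E Q_{m,n}$, and then passing to the limit using the moment convergence from \refT{TRH} and \refC{CEHQ}. This is much shorter and is exactly the shortcut the authors themselves acknowledge in the remark immediately following the corollary ("The result presented in \eqref{ed} may also be directly derived from \refC{CEHQ} and \refL{LED}"). You are also right that the one thing to justify is identifying $\drh_{m,n}$ with the $D_{m,n}$ of \refL{LED}: this rests on the order-independence of the final Robin Hood table, which guarantees that the marked key $\bullet$ hashing to a fixed bucket has the same displacement distribution as a uniformly random key, and the paper establishes precisely this in \refS{SRH}. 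What the paper's residue computation buys in exchange for its extra effort is self-containedness within the probabilistic derivation of \eqref{psirh} (it never leaves the \pgf), and it generalizes to higher moments of $\drh$, for which no analogue of \refL{LED} is available; for the expectation alone your route is preferable.
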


\begin{proof}
In the \infpoi,  by symmetry, $\E (V_\ga\mid X_0)=\frac12X_0$, and thus 
$\E V_\ga=\frac12\E X_{0}=\frac12b\ga$.
Consequently, by \eqref{crh},
\begin{equation}
\E\crh_\ga=\E Q_\ga+\E V_\ga
=\E Q_\ga+\tfrac12b\ga,  
\end{equation}
which yields \eqref{ec} by \eqref{eq}.

For $\E \drh_\ga$ we differentiate \eqref{psirh}, 
for $j=0$ using $(1-q\qw)/(1-q^{-1/b})=1+q^{-1/b}+\dots+q^{-(b-1)/b}$,
and obtain
\begin{equation}\label{ahlin}
  \begin{split}
\E \drh_\ga = \psirh'(1)
=
\frac{1}b\psicrh'(1) 
- \frac{1}b\psicrh(1)\frac{b-1}2
+\frac1b\sum_{j=1}^{b-1} \psicrh \bigpar{\go^j}
\frac{1}{1-\go^{-j}}
  \end{split}
\end{equation}
where $\psicrh'(1)=\E\crh_\ga$ is given by \eqref{ec} and $\psicrh(1)=1$.
We compute the sum in \eqref{ahlin} as follows. (See the proof of
\cite[Theorem 14]{Viola} for an alternative method.) 
By \eqref{psic},
$\psicrh(\go^j)=-\frac{1-\ga}\ga p(\go^j)$ where 
$p(q):=\xfrac
{\prod_{\ell=1}^{b-1}\bigpar{q-\zetal}}
{\prod_{\ell=1}^{b-1}\bigpar{1-\zetal}}$ is a polynomial of degree $b-1$. 
Define 
\begin{equation}
f(q):=\frac{p(q)}{(q^b-1)(q-1)};
\end{equation}
then $f$ is a rational function with
poles at $\go^j$, $j=0,\dots,b-1$. Furthermore, $f(q)=O(|q|\qww)$ as
$|q|\to\infty$, so integrating $f(z)\dd z$ around the circle $|q|=R$ and
letting $R\to\infty$, the integral tends to 0 and by Cauchy's residue
theorem, the sum of the residues of $f$ is 0. The residue of $f$ at $q=\go^j$,
$j=1,\dots,b-1$ is
\begin{equation}
\frac{p(q)}{bq^{b-1}(q-1)} 
=
  \frac{p(\go^j)}{b(1-\go^{-j})}
\end{equation}
and the residue at the double pole $q=1$ is, after a simple calculation,
\begin{equation}
  -\frac{b-1}{2b}p(1)+\frac{1}b p'(1).
\end{equation}
Consequently,
\begin{equation}\label{talman}
  \begin{split}
\frac{1}b \sum_{j=1}^{b-1} \frac{\psicrh(\go^j)}{1-\go^{-j}}
&=
  -\frac{1-\ga}{\ga }\sum_{j=1}^{b-1} \frac{p(\go^j)}{b(1-\go^{-j})}
=\frac{1-\ga}{\ga}\biggpar{-\frac{b-1}{2b}p(1)+\frac{1}{b} p'(1)}
\\&
=\frac{1-\ga}{\ga b}\biggpar{-\frac{b-1}{2}
+ \sum_{j=1}^{b-1}\frac{1}{1-\zetal}}.
  \end{split}
\raisetag{1.5\baselineskip}
\end{equation}
We obtain \eqref{ed} by combining \eqref{ahlin},
\eqref{ec} and \eqref{talman} 
\end{proof}
\begin{remark}
The result presented in \eqref{ed} may also be directly derived from 
\refC{CEHQ} and \refL{LED}.
\end{remark}
\begin{remark}
The probabilities $\PP(\drh=k)$ can be obtained by extracting the
coefficients of $\psi_C$ (Theorem 13 in \cite{Viola}).
\end{remark}

\section{Block length}\label{SB}

We want to consider a ``random block''.
Some care has to be taken when defining this; for example, (as is well-known)
the block containing a given bucket is \emph{not} what we want.
(This would give a size-biased distribution, see \refT{TBH}.) 
We do this slightly differently in the combinatorial and probabilistic
approaches. 

\subsection{\Combinatorial}\label{SBC}
By symmetry, it suffices as usual to consider hash tables such that the
rightmost bucket is not full, and thus ends a block; we consider that block.
Thus, let
\begin{equation}\label{eqBL}
  B(z,w,q) := \sum_{m\ge0}\sum_{n\ge0}\sum_{k\ge0} 
B_{m,n,k} w^{bm} \frac{z^n}{n!} q^k,
\end{equation}
where $B_{m,n,k}$ is the number of hash tables with $m$ buckets and
$n$ keys such that 
the rightmost bucket is not full
and the last block has length $k$.

\begin{theorem}\label{TRBL}
  \begin{align}\label{trbl}
B(bz,w,q) = 
\frac{1-\prod_{j=0}^{b-1} \left(1 - \frac{T(\omega^jzwq^{1/b})}{z}\right)}
{\prod_{j=0}^{b-1} \left(1 - \frac{T(\omega^jzw)}{z}\right)}.
\end{align}
\end{theorem}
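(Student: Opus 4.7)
The plan is to decompose any non-empty hash table with non-full rightmost bucket as a (possibly empty) sequence of blocks followed by one distinguished final block, and to mark only the length of that last block by $q$. The sequence construction from Section~\ref{S:combin} then does all the work.

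First, the prefix (the blocks preceding the last one, with the empty prefix allowed) is counted by $\Lambda_0(bz,w) = 1/(1-N_0(bz,w))$, as in \eqref{laLambda0}. Second, for the final block I need the generating function for a single almost full table in which each bucket carries an additional weight $q$, so that the overall factor of $q$ equals the block length (number of buckets). In the definition of $N_0(bz,w)$, the variable $w$ is used solely to record bucket capacities: each bucket contributes the factor $w^b$ and nothing else. Therefore substituting $w\mapsto wq^{1/b}$ replaces $w^b$ by $w^b q$ for every bucket, so $N_0(bz,wq^{1/b})$ is exactly the generating function for a block weighted by $q^{(\text{length})}$.

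Concatenating the prefix with the marked final block (which is a product of the two multivariate generating functions) yields
\begin{equation*}
B(bz,w,q) \;=\; \frac{1}{1-N_0(bz,w)}\cdot N_0(bz,wq^{1/b}).
\end{equation*}
Inserting the explicit formula \eqref{N0} for $N_0$ in both the numerator and the denominator immediately gives \eqref{trbl}.

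The only step that needs a moment of care is the substitution $w\mapsto wq^{1/b}$: one must be sure that $w$ tracks bucket count (through capacity $b$) and not, say, the key count or a mixture of both. This is immediate from \eqref{laFd} and \eqref{N0}, since in $F_s(bzw)w^{b-s}$ and hence in $N_0(bz,w)$ the $w$'s come only from the $w^b$-weight per bucket, so $q^{1/b}$ raised to the $b$ gives precisely one $q$ per bucket. Once that is checked, no further computation is required.
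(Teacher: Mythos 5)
Your proposal is correct and follows the same route as the paper: decompose as a sequence of preceding blocks (giving the factor $\Lambda_0(bz,w)$) times a distinguished final block with each bucket reweighted by $q$ via the substitution $w\mapsto wq^{1/b}$ in $N_0$, yielding $B(bz,w,q)=\Lambda_0(bz,w)\,N_0(bz,wq^{1/b})$. Your explicit check that $w$ in $N_0$ enters only through the factor $w^{b(i+1)}$ per cluster (so that $q^{1/b}$ contributes exactly $q^{\text{block length}}$) is a useful elaboration of the paper's terse remark that ``the length of the block is marked by $w^b$''.
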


\begin{proof}
In an almost full table the length of the block is marked by $w^b$
in $N_0(bz,w)$. Then, in the combinatorial model,
the generating function $B(z,w,q)$ for the block length is,
using \eqref{N0} and \eqref{laLambda0},
\begin{align*}
B(bz,w,q) = \Lambda_0(bz,w) N_0(bz,wq^{1/b}) =
\frac{1-\prod_{j=0}^{b-1} \left(1 - \frac{T(\omega^jzwq^{1/b})}{z}\right)}
{\prod_{j=0}^{b-1} \left(1 - \frac{T(\omega^jzw)}{z}\right)}.
\end{align*}
\end{proof}

Let $B\mn$ be the length of a random block, chosen uniformly among all
blocks in all hash tables with $m$ buckets and $n$ keys.
This is the same as the length of the last block in a uniformly random
hash table such that the rightmost bucket is not full.
Recall that we denote the number of such hash tables by $Q_{m,n,0}$.

\begin{corollary}\label{CEB}
If\/ $0\le n< bm$, then  
\begin{equation}\label{ceb}
\E B\mn = \frac{m^n}{Q_{m,n,0}}.
\end{equation}
\end{corollary}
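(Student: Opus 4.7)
The identity will follow from a short double-counting argument that exploits the cyclic symmetry of the hashing model on $\bbZ_m$. By the very definition of $B_{m,n}$ (the length of the last block in a uniformly random hash table whose rightmost bucket is non-full),
\begin{equation*}
Q_{m,n,0}\cdot\E B_{m,n}=\sum_{k\ge1} k B_{m,n,k},
\end{equation*}
so it is enough to establish the identity $\sum_k k B_{m,n,k}=m^n$, after which \eqref{ceb} follows by dividing.

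To prove that identity, I would sum over all $m$ cyclic positions of the designated ``last bucket''. By the rotational symmetry of the index set $\bbZ_m$, for every bucket $i\in\{1,\dots,m\}$ the number of hash tables of length $m$ with $n$ keys in which bucket $i$ is non-full and the block terminating at $i$ has length $k$ is exactly $B_{m,n,k}$. Summing on $i$ and swapping the order of summation,
\begin{equation*}
m\sum_k k B_{m,n,k}
=\sum_{T}\;\sum_{i=1}^{m} \ell_T(i)\,\ett{\text{bucket $i$ of $T$ is not full}},
\end{equation*}
where the outer sum ranges over all $m^n$ hash tables $T$ of length $m$ with $n$ keys, and $\ell_T(i)$ denotes the length of the unique block of $T$ ending at bucket $i$ (when that bucket is non-full).

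Because each block of $T$ contains exactly one non-full bucket, namely its right endpoint, and because the blocks partition the $m$ buckets of $T$, the inner sum equals the total length of all blocks of $T$, which is $m$. Hence $m\sum_k k B_{m,n,k}=\sum_T m=m\cdot m^n$, so $\sum_k k B_{m,n,k}=m^n$ and \eqref{ceb} follows. The only point that calls for a bit of care is the bookkeeping convention that a cluster/block consists of a run of full buckets \emph{together with} its terminating non-full bucket, so that the $m$ buckets of any table are partitioned by its blocks and the per-table total block length is exactly $m$; this matches the definition from \refS{S:combin} and the parametrization of almost-full tables in \eqref{laFd}. No generating-function manipulation is required; everything is elementary counting.
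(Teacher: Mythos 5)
Your proof is correct and takes essentially the same approach as the paper: the paper's own argument is exactly the double-counting observation that the block lengths of any table sum to $m$ (giving total block length $m\cdot m^n$ over all tables), while the number of blocks ending at a fixed bucket is $Q_{m,n,0}$ (giving $m\cdot Q_{m,n,0}$ blocks in total), and dividing yields \eqref{ceb}. Your version just makes the cyclic-symmetry bookkeeping a bit more explicit; the paper also notes (and dismisses as more work) the alternative of differentiating the generating function \eqref{trbl} at $q=1$, which you likewise avoid.
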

\begin{proof}
This can be shown 
by taking the derivative at $q=1$ in \eqref{trbl} after some manipulations
similar to \eqref{LagT1}.
However, it is simpler to note that the sum of the block lengths in any hash
table is $m$, and thus the sum of the lengths of all blocks in all tables is
$m\cdot m^n$, while the number of blocks ending with a given bucket is
$Q_{m,n,0}$ and thus the total number of blocks is
$m\cdot Q_{m,n,0}$.
\end{proof}

\subsection{\Probabilistic}
For the probabilistic version, 
we consider one-sided infinite hashing on $\HT=\bbN$, with
$X_i\sim\Po(\ga b)$ \iid{} as above, and
let $B$ be the length of the first block, i.e.,
\begin{equation}
B=B_\ga:=\min\set{i\ge1:Y_i<b}
=\min\set{i\ge1:H_i<b}.  
\end{equation}

\begin{remark}\label{RB}
We consider here the first block in hashing on $\bbN$. 
Furthermore, since by definition
there is no overflow from  a block, the second block, the third block, and
so on all have the same distribution. 

Moreover, for our usual \infpoi{} on $\bbZ$, 
it is easy to see, using the independence of the $X_i$'s, that we obtain the
same distribution if we for any fixed $i$ 
condition on $H_i<b$, (\ie, on that a block ends at $i$), 
and then take the length of the block starting at $i+1$.

We also obtain the same distribution if
we fix any
$i$ and consider the length of the first (or second, \dots) block
\emph{after} $i$, or similarly the last block before $i$.
\end{remark}

Hence, $B$ is the first positive index $i$ such that the number of keys
$S_i=X_1+\dots+X_i$ hashed to the $i$ first buckets
is less than the capacity $bi$ of these buckets, \ie,
\begin{equation}\label{b2}
B = \min\set{i\ge1:S_i < bi}.
\end{equation}
(This also follows from \refL{LH}.)
In other words, if we consider the random walk
\begin{equation}
S_n':=S_n-bn =\sumin (X_i-b),
\end{equation}
the block length $B$ is the first time this random walk becomes negative.
Since $\E(X_i-b)=\ga b-b<0$, it follows from the law of large numbers that
\as{} $S_n'\to-\infty$ as $\ntoo$, and thus $B<\infty$.

Note also that $S_{B-1}'\ge0$, and thus $0>S'_B\ge -b$.
In fact, the number of keys that hash to the first $B$ buckets is
$S_B=S'_B+bB$, and since all buckets before $B$ are full and thus take
$(B-1)b$ keys, the number of keys in the final bucket of the block
is
\begin{equation}\label{yb}
Y_B=H_B=S_B-(B-1)b=S'_B+b \in\set{0,\dots,b-1}.
\end{equation}

Recall the numbers $\zetal(\zq) =\zetal(\zq;\ga)$ defined in
\eqref{zetalzq}.

\begin{theorem}\label{TB}
Let $0<\ga<1$. 
The \pgf{} $\psib(\zq):=\E \zq^{B}$ of $B=B_\ga$ is given by
\begin{equation}
\label{psib}
\psib(\zq) 
= 1-\prodlb \bigpar{1-\zetal(\zq)}
,
  \end{equation}
for $|\zq|\le R$ for some $R>1$, 
where
$  \zetal(\zq) 
=  \zetal(\zq;\ga) 
$
is given by \eqref{zetalzq}.

More generally, for $|\zq|\le R$ and $t\in\bbC$,
\begin{equation}\label{psiby}
\E \bigpar{\zq^B t^{Y_B}}
=
\E \bigpar{\zq^B t^{H_B}}
=t^b-\prodlb \bigpar{t-\zetal(\zq)}.
\end{equation}
\end{theorem}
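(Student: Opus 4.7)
The plan is to use a martingale argument that exploits precisely the algebraic condition defining the $\zetal(q)$. For fixed $q$ and any $\zeta\neq 0$, consider the complex-valued process
\begin{equation*}
M_n := q^n \zeta^{S_n'} = q^n \zeta^{S_n-bn}, \qquad M_0=1.
\end{equation*}
Since the $X_i$ are \iid{} $\Po(\ga b)$, a one-step computation gives $\E(M_n/M_{n-1}\mid \mathcal{F}_{n-1}) = q\zeta^{-b}\E\zeta^{X_n} = q\zeta^{-b}e^{\ga b(\zeta-1)}$, which equals $1$ exactly when $\zeta^b = qe^{\ga b(\zeta-1)}$. By \refL{Lzetalzq}, for any $\zeta=\zetal(q)$ with $0<|q|\le1$ this holds and $M_n$ is a genuine (complex) martingale.

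Next I would apply optional stopping at $B\wedge N$, which is trivial since $B\wedge N$ is bounded, and then let $N\to\infty$. On $\{B>N\}$ we have $S_n'\ge 0$ for $n\le N$, so $|\zeta^{S_N'}|\le 1$; on $\{B\le N\}$, the overshoot bound $-b\le S_B'\le -1$ (equivalently $Y_B\in\{0,\dots,b-1\}$) gives $|\zeta^{S_B'}|\le|\zeta|^{-b}$. Thus $|M_{B\wedge N}|\le|\zeta|^{-b}$ uniformly, and since $B<\infty$ a.s.\ (the random walk has negative drift $-(1-\ga)b$), dominated convergence yields $\E(q^B\zeta^{S_B'})=1$. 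Substituting $S_B'=Y_B-b$ and multiplying by $\zeta^b$ gives, for each $\ell=0,\dots,b-1$,
\begin{equation*}
\E\bigpar{q^B \zetal(q)^{Y_B}} = \zetal(q)^b.
\end{equation*}

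The third step converts this into the claimed polynomial identity. For fixed $q$ with $0<|q|\le1$, $\phi(t):=\E(q^B t^{Y_B})$ is a polynomial in $t$ of degree at most $b-1$, because $Y_B\in\{0,\dots,b-1\}$. Hence $t^b-\phi(t)$ is monic of degree exactly $b$, and the identity above shows it vanishes at the $b$ points $\zeta_0(q),\dots,\zeta_{b-1}(q)$, which are distinct by \refL{Lzetalzq}. Comparing leading coefficients gives \eqref{psiby}, and setting $t=1$ yields \eqref{psib}; the case $q=0$ is trivial since $q^B=0$ a.s.\ and both sides vanish.

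Finally, to extend from $|q|\le 1$ to $|q|\le R$ for some $R>1$, a Chernoff bound (as in \refL{Lexp}) shows $\Pr(B>n)$ decays exponentially at rate $-\log(\ga e^{1-\ga})^b=\log R$, so $\E|q|^B<\infty$ on a disc of radius strictly greater than $1$; both sides of \eqref{psiby} are then analytic in $q$ on such a disc, and agreement on $|q|\le 1$ extends by analytic continuation. The main obstacle is the careful justification of the optional stopping passage to the limit, and in particular the domination bound $|\zeta|^{-b}$, which requires $\zetal(q)\neq 0$; this is exactly the content of the non-degeneracy assertion in \refL{Lzetalzq} for $q\neq 0$.
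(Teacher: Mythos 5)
Your proof is correct and uses essentially the same martingale method as the paper. The paper introduces partial generating functions $f_k(q):=\E(q^B;\,S'_B=-k)$, applies the bounded-martingale convergence theorem to $Z_{n\wedge B}$ to obtain the linear system $\sum_{k=1}^b\zetal(q)^{-k}f_k(q)=1$, and solves it by recognising the associated Vandermonde-type polynomial; you bypass the $f_k$ by defining $\phi(t):=\E(q^B t^{Y_B})$ directly, noting it has degree at most $b-1$ because $Y_B\in\{0,\dots,b-1\}$, and identifying $t^b-\phi(t)$ as the monic polynomial with roots $\zeta_0(q),\dots,\zeta_{b-1}(q)$. These are the same argument in slightly different clothing; your formulation is marginally cleaner in that the polynomial identity comes out in one step, while the paper's has the minor bonus of exhibiting the individual $f_k(q)$ as elementary symmetric functions (a remark the paper records afterwards). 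Your optional-stopping-at-$B\wedge N$ plus dominated convergence is an equivalent substitute for the paper's appeal to bounded martingale convergence, and your explicit observation that the domination $|\zeta|^{-b}$ needs $\zetal(q)\neq0$ for $q\neq0$ is a correct and worthwhile point that the paper glosses over.
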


\begin{remark}
Related results in a case where $X_i$ are bounded but otherwise have an
arbitrary distribution are proved by a similar but somewhat different
method in  
\cite[Example XII.4(c) and Problem XII.10.13]{FellerII}.
\end{remark}

\begin{proof}
We consider separately the different possibilities for $S'_B$ (or
equivalently, see \eqref{yb},
for the number of keys in the final bucket of the block)
and define $b$ partial \pgf{s} $f_1(\zq),\dots,f_b(\zq)$ by 
  \begin{equation}\label{fk}
	f_k(\zq) := \E \bigpar{\zq^B \och{S'_B=-k}} 
= \sum_{n=1}^\infty \PP\bigpar{B=n \text{ and } S'_B=-k}\zq^n,
\quad k=1,\dots,b.
  \end{equation}

Let $\zeta$ and $\zq$ be non-zero
complex numbers with $|\zeta|,|\zq|\le1$, and define the complex random variables
\begin{equation}\label{ql}
Z_n := \zeta^{S'_n} \zq^n = \prodin (\zeta^{X_i-b}\zq),
\qquad n\ge0.
\end{equation}
We have, by \eqref{g},
\begin{equation}\label{qk}
  \E\bigpar{ \zeta^{X_i-b}\zq}
=\zeta^{-b}\zq \psix(\zeta)
=\zeta^{-b}\zq e^{\ga b(\zeta-1)}.
\end{equation}
Fix an arbitrary $\zq$ with $0<|\zq|\le1$ and
choose $\zeta=\zetal(\zq)$, noting $|\zeta|\le1$ by \refL{Lzetalzq}.
Then \eqref{zetalzq2} holds, and thus \eqref{qk}
reduces to  
$  \E\bigpar{ \zeta^{X_i-b}\zq}=1$. Since the random variables $X_i$ are
independent, it then follows from
\eqref{ql} that $(Z_n)_{n=0}^\infty$ is a martingale.
(See \eg{} \cite[Chapter 10]{Gut} for basic martingale theory.)
Moreover, \eqref{b2} shows that 
$B$ is a stopping time for the corresponding sequence of $\gs$-fields.
Hence also the stopped process 
$(Z_{n\bmin B})_{n=0}^\infty$ is a martingale.
Furthermore, for $n\le B$, we have $S'_n\ge -b$ and thus by \eqref{ql}
$|Z_n|\le |\zeta|^{-b}$, so the martingale
$(Z_{n\bmin B})_{n=0}^\infty$ is bounded.
Consequently, 
by a standard martingale result (see \eg{} \cite[Theorem 10.12.1]{Gut})
together with
 \eqref{fk},
\begin{equation}
  1=\E Z_0 = \E \lim_\ntoo Z_{n\bmin B} = \E Z_B = \E \zeta^{S_B'} \zq^B
=\sum_{k=1}^b \zeta^{-k} f_k(\zq).
\end{equation}
Thus, for any $\zq$ with $0<|\zq|\le1$, the $b$ different choices
$\zeta=\zetal(\zq)$ 
yield $b$ linear equations
\begin{equation}\label{fkb}
\sum_{k=1}^b \zetal(\zq)^{-k} f_k(\zq) =1,
\qquad \ell=0,\dots,b-1
\end{equation}
in the $b$ unknowns $f_1(\zq),\dots,f_b(\zq)$. 
Note that the coefficient matrix of this system of equations is
(essentially) a Vandermonde matrix, and since
$\zeta_0(\zq),\allowbreak \dots, \zeta_{b-1}(\zq)$ are distinct, 
its determinant is non-zero, 
so the system of equations \eqref{fkb} has a unique solution.

To find the solution explicitly,
let us temporarily define $f_0(\zq):=-1$. Then \eqref{fkb} can be written
\begin{equation}\label{fkbb}
\sum_{k=0}^b \zetal(\zq)^{-k} f_k(\zq) =0.
\end{equation}
 Define the polynomial
\begin{equation}\label{qm}
  p(t):=\sum_{k=0}^b f_k(\zq) t^{b-k};
\end{equation}
then by \eqref{fkbb}, $p(\zetal(\zq))=0$ and thus $p(t)$ has
the $b$ (distinct) roots 
$\zeta_0(\zq),\allowbreak\dots,\allowbreak\zeta_{b-1}(\zq)$; 
since further $p(t)$ has leading term $f_0(\zq)t^b=-t^b$, this implies
\begin{equation}\label{qj}
  p(t)=-\prodlb\bigpar{t-\zetal(\zq)}.
\end{equation}
Furthermore, by \eqref{yb} and \eqref{fk}, for any $t\in\bbC$,
\begin{equation}
  \E\bigpar{\zq^Bt^{Y_B}}
= \sum_{k=1}^b \E\bigpar{\zq^B \och{S'_B=-k}} t^{b-k}
= \sum_{k=1}^b f_k(\zq) t^{b-k}. 
\end{equation}
Hence, by \eqref{qm} and \eqref{qj},
\begin{equation}
  \E\bigpar{\zq^Bt^{Y_B}}
= p(t) -f_0(\zq) t^b
= p(t) + t^b
= t^b -\prodlb\bigpar{t-\zetal(\zq)}.
\end{equation}

This proves \eqref{psiby} for $0<|\zq|\le1$, and 
\eqref{psib} follows by taking $t=1$;
the results extend by
analyticity and continuity to $|\zq|\le R$.
\end{proof}

\begin{remark}
By identifying coefficients in \eqref{qj} (or \eqref{psiby}) we also obtain 
\begin{equation}
  f_k(\zq)=(-1)^{k-1} e_k\bigpar{\zeta_0(\zq),\dots,\zeta_{b-1}(\zq)},
\end{equation}
which gives the distribution of the length of blocks with a given number of
keys in the last bucket. In particular, taking $\zq=1$ and using
\eqref{yb} and \eqref{fk},
\begin{equation}
\PP(Y_B=b-k)=
\PP(S'_B=-k)=
  f_k(1)=(-1)^{k-1} e_k\bigpar{\zeta_0,\dots,\zeta_{b-1}}.
\end{equation}
By \eqref{ch} and \eqref{ch<b}, this says that $Y_B$ has the same
distribution as $(Y_i\mid Y_i<b)$, the number of keys placed in a
fixed bucket in hashing on $\bbZ$, conditioned on the bucket not being full.
This is (more or less) obvious, since the buckets that are not full are
exactly the last buckets in the blocks.
\end{remark}

\begin{corollary}\label{CB}
The random block length $B=B_\ga$ defined above has expectation
\begin{equation}\label{eb}
  \E B_\ga = \frac{1}{\tba}
\end{equation}
and variance, with $\zetal$ given by \eqref{zetal},
\begin{equation}\label{vb}
  \Var B_\ga = \frac{1}{b(1-\ga)^2T_0(b\ga)}
-  \frac{2}{b T_0(b\ga)}
 \sum_{\ell=1}^{b-1}\frac{\zetal}{(1-\zetal)(1-\ga\zetal)}
-\frac{1}{T_0(b\ga)^2}.
\end{equation}
\end{corollary}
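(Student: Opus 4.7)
The plan is to compute $\E B$ and $\Var B$ by differentiating the \pgf{} $\psib(q)$ from \eqref{psib} at $q=1$, using $\E B=\psib'(1)$ and $\Var B=\psib''(1)+\psib'(1)-\psib'(1)^2$. By \eqref{zeta0} only the $\ell=0$ factor in \eqref{psib} vanishes at $q=1$, so I would separate this factor and write
\begin{equation*}
1-\psib(q) = \bigpar{1-\zeta_0(q)}\,P(q),
\qquad
P(q):=\prod_{\ell=1}^{b-1}\bigpar{1-\zeta_\ell(q)}.
\end{equation*}
Fixing a branch of $q^{1/b}$ near $q=1$ with $1^{1/b}=1$, each $\zeta_\ell(q)$ is analytic in a neighbourhood of $q=1$ by \refL{Lzetalzq}, so the derivatives exist; and by \eqref{T00} we have the explicit value $P(1)=\prod_{\ell=1}^{b-1}(1-\zetal)=b(1-\ga)/\tba$.

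The derivatives $\zeta_\ell'(1)$ and $\zeta_0''(1)$ I would get by implicit differentiation of the logarithmic form of \eqref{zetalzq2}, namely $b\log\zeta_\ell(q)=\ga b(\zeta_\ell(q)-1)+\log q$. A first differentiation yields
\begin{equation*}
\zeta_\ell'(q)=\frac{\zeta_\ell(q)}{b q\bigpar{1-\ga\zeta_\ell(q)}},
\end{equation*}
so in particular $\zeta_0'(1)=1/(b(1-\ga))$ and $\zeta_\ell'(1)=\zetal/(b(1-\ga\zetal))$ for $\ell\ge 1$. A second differentiation of the same log-equation, evaluated at $q=1$ and using $\zeta_0(1)=1$, gives
\begin{equation*}
\zeta_0''(1)=\frac{1}{b^2(1-\ga)^3}-\frac{1}{b(1-\ga)}.
\end{equation*}
Differentiating $\psib(q)=1-(1-\zeta_0(q))P(q)$ once and evaluating at $q=1$ uses only $\zeta_0'(1)$ and $P(1)$, and immediately delivers the mean: $\psib'(1)=\zeta_0'(1)P(1)=1/\tba$, which is \eqref{eb}.

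For the variance, a second differentiation of $\psib$ gives $\psib''(1)=\zeta_0''(1)P(1)+2\zeta_0'(1)P'(1)$, and the logarithmic derivative of $P$ at $q=1$ is
\begin{equation*}
\frac{P'(1)}{P(1)} = -\sum_{\ell=1}^{b-1}\frac{\zeta_\ell'(1)}{1-\zetal}
= -\sum_{\ell=1}^{b-1}\frac{\zetal}{b(1-\ga\zetal)(1-\zetal)}.
\end{equation*}
Substituting the explicit values of $\zeta_0'(1)$, $\zeta_0''(1)$, $P(1)$, and $P'(1)$ into $\Var B=\psib''(1)+\psib'(1)-\psib'(1)^2$ and simplifying yields \eqref{vb}. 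The only real obstacle is bookkeeping: one must observe that $\zeta_0''(1)P(1)$ produces the term $1/(b(1-\ga)^2\tba)$ but also an extraneous $-1/\tba$, which is then exactly cancelled by $+\psib'(1)=1/\tba$ in the variance formula, leaving the three surviving terms displayed in \eqref{vb} (with $-1/\tba^2$ from $-\psib'(1)^2$).
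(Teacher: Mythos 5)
Your proposal is correct and takes essentially the same route as the paper: both compute $\E B_\ga=\psib'(1)$ and $\Var B_\ga=\psib''(1)+\psib'(1)-\psib'(1)^2$ by isolating the vanishing $\ell=0$ factor of the product in \eqref{psib}, and both obtain $\zetal'(1)$ and $\zeta_0''(1)$ from differentiating the fixed-point relation for $\zetal$ (the paper cites \eqref{zetal'}--\eqref{zetal''}, which agree with your values). Your explicit identification of the cancellation between the extraneous $-1/\tba$ in $\zeta_0''(1)P(1)$ and $+\psib'(1)$ is exactly the bookkeeping the paper elides with ``we omit the details.''
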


\begin{proof}
We have from \eqref{psib}, recalling that $\zeta_0(1)=1$ and using \eqref{ctt},
  \begin{equation}
	\E B_\ga = \psib'(1)=\zeta_0'(1) \prodlbi\bigpar{1-\zetal(1)}
=\zeta_0'(1) \frac{b(1-\ga)}{\tba}.
  \end{equation}
Furthermore, (logarithmic) differentiation of \eqref{zetalzq} yields,
using \eqref{T'},
\begin{equation}\label{zetal'}
  \zetal'(\zq) = \frac{\zetal(\zq)}{b\zq(1-\ga\zetal(\zq))},
\end{equation}
and in particular
\begin{equation}\label{zeta0'1}
  \zeta_0'(1) = \frac{1}{b(1-\ga)},
\end{equation}
and \eqref{eb} follows.

Similarly,
  \begin{equation}
	\E B_\ga (B_\ga-1) = \psib''(1)
=\zeta_0''(1) \prodlbi\bigpar{1-\zetal(1)}
-2\sum_{j=1}^{b-1}\zeta_0'(1)\zeta_j'(1)
\frac{\prodlbi\bigpar{1-\zetal(1)}}{1-\zeta_j(1)}
  \end{equation}
and \eqref{vb} follows after a calculation, using
\eqref{ctt}, \eqref{zetal'}--\eqref{zeta0'1} and, by \eqref{zetal'},
differentiation and \eqref{zetal'} again,
  \begin{equation}\label{zetal''}
	q\zetal''(q)+\zetal'(q)=\bigpar{q\zetal'(q)}'
=\frac{\zetal'(q)}{b(1-\ga\zetal(q))^2}
=\frac{\zetal(q)}{b^2q(1-\ga\zetal(q))^3}
.  \end{equation}
We omit the details.
\end{proof}

As said above, the length of the block $\hB_i$
containing a given bucket $i$ has a
different, size-biased distribution. We consider both the exact model and
the \infpoi, and use the notations $\hB_{m,n}$ and $\hB_\ga$ in our usual way.
We first note an analogue of \refL{Lexp}.

\newcommand\ccB{c_1}
\begin{lemma}\label{LexpB}
Suppose that $\ga_1<1$.
Then there exists $C$ and $\ccB>0$ such that $\E e^{\ccB \hB\mn} \le C$ 
for all $m$ and $n$ with $n/bm\le \ga_1$.
\end{lemma}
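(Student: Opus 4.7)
The approach is to prove an exponential tail bound $\PP(\hB\mn\ge k)\le C_0(k+1)e^{-\gk k}$ uniformly for $n/bm\le\ga_1$, from which finiteness of $\E e^{c_1\hB\mn}$ for small $c_1>0$ follows immediately by summation.

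By cyclic symmetry of $\bbZ_m$ we may identify $\hB\mn$ with the length of the block containing bucket $0$. Let $E_j^{(\ell)}$ be the event ``a block of length at least $\ell$ starts at bucket $j$'', equivalently $Q_{j-1}=0$ and $H_j,\dots,H_{j+\ell-2}\ge b$. Decomposing $\set{\hB\mn\ge k}$ according to the (unique) starting position $j\le 0$ of the block containing $0$, and noting that for $-(k-1)\le j\le 0$ a block of length $\ge k$ starting at $j$ automatically contains $0$, while for $j\le -k$ the block must have length $\ge 1-j>k$ in order to reach $0$, gives
\begin{equation*}
\PP(\hB\mn\ge k)
=\sum_{j=-(k-1)}^{0}\PP\bigpar{E_j^{(k)}}
+\sum_{j\le -k}\PP\bigpar{E_j^{(1-j)}}.
\end{equation*}

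On $E_j^{(\ell)}$, iterating \eqref{h}--\eqref{q} starting from $Q_{j-1}=0$ gives $\sum_{i=j}^{j+\ell-2}(X_i-b)=Q_{j+\ell-2}\ge 0$, so the event is contained in $\set{\sum_{i=j}^{j+\ell-2}X_i\ge(\ell-1)b}$. This sum is $\Bin(n,(\ell-1)/m)$--distributed with mean at most $(\ell-1)b\ga_1<(\ell-1)b$, so the same sort of Chernoff estimate as in the proof of \refL{Lexp} yields
\begin{equation*}
\PP\bigpar{E_j^{(\ell)}}\le e^{-\gk(\ell-1)}
\end{equation*}
for some constant $\gk=\gk(b,\ga_1)>0$. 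Substituting and summing the geometric series controls the second sum by a constant multiple of $e^{-\gk k}$, giving $\PP(\hB\mn\ge k)\le C_0(k+1)e^{-\gk k}$ for some $C_0=C_0(b,\ga_1)$. Hence, for any $c_1\in(0,\gk)$,
\begin{equation*}
\E e^{c_1\hB\mn}\le\sum_{k\ge 1}e^{c_1 k}\PP(\hB\mn\ge k)\le C_0\sum_{k\ge 1}(k+1)e^{(c_1-\gk)k}<\infty,
\end{equation*}
uniformly in $m,n$ with $n/bm\le\ga_1$, as required.

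The main obstacle is the first step: because $\hB$ is size-biased (long blocks are more likely to cover any fixed bucket), a naive comparison with the single-block-length tail from \refT{TB} would miss a polynomial prefactor. The explicit enumeration of block-start positions above turns the size bias into the harmless linear factor $k+1$; once that is in place, the remaining estimates mirror those already used in the proof of \refL{Lexp}.
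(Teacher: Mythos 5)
Your proof is essentially the same as the paper's: both identify $\hB\mn$ with the length of the block containing bucket $0$, decompose over the possible starting position $j$ of that block (the paper works with $\Pr(\hB\mn=k)$ and the $k$ choices $j\in\{-k+1,\dots,0\}$, you work with $\Pr(\hB\mn\ge k)$ and split off a geometric tail for $j\le -k$), apply the same Chernoff bound as in \refL{Lexp} to the event that the first $\ell-1$ buckets of the block are full, and sum. The one small slip is your final line: $\E e^{c_1\hB}\le\sum_{k\ge1}e^{c_1k}\Pr(\hB\ge k)$ is not quite an identity or a one-sided bound as written (a constant term is missing), but the correct relation $\E e^{c_1\hB}\le 1+\sum_{k\ge1}e^{c_1k}\Pr(\hB\ge k)$ works just as well and does not affect the conclusion.
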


\begin{proof}\newcommand\ccga{c}
Let again $\ccga:=b(1-\ga_1)>0$, so $b-n/m\ge \ccga$.
Consider the block containing bucket 0. If this block has length $k\ge2$
and starts at $j$, then 
$-k+1\le j\le 0$ (modulo $m$)
and $\sum_{i=j}^{j+k-2}(X_i-b)\ge0$. 
Hence, 
by a Chernoff bound as in \eqref{magn}, 
\begin{equation*}
  \begin{split}
\Pr(\hB_{m,n}=k)
&
\le  k \Pr\bigpar{S_{k-1}-(k-1)b\ge 0}
\le k\exp\bigpar{-2(k-1)\ccga^2}.
  \end{split}
\end{equation*}
The result follows with $\ccB=\ccga^2$.
\end{proof}

\begin{theorem}\label{TBH}
In the \infpoi,  
$\hB=\hB_\ga$ has the size-biased distribution
\begin{equation}\label{tbh}
  \Pr(\hB_\ga=k)=\frac{k\Pr(B_\ga=k)}{\E B_\ga}=T_0(b\ga)k\Pr(B_\ga=k)
\end{equation}
and thus the \pgf
\begin{equation}\label{psihb}
  \psihb(q) = T_0(b\ga) q \psib'(q)
=T_0(b\ga)q\sum_{\ell=0}^{b-1}\zetal'(q)\prod_{j\neq \ell}(1-\zeta_j(q)).
\end{equation}

Moreover, for the exact model,
as $m,n\to\infty$ with $n/bm\to\ga$,
$\hB\mn\dto \hB_\ga$
with convergence of all moments;
furthermore, 
for some $\gd>0$, the \pgf{} converges to $\psihb(q)$,
uniformly for $|q|\le1+\gd$.
\end{theorem}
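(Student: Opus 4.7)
The plan is to establish the size-biased formula \eqref{tbh} by a direct stationarity argument in the \infpoi{} on $\bbZ$, then derive \eqref{psihb} by differentiating \eqref{psib}, and finally transfer to the exact model via \refT{Tconv} together with the tail bound in \refL{LexpB}. For the first step, the block containing bucket~$0$ has length $k$ exactly when it starts at some $i_-\in\{-k+1,\dots,0\}$, \ie, $H_{i_--1}<b$, $H_{i_-},\dots,H_{i_-+k-2}\ge b$, and $H_{i_-+k-1}<b$. Summing over $i_-$ and using translation invariance,
\begin{equation*}
\PP(\hB_\ga=k)=k\cdot\PP\bigpar{H_{-1}<b,\,H_0\ge b,\dots,H_{k-2}\ge b,\,H_{k-1}<b}.
\end{equation*}
Given $H_{-1}<b$ (equivalently $Q_{-1}=0$), the quantities $H_0,H_1,\dots$ depend only on $X_0,X_1,\dots$, which are independent of the $X_j$ for $j\le-1$ that determine $H_{-1}$; hence, by \eqref{b2} and \refR{RB},
\begin{equation*}
\PP\bigpar{H_0\ge b,\dots,H_{k-2}\ge b,\,H_{k-1}<b \mid H_{-1}<b}=\PP(B_\ga=k).
\end{equation*}
Combined with $\PP(H_{-1}<b)=T_0(b\ga)$ from \refC{CH}, this yields $\PP(\hB_\ga=k)=k\,T_0(b\ga)\,\PP(B_\ga=k)$, which is \eqref{tbh} since $\E B_\ga=1/T_0(b\ga)$ by \eqref{eb}.

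Multiplying \eqref{tbh} by $q^k$ and summing yields $\psihb(q)=T_0(b\ga)\,q\,\psib'(q)$; differentiating \eqref{psib} gives $\psib'(q)=\sum_{\ell=0}^{b-1}\zetal'(q)\prod_{j\ne\ell}\bigpar{1-\zeta_j(q)}$, so \eqref{psihb} follows at once. For the exact model, the length $\hB$ of the block containing bucket~$0$ is a deterministic function of the profile, determined entirely by the values $H_i$ for $i$ between the nearest block-ends on either side of~$0$; these lie at finite distance almost surely in the \infpoi{}, so \refT{Tconv}\ref{tconvmn} applies and $\hB_{m,n}\dto\hB_\ga$. \refL{LexpB} provides a uniform exponential bound $\E e^{c_1\hB_{m,n}}\le C$, which combined with the convergence in distribution implies convergence of all moments and uniform convergence of the \pgf{}s on $|q|\le 1+\gd$ for any $\gd<e^{c_1}-1$.

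The main obstacle is the independence claim underlying the first step: one has to verify that conditional on $H_{-1}<b$, the block starting at~$0$ has exactly the distribution of $B_\ga$ and is genuinely independent of what happens to the left. This uses crucially the one-sided nature of linear probing (overflow flows only rightward), so that given $Q_{-1}=0$ the profile to the right of $-1$ is generated by a fresh copy of the hashing dynamics on $\bbN$ driven by $X_0,X_1,\dots$. Once this is granted, the remaining steps amount to routine bookkeeping and standard uniform-integrability arguments built on tools already in place.
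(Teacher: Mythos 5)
Your proof is correct and follows essentially the same route as the paper's. The paper's argument likewise decomposes $\PP(\hB_\ga=k)$ as a sum over the $k$ possible starting positions of the block containing bucket~$0$, invokes translation invariance and \refR{RB} (the conditional-independence observation that you spell out more explicitly via $Q_{-1}=0$ decoupling the right half-line from the left), and uses $\PP(H_i<b)=T_0(b\ga)$; the transfer to the exact model is handled exactly as you propose, via \refT{Tconv} and \refL{LexpB}.
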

\begin{proof}
  Consider the block containing bucket 0. This block has length $k$ if and
  only if there is some $i<0$ with $i\ge-k$ such that $H_i<b$ and the block
  starting at $i+1$ has length $k$ (and thus contains 0). 
Hence, using \refR{RB} and \eqref{ch<b},
\begin{equation}\label{hba}
  \begin{split}
\Pr(\hB=k)=\sum_{i=-k}^{-1}	\Pr(H_{-i}<b)\Pr(B=k)
=k T_0(b\ga) \Pr(B=k),
  \end{split}
\end{equation}
which together with \eqref{eb} shows \eqref{tbh}.
(Note also that summing \eqref{hba} over $k$ yields another proof of
\eqref{eb}.) 
The formulas \eqref{psihb} for the \pgf{} follow from \eqref{tbh} and
\eqref{psib}. 

The convergence in distribution of $\hB\mn$ follows from \refT{Tconv}.
Convergence of moments and \pgf{} then follows using \refL{LexpB}.
\end{proof}
  
\begin{corollary}\label{CBH}
As $m,n\to\infty$ with $n/bm\to\ga\in(0,1)$,
\begin{equation}\label{ebh}
\E\hB\mn\to\E\hB_\ga 
= \frac{1}{b(1-\ga)^2}
-  \frac{2}{b}
 \sum_{\ell=1}^{b-1}\frac{\zetal}{(1-\zetal)(1-\ga\zetal)}
.
\end{equation}
\end{corollary}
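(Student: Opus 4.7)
The plan is to reduce \eqref{ebh} to the size-biasing identity already encoded in \refT{TBH} and then quote the moment computations carried out in the proof of \refC{CB}; no new analytic work is needed.

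First I would start from the size-biased representation \eqref{tbh}, which immediately yields the moment identity
\begin{equation*}
\E\hB_\ga = \sum_{k\ge 1} k\cdot\Pr(\hB_\ga=k) = T_0(b\ga)\sum_{k\ge 1}k^2\Pr(B_\ga=k) = T_0(b\ga)\,\E B_\ga^2.
\end{equation*}
Equivalently (and this is the form I would actually use),
\begin{equation*}
\E\hB_\ga = \frac{\E B_\ga^2}{\E B_\ga} = T_0(b\ga)\bigpar{\Var B_\ga + (\E B_\ga)^2},
\end{equation*}
since $\E B_\ga=1/T_0(b\ga)$ by \eqref{eb}.

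Next I would plug in the two ingredients from \refC{CB}. The square of the expectation contributes $T_0(b\ga)(\E B_\ga)^2 = 1/T_0(b\ga)$, while multiplying \eqref{vb} through by $T_0(b\ga)$ gives
\begin{equation*}
T_0(b\ga)\Var B_\ga = \frac{1}{b(1-\ga)^2}-\frac{2}{b}\sum_{\ell=1}^{b-1}\frac{\zetal}{(1-\zetal)(1-\ga\zetal)} - \frac{1}{T_0(b\ga)}.
\end{equation*}
Adding the two pieces, the $\pm 1/T_0(b\ga)$ terms cancel and the asserted formula drops out immediately.

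The convergence $\E\hB_{m,n}\to\E\hB_\ga$ in the exact model is not a separate step: \refT{TBH} already provides convergence in distribution of $\hB_{m,n}$ to $\hB_\ga$ together with convergence of all moments (via \refL{LexpB}). There is no genuine obstacle here; the only mild nuisance is bookkeeping the algebraic cancellation of the $1/T_0(b\ga)$ terms, which is why I prefer the form $T_0(b\ga)(\Var B_\ga + (\E B_\ga)^2)$ over expanding $T_0(b\ga)\E B_\ga^2$ directly from a raw computation of $\psib''(1)$.
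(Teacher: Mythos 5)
Your proof is correct and follows essentially the same route as the paper: use \refT{TBH} for convergence and the identity $\E\hB_\ga=\E B_\ga^2/\E B_\ga$, then substitute the formulas from \refC{CB}. The paper simply states "The result follows by \refC{CB}" without spelling out the algebraic cancellation of the $1/T_0(b\ga)$ terms, which you carry out explicitly.
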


\begin{proof}
  The convergence follows by \refT{TBH}, which also implies
\begin{equation}\label{ebhb}
\E\hB_\ga = \sum_k k \Pr(\hB_\ga=k)
=\sum_k \frac{k^2\Pr(B_\ga=k)}{\E B_\ga}
= \frac{\E B_\ga^2}{\E B_\ga}.
\end{equation}
The result follows by \refC{CB}.
\end{proof}

Recall $B\mn$ defined in \refS{SBC}.
\begin{theorem}\label{TBHmn}
In the exact model,
$\hB\mn$ has the size-biased distribution
\begin{equation}\label{tbhmn}
  \Pr(\hB\mn=k)=\frac{k\Pr(B\mn=k)}{\E B\mn}=\frac{Q_{m,n,0}}{m^n}k\Pr(B\mn=k).
\end{equation}
\end{theorem}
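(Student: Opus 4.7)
The plan is to mimic the probabilistic argument of \refT{TBH} by a direct combinatorial counting in the exact model. By the uniform distribution on hash addresses and the rotational invariance of linear probing on $\bbZ_m$, it suffices to compute $\Pr(\hB\mn=k)$ with the distinguished bucket taken to be bucket $0$. Thus
$$\Pr(\hB\mn=k) = m^{-n}\cdot \#\bigset{\text{hash tables with $m$ buckets, $n$ keys such that the block containing bucket } 0 \text{ has length } k}.$$

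The key observation is that a block of length $k$ occupies $k$ cyclically consecutive buckets $\set{i-k+1,\dots,i}\pmod m$, where $i$ is the last bucket in the block; such a block contains bucket $0$ if and only if $i\in\set{0,1,\dots,k-1}$. Hence
$$\#\bigset{\text{block of bucket } 0 \text{ has length } k} = \sum_{i=0}^{k-1}\#\bigset{\text{block ending at bucket } i \text{ has length } k}.$$
By rotational symmetry on $\bbZ_m$, every summand equals the count for $i$ being the rightmost bucket, which is exactly $B_{m,n,k}$ in the notation of \refS{SBC}; and $B_{m,n,k}=Q_{m,n,0}\Pr(B\mn=k)$ by the definition of $B\mn$ as the length of the last block in a uniformly random hash table whose rightmost bucket is not full. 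Therefore
$$\Pr(\hB\mn=k) = \frac{k B_{m,n,k}}{m^n} = \frac{k Q_{m,n,0}\Pr(B\mn=k)}{m^n},$$
and the size-biased form $k\Pr(B\mn=k)/\E B\mn$ follows by substituting $\E B\mn = m^n/Q_{m,n,0}$ from \refC{CEB}.

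There is no substantive obstacle here: the proof is just careful bookkeeping combined with rotational symmetry and the identity from \refC{CEB}. The only mild pitfall is the cyclic nature of the block decomposition, but this causes no genuine trouble since $n<bm$ guarantees by \refL{LH}\ref{LHm} that at least one bucket is non-full, so the block structure (and the block containing bucket $0$) is unambiguously defined.
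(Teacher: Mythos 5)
Your proof is correct and follows essentially the same route as the paper's: both exploit rotational symmetry on $\bbZ_m$ to relate the block containing a fixed bucket to a uniformly chosen block, obtaining the size-biasing, and then invoke \refC{CEB} for the normalization $\E B\mn = m^n/Q_{m,n,0}$. The paper states this in one terse sentence ("We may instead let $i$ be a random bucket, which means that among all blocks in all hash tables, each block is chosen with probability proportional to its length"), whereas you spell out the same counting explicitly by summing over the $k$ possible ending positions of the block containing bucket $0$; the two arguments are equivalent.
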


\begin{proof}
$\hB\mn$ is defined as the length of the block containing a given bucket
$i$. We may instead let $i$ be a random bucket, which means that among all
blocks in all hash tables, each block is chosen with probability
proportional to its length. The result follows by \eqref{ceb}.
\end{proof}

\begin{theorem}\label{TBmn}
For the exact model,
as $m,n\to\infty$ with $n/bm\to\ga$,
$B\mn\dto B_\ga$
with convergence of all moments.
\end{theorem}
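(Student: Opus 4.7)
The plan is to leverage the size-biasing identities of \refT{TBHmn} and \refT{TBH} to pass from the already-established convergence $\hB\mn\dto\hB_\ga$ (with all moments) of \refT{TBH} down to the analogous result for $B\mn$. Solving the size-bias relations of those two theorems for $\Pr(B=k)$ gives
\begin{equation*}
\Pr(B\mn=k)=\frac{\E B\mn}{k}\Pr(\hB\mn=k),\qquad
\Pr(B_\ga=k)=\frac{\E B_\ga}{k}\Pr(\hB_\ga=k),
\end{equation*}
so the pointwise convergence $\Pr(B\mn=k)\to\Pr(B_\ga=k)$, and hence $B\mn\dto B_\ga$, will follow once we establish $\E B\mn\to\E B_\ga=1/\tba$ (using \refC{CB}).

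The main auxiliary step is therefore $\E B\mn\to 1/\tba$. By \refC{CEB}, $\E B\mn=m^n/Q_{m,n,0}$, so it suffices to show $Q_{m,n,0}/m^n\to\tba$. By the cyclic symmetry in $\bbZ_m$, $Q_{m,n,0}/m^n$ equals the probability that any fixed bucket (say bucket~$0$) is not full, \ie, $\Pr(H_{m,n;0}<b)$. This is a Boolean property depending only on the single coordinate $H_0$ of the profile, so the continuity hypothesis of \refT{Tconv} is trivially satisfied, and \refC{Cconv}\ref{cconvmn} together with \eqref{ch<b} yields
\begin{equation*}
Q_{m,n,0}/m^n=\Pr(H_{m,n;0}<b)\to\Pr(H_{\ga;0}<b)=\tba,
\end{equation*}
which gives $\E B\mn\to 1/\tba=\E B_\ga$ and hence $B\mn\dto B_\ga$.

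For convergence of all moments, the same size-biasing identity yields the clean relation
\begin{equation*}
\E B\mn^r=\E B\mn\cdot\E\hB\mn^{r-1},\qquad r\ge1,
\end{equation*}
obtained by summing $k^r\cdot(\E B\mn/k)\Pr(\hB\mn=k)$ over~$k$, and analogously $\E B_\ga^r=\E B_\ga\cdot\E\hB_\ga^{r-1}$. Since $\E B\mn\to\E B_\ga$ by what was just shown and $\E\hB\mn^{r-1}\to\E\hB_\ga^{r-1}$ by the moment convergence supplied by \refT{TBH}, we conclude $\E B\mn^r\to\E B_\ga^r$ for every $r\ge1$. The only genuine analytic input beyond bookkeeping with the size-bias relation is the de-Poissonized limit $Q_{m,n,0}/m^n\to\tba$; this is the step I expect to be the main (and only mildly nontrivial) obstacle, but it fits directly into the general framework of \refT{Tconv} because the defining property depends on just one coordinate of the profile.
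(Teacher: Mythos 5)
Your proof is correct, and it reaches the same pair of conclusions (distributional and moment convergence) via the same size-biasing identities of \refT{TBHmn} and \refT{TBH}, but the key intermediate step $\E B\mn\to\E B_\ga$ is established differently. The paper proves it internally from the size-bias relation: it writes $1/\E B\mn=\sum_{k\ge1}\Pr(\hB\mn=k)/k$, invokes the equivalence of convergence in distribution and in total variation for integer-valued random variables, and passes to the limit in that sum. You instead go external: via \refC{CEB} you reduce the claim to $Q_{m,n,0}/m^n\to\tba$, identify $Q_{m,n,0}/m^n$ with the probability $\Pr(H_{m,n;0}<b)$ that a fixed bucket is not full, and then invoke \refC{Cconv}\ref{cconvmn} together with \eqref{ch<b} (or equivalently the last sentence of \refC{CH}). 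Your route is more modular and reuses the machinery already built; the paper's route is self-contained within the size-bias framework and avoids the combinatorial identity $\E B\mn=m^n/Q_{m,n,0}$. The moment argument $\E B\mn^r=\E B\mn\cdot\E\hB\mn^{r-1}$ and its limit is identical to the paper's.
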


\begin{proof}
By \refT{TBH}, $\hB\mn\dto\hB_\ga$ and thus by \eqref{tbhmn}
and \eqref{tbh},
\begin{equation}\label{tbx}
\frac{k\Pr(B\mn=k)}{\E B\mn}
=  \Pr(\hB\mn=k)
\to  \Pr(\hB_\ga=k)=\frac{k\Pr(B_\ga=k)}{\E B_\ga}.
\end{equation}
Furthermore, it is well-known that for integer-valued random variables,
convergence in distribution is equivalent to convergence in total
variation, \ie{} (in this case)
$\sum_k |\Pr(\hB\mn=k)-\Pr(\hB_\ga=k)|\to0$.
Consequently,
using \eqref{tbhmn} and \eqref{tbh} again,
\begin{equation}\label{tbxy}
\frac{1}{\E B\mn}
=\sumki\frac{1}k  \Pr(\hB\mn=k)
\to \sumki\frac{1}k \Pr(\hB_\ga=k)
=\frac{1}{\E B_\ga},
\end{equation}
and thus $\E B\mn\to\E B_\ga$. This and \eqref{tbx} yield 
$ \Pr(B\mn=k)\to  \Pr(B_\ga=k)$ for all $k\ge1$, \ie{} $B\mn\dto B_\ga$.
The moment convergence follows from the moment convergence in \refT{TBH},
since, generalizing \eqref{ebhb},
$\E B\mn^r = \E\hB\mn^{r-1}\E B\mn$ and
$\E B_\ga^r = \E\hB_\ga^{r-1}\E B_\ga$ 
for all $r$.
\end{proof}

\section{Unsuccessful search}\label{SU}
We consider the cost $U$ of an unsuccessful search. For convenience, we define
$U$ as the number of \emph{full} buckets that are searched, noting that the
total number of inspected buckets is $U+1$. 

Note also that $U$ is the
displacement of a new key inserted using the FCFS rule; 
thus $U$ can be seen as the cost
of inserting (and in the case of FCFS also retrieving) the
$n+1$st key. This approach is taken in Section \ref{SFC}.

\subsection{\Combinatorial} \label{SU-c}
Let 
\begin{equation}\label{eqUN}
  U(z,w,q) := \sum_{m\ge1}\sum_{n\ge0}
u_{m,n}(q) w^{bm} \frac{(mz)^n}{n!},
\end{equation}
where $u_{m,n}(q)$ is the probability generating function of the
cost $U$
of a unsuccessful search
in a hash table with $m$ buckets and
$n$ keys.
(We define $u_{m,n}(q)=0$ when $n\ge bm$.)

\begin{theorem}\label{TU1}
\begin{equation}\label{tu1}
  \begin{split}
U(bz,w,q) 
&= 
\Lambda_0(bz,w)
\frac{N_0(bz,w)-N_0(bz,wq^{1/b})}{1-q} 
\\&=
\frac{\prod_{j=0}^{b-1} \left(1-\frac{T(\omega^jzwq^{1/b})}{z}\right)
-\prod_{j=0}^{b-1} \left(1 - \frac{T(\omega^jzw)}{z}\right)}
{(1-q)\prod_{j=0}^{b-1} \left(1 - \frac{T(\omega^jzw)}{z}\right)}.	
  \end{split}
\end{equation}
\end{theorem}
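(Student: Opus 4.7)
The plan is to apply the sequence-of-clusters framework of \refS{S:combin}, namely equations \eqref{Gen-mn1}--\eqref{Gen-mn2} in the multi-valued form permitted by \refR{RGen}, to the cost $U$ of an unsuccessful search. The cost depends only on which cluster the hash address $h(y)$ falls in and on its position within that cluster: by translation symmetry we may reduce to the case that $h(y)$ lies in the \emph{last} cluster of the table, and then sum over all possible hash positions within that cluster. If this last cluster has length $i+1$ and $bi+d$ keys (so $i$ full buckets followed by a bucket with $d$ keys, $0\le d<b$), then hashing to its $p$-th bucket ($1\le p\le i+1$) causes the search to probe $i+1-p$ full buckets before reaching the non-full end; summing $q^U$ over the $i+1$ possible hash positions therefore yields
\begin{equation*}
  p_{bi+d}(q)=\sum_{p=1}^{i+1}q^{i+1-p}=\frac{1-q^{i+1}}{1-q}.
\end{equation*}

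Next, insert this $p_{bi+d}(q)$ into \eqref{Gen-mn2}. The numerator $1-q^{i+1}$ splits: the constant $1$ part rebuilds $N_0(bz,w)$ directly via \eqref{laFd} and \eqref{N0}, while the $q^{i+1}$ part is absorbed into the $w$-weighting through $w^{b(i+1)}q^{i+1}=(wq^{1/b})^{b(i+1)}$, producing $N_0(bz,wq^{1/b})$. Consequently
\begin{equation*}
  \hat N_0(bz,w,q)=\frac{N_0(bz,w)-N_0(bz,wq^{1/b})}{1-q},
\end{equation*}
which is exactly the $q$-calculus operator $\HH$ of \eqref{qq3} applied to $N_0(bz,w)$, the $q$-analogue of the ``mark a position'' construction from \refS{q-calculus}.

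Finally, \eqref{Gen-mn1} yields $U(bz,w,q)=\Lambda_0(bz,w)\hat N_0(bz,w,q)$, which is the first line of \eqref{tu1}; the weighting $(mz)^n/n!\,w^{bm}$ in the definition \eqref{eqUN} of $U$ is precisely what absorbs the $m$-fold cyclic overcount between summing $q^{U(T,i)}$ over all labelled tables and all hash positions, versus summing only over tables with last bucket non-full and positions within the last cluster. The second line of \eqref{tu1} is then obtained by substituting the explicit product expressions \eqref{laLambda0} for $\Lambda_0(bz,w)$ and \eqref{N0} for $N_0(bz,w)$ (with $w\to wq^{1/b}$ in the second) and combining over a common denominator. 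The main conceptual step is identifying the per-cluster generating function $p_{bi+d}(q)$; everything else is symbolic bookkeeping directly from the $q$-calculus toolkit.
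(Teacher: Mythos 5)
Your proposal is correct and follows essentially the same route as the paper: both apply the sequence-of-clusters decomposition \eqref{Gen-mn1}--\eqref{Gen-mn2} (in the non-probability form allowed by \refR{RGen}) to the last cluster, with $p_{bi+d}(q)=[i+1]_q=\frac{1-q^{i+1}}{1-q}$ coming from the $i+1$ possible hash positions; the only cosmetic difference is that the paper packages the last-cluster factor as the $q$-analogue $\Pos(\mathcal{C})$ construction (i.e., $\HH[N_0]$), whereas you write out $p_{bi+d}(q)$ explicitly and then recognize the resulting sum as $\HH[N_0]$.
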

\begin{proof}
$U(z,w,q)$ is  the trivariate generating function of the number of
hash tables (with $m$ buckets and $n$ keys)
where a new key added at a fixed bucket, say 0, ends up with
some displacement $k$.  (The variable $q$ marks this displacement.)
By symmetry, this number is the same as the number of hash tables
where a key added
to any bucket ends up in a fixed bucket, say the last, with
displacement $k$.
Since this implies that the last bucket is not full 
(before the addition of the new key), we can use the sequence construction
of hash tables in \refS{S:combin}; we then allow the new key to hash to any
bucket in the last cluster.
By \refR{RGen} and equations \eqref{Gen-mn1} and \eqref{Gen-mn2}, it is 
thus enough to study the problem in a cluster.

Let $\mathcal{C}$ be a combinatorial class representing a cluster.
In a  cluster with $m$ buckets, the number of visited full buckets  in a
unsuccessful search ranges from 0 to $m-1$. 
As a consequence, in the combinatorial model,
the specification $\Pos(\mathcal{C})$ 
represents the displacements in the last cluster;
by the argument above, the displacement  $U$ is thus represented by
$\Seq(\mathcal{C})*\Pos(\mathcal{C})$. 
By \eqref{Gen-mn1} and \refR{RGen},
this leads, using equations
\eqref{qq2} and \eqref{qq3},
to \eqref{tu1}.
\end{proof}

This result is also derived in \cite[Lemma 4.2]{Bigbuck}.

\subsection{\Probabilistic}
Let $U_i\ge0$ denote the number of
{full} buckets that we search
in an unsuccessful search for a key that
does not exist in the hash table, 
when we start with bucket $i$. 
Thus $U_i=k-i$ where $k$ is the index of the bucket that ends the block
containing $i$.

In the probabilistic version, 
we consider again the \infpoi{} on $\bbZ$, with $X_i\sim\Po(\ga b)$ independent.
Obviously, all $U_i$ have the same distribution, so we may take $i=0$; we
also use the notation $U_\ga=U_0$ for this model. We similarly use $U\mn$
for the exact model.
\begin{theorem}\label{TU}
In the \infpoi,
the probability generating function 
of\/ $U_\ga$ is
given by
\begin{equation}\label{psiu}
  \psiu(\zq) = \frac{\tba}{1-\zq}\prodlb\lrpar{1-\zetal(\zq)}.
\end{equation}

Moreover, for the exact model,
as $m,n\to\infty$ with $n/bm\to\ga$,
$U\mn\dto U_\ga$
with convergence of all moments;
furthermore, 
for some $\gd>0$, the \pgf{} converges to $\psiu(q)$,
uniformly for $|q|\le1+\gd$. 
\end{theorem}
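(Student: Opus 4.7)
My plan is to express $U_\ga$ directly in terms of the size-biased block length $\hB_\ga$ analyzed in \refS{SB}, thereby reducing the pgf computation to a one-line identity built from \eqref{tbh} and \eqref{psib}. The key geometric observation is that bucket $0$ belongs to a unique block in the doubly-infinite model, and $U_\ga$ equals the number of full buckets between bucket $0$ and the right end of that block, so $U_\ga=\hB_\ga-K$, where $K\in\set{1,\dots,\hB_\ga}$ is the position of bucket $0$ counted from the left end of the block. By stationarity of the \infpoi{} the probability $p_L$ that bucket $a$ starts a block of length $L$ does not depend on $a$, hence conditionally on $\hB_\ga=L$ the left endpoint of the block containing $0$ is uniform on $\set{1-L,\dots,0}$, i.e.\ $K$ is uniform on $\set{1,\dots,L}$, and therefore $U_\ga\mid\hB_\ga=L$ is uniform on $\set{0,1,\dots,L-1}$.

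Granting this, the pgf computation is short:
\begin{equation*}
\psiu(\zq)
= \E\Bigsqpar{\frac{1-\zq^{\hB_\ga}}{\hB_\ga(1-\zq)}}
= \frac{\tba}{1-\zq}\sum_{L\ge 1}\bigpar{1-\zq^L}\Pr(B_\ga=L)
= \frac{\tba\bigpar{1-\psib(\zq)}}{1-\zq},
\end{equation*}
where the middle step substitutes the size-biasing formula \eqref{tbh} and cancels the factor $L$. Combining this with $1-\psib(\zq)=\prodlb\bigpar{1-\zetal(\zq)}$ from \eqref{psib} yields \eqref{psiu}. As an alternative route I would note that \refT{Tconv}\ref{tconvres} applied to the combinatorial generating function \eqref{tu1} reaches the same formula: the only pole of $U(b\ga,y^{1/b}e^{-\ga},q)$ at $y=1$ comes from the $\ell=0$ factor in the denominator, and its residue is computed using $\zeta_0(1)=1$ and $\zeta_0'(1)=1/(b(1-\ga))$ from \eqref{zeta0'1}. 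I would present the probabilistic derivation, since it mirrors the spirit of \refS{SB}.

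For the exact model, $U\mn$ is a function of the profile (specifically $U\mn=\min\set{k\ge 0:H\mni<b}$, taking the starting bucket $i=0$ by symmetry), and almost surely $U_\ga$ is determined by $H\gai$ for $|i|\le N$ for some random $N$, because the block containing bucket $0$ is a.s.\ finite in the \infpoi. Thus \refT{Tconv}\ref{tconvmn} immediately gives $U\mn\dto U_\ga$.

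To upgrade this to convergence of all moments and to uniform convergence of the \pgf{s} on a complex neighbourhood of $1$, I will use the pointwise bound $U\mn\le\hB\mn$ together with \refL{LexpB}, which supplies a uniform exponential moment bound $\E e^{c\hB\mn}\le C$ whenever $n/bm\le\ga_1<1$, and hence the same bound for $U\mn$. A standard analyticity-plus-uniform-bound argument then upgrades convergence in distribution into uniform convergence of the \pgf{s} on $|\zq|\le1+\gd$ for some $\gd>0$, together with convergence of all moments. The only truly delicate point in this plan is the stationarity argument establishing uniformity of $K$ given $\hB_\ga$; once that is secured, the rest is essentially bookkeeping with tools already developed in the paper.
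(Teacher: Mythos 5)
Your proof is correct, and it reaches the paper's key intermediate formula $\psiu(\zq) = \tba\bigpar{1-\psib(\zq)}/(1-\zq)$ by a mildly different route. The paper computes $\PP(U_0=k)$ directly by decomposing over the starting point $i+1$ of the block containing bucket~$0$: using \refR{RB} and \eqref{ch<b} it finds $\PP(U_0=k)=\cttx\PP(B>k)$, and then sums the geometric series to get the \pgf{}. You instead route through the size-biased block length $\hB_\ga$ established in \refT{TBH}: by translation invariance, bucket~$0$ is uniformly positioned inside the block containing it, so $U_\ga\mid\hB_\ga=L$ is uniform on $\set{0,\dots,L-1}$, and the size-biasing identity \eqref{tbh} cancels the factor $L$. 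The two derivations rest on the same stationarity/renewal structure (indeed \refT{TBH} and the paper's proof of \refT{TU} share the same block-starting-point decomposition); yours makes the inspection-paradox logic explicit, which is conceptually clean, while the paper's is slightly more self-contained in that it does not lean on \refT{TBH}. Your treatment of the exact model, via \refT{Tconv}\ref{tconvmn}, \refL{LexpB}, and the pointwise bound $U\mn\le\hB\mn$, matches the paper's, and your alternative route through \eqref{tu1} and \refT{Tconv}\ref{tconvres} is exactly the paper's own remark following \refT{TU}.
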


\begin{proof}
This is similar to the proof of \refT{TBH}.
By the comments just made, $U_0=k$ if and only if there exists
$i\le-1$ such that $H_i<b$, and thus a block ends at $i$, and the block
beginning at $i+1$ ends at $k$, and thus has length $k-i$.
Consequently, using \refR{RB} and \eqref{ch<b},
\begin{equation}\label{puk}
  \begin{split}
\PP(U_0=k)&
=\sum_{i=-\infty}^{-1} \PP(H_i<b)\PP(B=k-i)	
=\cttx\sum_{i=-\infty}^{-1} \PP(B=k-i)	
\\&
= \cttx\PP(B>k),
\qquad k\ge0.
  \end{split}
\raisetag{\baselineskip}
\end{equation}
Hence, the \pgf{} $\psiu(\zq)$  is given by
\begin{equation}\label{pj}
  \begin{split}
\psiu(\zq)&:=\E \zq^{U_0}
=\sum_{k=0}^\infty \cttx\PP(B>k)\zq^k	
=\sum_{k=0}^\infty \sum_{j>k}\cttx\PP(B=j)\zq^k	
\\&\phantom:
=\cttx\sum_{j=1}^\infty\PP(B=j)\sum_{k=0}^{j-1}\zq^k	
=\cttx\sum_{j=1}^\infty\PP(B=j)\frac{1-\zq^j}{1-\zq}
\\&\phantom:
=\cttx\frac{1-\psib(\zq)}{1-\zq}.
  \end{split}
\raisetag{\baselineskip}
\end{equation}
The result \eqref{psiu} now follows by \eqref{psib}.  

As ususal, the final claim follows by \refT{Tconv} together with a uniform
estimate, which in this case comes from \refL{LexpB} and the 
bound $U_i\le \hB_i$.
\end{proof}

\begin{remark}
  Of course, $\psiu(1)=1$. We can verify that the \rhs{} of \eqref{psiu}
  equals 1 (as a limit) for $\zq=1$ by \eqref{zeta0'1} and \eqref{ctt}.

It is also possible to derive \eqref{psiu} (for $|q|<1$, say) from \refT{TU}
and \refT{Tconv}\ref{tconvres}.
\end{remark}

In principle, moments of $U_\ga$ can be computed by differentiation of
$\psi_U(q)$ in \eqref{psiu} at $q=1$. However, the factor $1-q$ in the
denominator makes the evaluation at $q=1$ complicated, since we have to take
a limit. Instead, we prefer to relate moments of $U_\ga$ to moments of $B_\ga$.
We give the expectation as an example, and leave higher moments to the reader.

\begin{corollary}\label{CU}
  As $m,n\to\infty$ with $n/bm\to\ga\in(0,1)$,
\begin{equation}\label{eu}
  \begin{split}
\E U\mn\to \E U_\ga 
&= 
\cttx \E\frac{B_\ga(B_\ga-1)}{2}
\\&
=
\frac{1}{2b(1-\ga)^2}
-\frac12
-  \frac{1}{b }
 \sum_{\ell=1}^{b-1}\frac{\zetal}{(1-\zetal)(1-\ga\zetal)}
.	
  \end{split}
\end{equation}
\end{corollary}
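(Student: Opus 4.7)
The plan is to reduce the expectation of $U_\ga$ to known moments of the block length $B_\ga$ established in \refC{CB}, and then to invoke \refT{TU} for the passage from the exact to the Poisson model. Indeed, the convergence $\E U\mn\to \E U_\ga$ is already part of \refT{TU}, so the task is purely to evaluate $\E U_\ga$ in closed form.

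The key ingredient is the distributional identity $\PP(U_\ga=k)=\cttx\PP(B_\ga>k)$ derived in the proof of \refT{TU} (see equation~\eqref{puk}). Multiplying by $k$ and summing,
\begin{equation*}
\E U_\ga = \cttx\sum_{k=0}^\infty k\,\PP(B_\ga>k)
= \cttx\sum_{j=1}^\infty\PP(B_\ga=j)\sum_{k=0}^{j-1}k
= \cttx\,\E\frac{B_\ga(B_\ga-1)}{2},
\end{equation*}
which is the first equality in \eqref{eu}. This is the conceptual step; the rest is algebra.

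To finish, I would use \refC{CB} to compute
\begin{equation*}
\E B_\ga^2 = \Var B_\ga+(\E B_\ga)^2
= \frac{1}{b(1-\ga)^2\tba}
-\frac{2}{b\tba}\sum_{\ell=1}^{b-1}\frac{\zetal}{(1-\zetal)(1-\ga\zetal)},
\end{equation*}
where the $-1/\tba^2$ term in $\Var B_\ga$ cancels against $(\E B_\ga)^2=1/\tba^2$. Subtracting $\E B_\ga=1/\tba$ and dividing by $2$ gives
\begin{equation*}
\E\frac{B_\ga(B_\ga-1)}{2}
= \frac{1}{2b(1-\ga)^2\tba}
-\frac{1}{b\tba}\sum_{\ell=1}^{b-1}\frac{\zetal}{(1-\zetal)(1-\ga\zetal)}
-\frac{1}{2\tba}.
\end{equation*}
Multiplying by $\cttx$ produces exactly the claimed closed form in \eqref{eu}.

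There is no real obstacle here; the argument is a short reduction plus a bookkeeping calculation. The only thing to be careful about is the cancellation between $-1/\tba^2$ in $\Var B_\ga$ and $(\E B_\ga)^2$, which is what makes the final expression free of the awkward $\tba^{-1}$ prefactor (apart from the overall $\tba$ brought in by the size-biasing). The convergence of moments in the exact model follows automatically from the convergence of moments part of \refT{TU}, so no additional de-Poissonization is needed.
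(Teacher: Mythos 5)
Your proposal is correct and takes essentially the same route as the paper: the first equality is obtained by summing $k\,\PP(B_\ga>k)$ using \eqref{puk}, and the second follows from \refC{CB} by routine algebra, which you have simply spelled out in more detail than the paper (which just says "the second follows by \refC{CB}").
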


\begin{proof}
In the \infpoi,
  by \eqref{puk},
\begin{equation*}
\E U = \sumk k \Pr(U=k)	
=\cttx \sumk k \Pr(B>k)
= \cttx\sumj\Pr(B=j)\sum_{k=1}^{j-1} k
\end{equation*}
which yields the first equality in \eqref{eu}. The second follows by \refC{CB}.
\end{proof}

\begin{remark}\label{RUtot}
The cost of  an unsuccessful search starting at  $i$, 
measured as the total number of buckets inspected,
is  $U_i+1$, which has \pgf{} $q\psiu(q)$ and expectation $\E U+1$.
\end{remark}

The number of keys inspected in an unsuccessful search is 
$\tU_i=bU_i+U_i'$, where $U_i'$ is the number of keys in the first
non-full bucket. 
We can compute its \pgf{} too.
\begin{theorem}
  In the \infpoi, the number $\tU=\tU_\ga$ of keys inspected in an
  unsuccessful search has \pgf{}
\begin{equation}\label{psitu}
  \begin{split}
\psitu(\zq)
=\cttx\frac{\prodlb(\zq-\zetal(\zq^b))-\prodlb(\zq-\zetal(1))}{1-\zq^b}.
  \end{split}
\end{equation}

Moreover, for the exact model,
as $m,n\to\infty$ with $n/bm\to\ga$,
$\tU\mn\dto \tU_\ga$
with convergence of all moments;
furthermore, 
for some $\gd>0$, the \pgf{} converges to $\psitu(q)$,
uniformly for $|q|\le1+\gd$.
\end{theorem}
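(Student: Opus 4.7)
The plan is to derive $\psi_{\tilde U}$ by refining the argument of \refT{TU}, now tracking the occupancy of the terminating non-full bucket via the joint distribution obtained in \refT{TB}. Write $\tilde U_\ga = b U_\ga + Y_*$, where $Y_*$ denotes the number of keys in the first non-full bucket encountered by the search starting at $0$. By \refR{RB} and the decomposition used in the proof of \refT{TU}, for every $k\ge 0$ and $0\le j\le b-1$,
\begin{equation*}
\PP(U_\ga = k, Y_* = j) = \cttx\, \PP(B_\ga > k, Y_B = j),
\end{equation*}
because conditioning on the previous block to end at some $i<0$ (with $H_i<b$), the block that starts at $i+1$ has the same joint law as $(B_\ga, Y_B)$ from \refT{TB}, and $Y_*$ is precisely the $Y_B$ of that block.

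Next, I would convert this to generating-function form. Summing over $k$ via the geometric identity $\sum_{k=0}^{B_\ga - 1} q^{bk} = (1-q^{bB_\ga})/(1-q^b)$ and introducing the extra variable $t=q$ in the sum over $j$,
\begin{equation*}
\psi_{\tilde U}(q) = \cttx \sum_{k\ge 0} q^{bk} \E\bigpar{q^{Y_B}\,\etta[B_\ga > k]} = \frac{\cttx}{1-q^b}\bigpar{\E q^{Y_B} - \E q^{bB_\ga + Y_B}}.
\end{equation*}
Both expectations are particular evaluations of \eqref{psiby}: setting its first argument equal to $1$ and $t=q$ yields $\E q^{Y_B} = q^b - \prodlb(q-\zetal)$, while setting its first argument equal to $q^b$ and $t=q$ yields $\E q^{bB_\ga + Y_B} = q^b - \prodlb\bigpar{q-\zetal(q^b)}$. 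The $q^b$ terms cancel on subtraction, leaving the closed form stated in \eqref{psitu}.

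For the exact model, the weak convergence $\tilde U\mn \dto \tilde U_\ga$ follows from \refT{Tconv}, since $\tilde U$ is determined by the profile $(H_i)$ and the required a.s.\ continuity condition is clear. To promote this to convergence of the \pgf{} and of all moments uniformly on $|q|\le 1+\gd$, I would exploit the crude bound $\tilde U \le b \hat B$ (each inspected full bucket lies in the block containing $0$, and the terminating bucket contributes at most $b-1<b$ keys) together with \refL{LexpB}, giving a uniform exponential moment of $\tilde U\mn$, after which the standard argument used in \refT{TU} and \refT{TBH} concludes. The main obstacle is the opening step: correctly identifying $Y_*$ with $Y_B$ for the block that contains $0$ and rigorously invoking \refR{RB} so that the joint distribution of $(B,Y_B)$ along that block matches the one of \refT{TB}; once this is in hand the remaining algebra is short, the cancellation of the $q^b$ terms being forced by the normalisation $\psi_{\tilde U}(1) = 1$.
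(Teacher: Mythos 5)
Your proof is correct and follows essentially the same route as the paper: identify $\PP(U_\ga=k,\,Y_*=j)=T_0(b\ga)\PP(B_\ga>k,\,Y_B=j)$ via the renewal argument used for $\psi_U$, sum the geometric series in $q^b$ to reduce $\psi_{\tilde U}(q)$ to $\frac{T_0(b\ga)}{1-q^b}\bigl(\E q^{Y_B}-\E q^{bB+Y_B}\bigr)$, evaluate both expectations from the bivariate formula \eqref{psiby} of Theorem~\ref{TB}, and obtain the asymptotics via Theorem~\ref{Tconv} and Lemma~\ref{LexpB} with the bound $\tilde U\le b\hat B$. The only cosmetic difference is that the paper writes the double sum directly without naming $Y_*$; the substance and all key lemmas invoked are identical.
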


\begin{proof}
Arguing as in \eqref{pj}, we obtain the \pgf{}
\begin{equation*}
  \begin{split}
\psitu(\zq)&:=
\E \zq^{\tU_i}	
=\sum_{k=0}^\infty\sum_{\ell=0}^{b-1} \cttx\PP(B>k, Y_B=\ell)\zq^{bk+\ell}
\\&\phantom:
=\cttx\sum_{j=1}^\infty\sum_{\ell=0}^{b-1}\PP(B=j, Y_B=\ell)
\sum_{k=0}^{j-1}\zq^{bk+\ell}	
\\&\phantom:
=\cttx\sum_{j=1}^\infty\sum_{\ell=0}^{b-1}\PP(B=j, Y_B=\ell)
\frac{1-\zq^{bj}}{1-\zq^b}\zq^\ell
\\&\phantom:
=\cttx\frac{\E(\zq^{Y_B})-\E(\zq^{bB}\zq^{Y_B})}{1-\zq^b}
  \end{split}
\end{equation*}
and \eqref{psitu} follows by \eqref{psiby}.  

The final claims follow in the usual way from \refT{Tconv} and \refL{LexpB},
using $\tU_i< b\hB_i$.
\end{proof}

\begin{corollary}\label{CTU}
  As $m,n\to\infty$ with $n/bm\to\ga\in(0,1)$,
\begin{equation}\label{etu}
  \begin{split}
\E \tU\mn\to \E \tU_\ga 
&= 
\frac{1}{2(1-\ga)^2}
-\frac{b}{2}
+  
 \sum_{\ell=1}^{b-1}\frac{1-2\ga\zetal}{(1-\zetal)(1-\ga\zetal)}
.
  \end{split}
\end{equation}
\end{corollary}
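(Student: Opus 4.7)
As in the preceding corollaries, the convergence of $\E\tU\mn$ (with all moments) to $\E\tU_\ga$ follows at once from the uniform convergence of the \pgf{s} established in the preceding theorem, so the task reduces to computing $\psitu'(1)$ in the \infpoi.

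Rather than differentiate \eqref{psitu} directly --- which would require a careful second-order expansion of both numerator and denominator about $q=1$ --- I would work from the intermediate identity derived inside the proof of that theorem,
\begin{equation*}
\psitu(q)=\cttx\,\frac{\E(q^{Y_B})-\E(q^{bB+Y_B})}{1-q^b}.
\end{equation*}
Call this $\cttx N(q)/D(q)$; both $N$ and $D:=1-q^b$ have a simple zero at $q=1$, so the standard ratio-of-zeros formula expresses $(N/D)'(1)$ in terms of $N'(1),N''(1),D'(1),D''(1)$. Using $N'(1)=-b\E B$ and $N''(1)=-b^2\E B^2+b\E B-2b\E(BY_B)$, routine algebra collapses the ratio to the clean decomposition
\begin{equation*}
\E\tU_\ga=\cttx\lrpar{\tfrac{b}{2}\E B(B-1)+\E(BY_B)}=b\,\E U_\ga+\cttx\,\E(BY_B),
\end{equation*}
where I have invoked \refC{CU} for the first summand. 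Hence only $\cttx\E(BY_B)$ remains to be evaluated.

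For this I appeal to the joint generating function \eqref{psiby},
$F(q,t):=\E(q^Bt^{Y_B})=t^b-\prodlb(t-\zetal(q))$, and compute $\partial_q\partial_tF$ at $q=t=1$. A product-rule calculation yields
\begin{equation*}
\partial_q\partial_tF(1,1)=\sum_{\ell=0}^{b-1}\zetal'(1)\sum_{j\neq\ell}\prod_{k\neq\ell,j}(1-\zeta_k).
\end{equation*}
The essential algebraic observation is that $\zeta_0(1)=1$ by \eqref{zeta0}, so the factor $(1-\zeta_0)$ annihilates every term of the double sum except those with $\ell=0$ or $j=0$. The two surviving families are evaluated using $\zeta_0'(1)=1/[b(1-\ga)]$ from \eqref{zeta0'1} and $\zetal'(1)=\zetal/[b(1-\ga\zetal)]$ from \eqref{zetal'}, together with the identity $\prod_{\ell=1}^{b-1}(1-\zetal)=b(1-\ga)/\cttx$ from \eqref{ctt}, giving
\begin{equation*}
\cttx\E(BY_B)=\sum_{\ell=1}^{b-1}\lrsqpar{\frac{1}{1-\zetal}+\frac{(1-\ga)\zetal}{(1-\zetal)(1-\ga\zetal)}}.
\end{equation*}

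The remaining step --- and the main obstacle, though it is purely one of bookkeeping --- is to combine this with $b\E U_\ga$ from \refC{CU} and simplify. Placing the two summands above over the common denominator $(1-\zetal)(1-\ga\zetal)$ produces the numerator $1+\zetal-2\ga\zetal$, and the contribution $-\sum_\ell\zetal/[(1-\zetal)(1-\ga\zetal)]$ inherited from $b\E U_\ga$ cancels the $\zetal$ term in that numerator, leaving exactly $(1-2\ga\zetal)/[(1-\zetal)(1-\ga\zetal)]$ together with the polynomial piece $\frac{1}{2(1-\ga)^2}-\frac{b}{2}$; this is \eqref{etu}.
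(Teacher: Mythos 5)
Your argument is correct, and it takes a genuinely different route from the paper's. The paper's proof is a single sentence: differentiate $\psitu$ in \eqref{psitu} at $q=1$, simplify, and it explicitly cites \eqref{zetal''} --- since the numerator of \eqref{psitu} involves $\zetal(q^b)$, expanding both numerator and denominator to second order near $q=1$ requires the second derivatives $\zetal''$. You bypass this entirely by decomposing $\E\tU_\ga = b\,\E U_\ga + \cttx\,\E(BY_B)$ (which mirrors the probabilistic identity $\tU_i = bU_i + U_i'$ with $U_i'\eqd Y_B$), reusing \refC{CU} for the first summand, and obtaining the cross-moment $\E(BY_B)=\partial_q\partial_t F(1,1)$ from the bivariate generating function $F(q,t)=\E(q^B t^{Y_B})$ of \eqref{psiby}, so only mixed first derivatives of $F$ and hence only $\zetal'$ are needed. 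I have verified the intermediate claims: the ratio-of-zeros expansion with $N'(1)=-b\E B$, $N''(1)=-b^2\E B^2+b\E B-2b\E(BY_B)$, $D'(1)=-b$, $D''(1)=-b(b-1)$ does collapse to $\cttx\bigpar{\tfrac{b}{2}\E B(B-1)+\E(BY_B)}$; the factor $(1-\zeta_0)=0$ does annihilate every term of the double sum except the $\ell=0$ and $j=0$ families; and collecting $\frac{1}{1-\zetal}+\frac{(1-\ga)\zetal}{(1-\zetal)(1-\ga\zetal)}-\frac{\zetal}{(1-\zetal)(1-\ga\zetal)}$ over the common denominator indeed yields the stated numerator $1-2\ga\zetal$. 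Your route is cleaner when only the expectation is wanted; the paper's direct differentiation is uniform with the method used elsewhere (e.g.\ \refC{CU}) and extends more naturally to higher moments, where $\zetal''$ is unavoidable in any case.
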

\begin{proof}
  By differentiation of \eqref{psitu} at $q=1$, recalling $\zetal(1)=1$ and
  using \eqref{zetal'}--\eqref{zeta0'1} and \eqref{zetal''}.
We omit the details.
\end{proof}

\section{FCFS displacement}\label{SFC}

\subsection{\Combinatorial}
\label{CSFC}
We consider, as for Robin Hood hashing in \refS{SRHc}, the displacement of a
marked key $\bullet$, which we by symmetry may assume hashes to the first
bucket. 
Thus, let
\begin{equation}\label{eqFCFS}
  FCFS(z,w,q) := \sum_{m\ge1}\sum_{n\ge1}\sum_{k\ge0} 
FCFS_{m,n,k} w^{bm} \frac{z^n}{n!} q^k,
\end{equation}
where $FCFS_{m,n,k}$ is the number of hash tables of length $m$ with $n$
keys (one of them marked as $\bullet$) such that 
$\bullet$ hashes to the first bucket and
the displacement $\dfc$ of $\bullet$
equals $k$. For a given $m$ and $n$ with $1\le n\le bm$, there are $nm^{n-1}$
such  tables ($n$ choices to select $\bullet$ and $m^{n-1}$ choices to place 
the other $n-1$ elements). 
Thus, if 
$d_{m,n}(q)$ is the probability generating function for
the displacement of a random key in a hash table with $m$ buckets and
$n$ keys,
\begin{equation}\label{FCFS1}
  \begin{split}
  FCFS(z,w,q) 
&= \sum_{m\ge1}\sum_{n=1}^{bm} nm^{n-1}d_{m,n}(q) w^{bm} \frac{z^n}{n!}
\\&
= z\sum_{m\ge1}\sum_{n=0}^{bm-1} d_{m,n+1}(q) w^{bm} \frac{m^{n}z^n}{n!}.	
  \end{split}
\end{equation}
In other words, since there are $m^n$ hash tables with $m$ buckets and $n$ keys,
$z\qw FCFS(z,w,q)$ can be seen as the generating function of $d_{m,n+1}(q)$.

\begin{theorem}
\label{FCFSexact}
\begin{equation}\label{fcfs}
FCFS(bz,w,q)=b \int_0^z U(bt,we^{z-t},q) \,dt
\end{equation}
up to terms $z^n w^m q^k$ with $n> bm$.
\end{theorem}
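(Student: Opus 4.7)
The plan is to exploit the fact that, under FCFS, the displacement of $\bullet$ (hashing to bucket $0$) depends only on the keys inserted before it and in fact equals $U_0$ in that partial table. So I would decompose each insertion history as (before-keys)~+~($\bullet$)~+~(after-keys), translate it into generating functions, and recognize the resulting convolution as an integral via the Beta function identity.

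First, conditioning on the position $j+1$ of $\bullet$ in the insertion order, the $j$ before-keys contribute $m^j [q^k] u_{m,j}(q)$ configurations (by definition of $u_{m,j}$, since by bucket-symmetry the pgf is the same for any fixed starting bucket) and the $n-1-j$ after-keys contribute $m^{n-1-j}$, so
\begin{equation*}
FCFS_{m,n,k} = \sum_{j=0}^{n-1} m^j [q^k] u_{m,j}(q) \cdot m^{n-1-j} = m^{n-1}\sum_{j=0}^{n-1}[q^k]u_{m,j}(q).
\end{equation*}

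Next, the classical Beta identity
\begin{equation*}
\frac{z^{j+l+1}}{(j+l+1)!} = \int_0^z \frac{u^j}{j!}\,\frac{(z-u)^l}{l!}\,du,
\end{equation*}
with $n=j+l+1$, turns this discrete convolution into an integral. Interchanging sum and integral, the $u$-side reconstructs $\sum_j u_{m,j}(q)(mu)^j/j!$ and the $(z-u)$-side sums to $\sum_l (m(z-u))^l/l! = e^{m(z-u)}$. Since $(we^{(z-u)/b})^{bm} = w^{bm}e^{m(z-u)}$, summing over $m$ then repackages these as a single $U$:
\begin{equation*}
FCFS(z,w,q) = \int_0^z U\bigpar{u,\; we^{(z-u)/b},\; q}\,du.
\end{equation*}

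Finally, the change of variables $z\mapsto bz$ and $u = bt$ (so $du=b\,dt$) converts $we^{(z-u)/b}$ into $we^{z-t}$ and the measure into $b\,dt$, giving the stated $FCFS(bz,w,q) = b\int_0^z U(bt, we^{z-t}, q)\,dt$. The ``up to $n > bm$'' clause accommodates the fact that $u_{m,j}(q)$ is defined to vanish for $j\ge bm$ and $FCFS_{m,n,k}$ vanishes for $n>bm$, so the identity is exact on all Taylor coefficients within the physically meaningful range. The main obstacle I anticipate is pinning down the right combinatorial convention for $FCFS_{m,n,k}$: the formula requires that the $n$ keys be treated as anonymous but ordered by insertion time, with only the \emph{position} of $\bullet$ marked — treating the non-marked keys as distinguishable would introduce a spurious $(n-1)!$ factor and destroy the match with the integral side. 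Once this bookkeeping is fixed, the rest is a routine Beta integral and variable substitution.
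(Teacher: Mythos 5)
Your proof is correct, and it takes a genuinely different route from the paper's. Both arguments start from the same combinatorial identity, $n\,d_{m,n}(q)=\sum_{j=0}^{n-1}u_{m,j}(q)$ (equivalently, your decomposition of $FCFS_{m,n,k}$ by the insertion position of $\bullet$). From there, the paper takes discrete differences to turn this into $u_{m,n}(q)=(n+1)d_{m,n+1}(q)-n\,d_{m,n}(q)$, translates this into the first-order PDE $\partial_z FCFS(bz,w,q)-w\,\partial_w FCFS(bz,w,q)=bU(bz,w,q)$, and solves it by the method of characteristics with the boundary condition at $z=0$; the substitution $w\mapsto we^{z-t}$ appears as the characteristic curves $we^z=\mathrm{const}$. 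You instead resum the convolution directly via the Beta integral $\frac{z^{j+l+1}}{(j+l+1)!}=\int_0^z\frac{u^j}{j!}\frac{(z-u)^l}{l!}\,du$, so the $u$-side reassembles $U$ and the $(z-u)$-side supplies the factor $e^{m(z-u)}$ that is absorbed into the second argument of $U$. This is more elementary (no PDE solving needed) and makes the appearance of $we^{z-t}$ combinatorially transparent: it comes from summing over the ``after-keys.'' Your handling of the $n>bm$ caveat also matches the paper's redefinition of $d_{m,n}$ and $FCFS$ for $n>bm$. One small remark: the concern you raise about a ``spurious $(n-1)!$ factor'' is not really an obstacle here — the keys are labelled and inserted in label order, and the EGF weight $z^n/n!$ together with the Beta identity automatically performs the correct bookkeeping, exactly as your calculation shows.
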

\begin{proof}
The probability generating function for the displacement of a
random key when having $n$ keys in the table is
\begin{equation}\label{dmn}
d_{m,n}(q) = \frac1n \sum_{i=0}^{n-1} u_{m,i}(q),
\end{equation}
where the generating function of $u_{m,i}(q)$ is given by \eqref{eqUN}.
This assumes $n\le bm$, but for the rest of the proof we redefine
$d_{m,n}(q)$ so that \eqref{dmn} holds for all $m$ and $n\ge1$,
and we redefine $FCFS(z,w,q)$ by summing over all $n$ in \eqref{FCFS1}; 
this will only affect terms with $n>bm$.

By \eqref{dmn}, for all $m\ge1$ and $n\ge0$, 
\begin{equation*}
u_{m,n}(q) = (n+1)d_{m,n+1}(q)-nd_{m,n}(q),
\end{equation*}
and so, by \eqref{eqUN},
\begin{equation*}
U(bz,w,q) = \sum_{m\geq 1} w^{bm} \sum_{n \geq 0} \frac{(bmz)^n}{n!} 
\bigpar{ (n+1)d_{m,n+1}(q)-nd_{m,n}(q)}.
\end{equation*}
Thus,
$FCFS(bz,w,q)$ in \eqref{FCFS1}
satisfies
\begin{equation}
\ddz  FCFS(bz,w,q) - w\ddw FCFS(bz,w,q) = bU(bz,w,q).
\label{difeFCFS}
\end{equation}
The differential equation (\ref{difeFCFS}) together with the boundary 
condition 
$F(0,w,q)\allowbreak=0$
leads to the solution
\eqref{fcfs}.
\end{proof}

Note also that, by symmetry, in the definition of $FCFS(z,w,q)$, we may
instead of assuming that $\bullet$ hashes to the first bucket,  assume
that $\bullet$ ends up in a cluster that ends with a fixed bucket, for
example the last. 
We may then use the sequence construction in \refS{S:combin}, and by
\eqref{Gen-mn1} and \refR{RGen}, we obtain
\begin{equation}\label{fc-sequence}
  FCFS(z,w,q)=\gL_0(z,w)FC(z,w,q),
\end{equation}
where $FC(z,w,q)$ is the generating function for the displacements in an
almost full table.

\subsection{An alternative approach for $b=1$}

In this section we present an alternative approach to find the  
generating function for the displacement of a random key in 
FCFS, without considering unsuccessful searches.
We present it in detail only in the case $b=1$, when we are able to obtain
explicit generating functions, and give only a few remarks on the case $b>1$.

Thus, let $b=1$ and let $FC(z,w,q)$ be the generating function for the 
displacement of a marked key $\bullet$ in an almost full table 
(with $n=m-1$ keys).
Furthermore let $FC$ specify an almost full table with one key $\bullet$ marked
and $AF$ specify a normal almost full table. 

Consider an almost full table with $n>0$. Before inserting the last key, the
table consisted of two clusters (almost full subtables), with the last key
hashing to the first cluster. When considering also the marked key
$\bullet$, there are three cases, see \refF{decompFCFS}:
\begin{alphenumerate}
\item The new inserted key is $\bullet$ and has to find a position in an
  $AF$, with $q$ keeping track on the insertion cost.
\item The new inserted key is not $\bullet$ and has to find a position in
  \begin{enumerate}
  \item an $FC$ (in case $\bullet$ is in the first cluster),
  \item an $AF$ (in case $\bullet$ is in the second cluster).
  \end{enumerate}
\end{alphenumerate}

\begin{figure}[htb]
\begin{center}
\includegraphics[height=2.0cm]{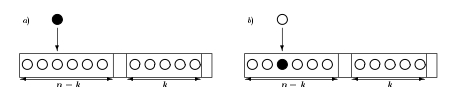}
\caption{Two cases in an FC table. 
In a) $\bullet$ is the last key inserted, 
while in b) $\bullet$ is already in the table when the last key is
inserted.}
\label{decompFCFS}
\medskip
\end{center}
\end{figure}

As a consequence we have the specification:
\begin{equation}
\label{specFCFS1}
FC = \Add\bigpar{\Pos_q(AF)*AF + \Pos(FC)*AF + \Pos(AF)*FC},
\end{equation}
where $\Pos_q$ means that we operate on the generating functions by $\HH$ in
\eqref{qq3}, while the normal $\Pos$ operates by $\frac{w}b\ddw$ as in 
\refFig{newcons}.
The construction $\Add$ translates to integration $\int$, see
\refS{q-calculus};
we eliminate this by taking the derivative $\ddzx$ on both sides. 
The specification \eqref{specFCFS1} thus yields the 
partial differential equation
{\multlinegap=0pt
\begin{multline}\label{eqdiffFCFS} 
\ddz FC(z,w,q) =
\HH[AF(z,w)]AF(z,w)
\\
+ w\ddw \bigpar{FC(z,w,q)}~AF(z,w) 
 +  w\ddw\bigpar{AF(z,w)}FC(z,w,q). 
\end{multline}}
Moreover, since all generating functions here are for almost full tables,
with $n=m-1$, they are all of the form $F(wz)/z$, and we have
$z\partial_z=w\partial_w-1$. 
Thus, \eqref{eqdiffFCFS} reduces to the ordinary differential equation
{\multlinegap=0pt
\begin{multline}\label{eqdiffFCFS2} 
\bigpar{1-zAF(z,w) }w\ddw FC(z,w,q) = 
  \Bigpar{1+ w\ddw\bigpar{zAF(z,w)}}FC(z,w,q)
\\
+z\HH[AF(z,w)]AF(z,w).
\end{multline}}
Furthermore, since $b=1$, \eqref{laFd} yields
$AF(z,w)=wF_0(zw)=N_0(z,w)$,
which by \eqref{Fd} or  \eqref{N0} yields 
$AF(w,z) =\xfrac{T(zw)}{z}$.
Substituting this in \eqref{eqdiffFCFS2}, 
the differential equation is easily solved 
using \eqref{T'} and the integrating factor $(1-T(zw))/T(zw)$,
and we obtain the following theorem.
\begin{theorem}\label{TFC1}
For $b=1$,
\begin{equation}\label{tfc1}
FC(z,w,q)=\frac{\bigpar{(1-T(zwq))^2-(1-T(zw))^2}T(zw)}{2z(1-q)(1-T(zw))}.
\end{equation}
\end{theorem}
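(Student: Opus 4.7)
The plan is to solve the ODE \eqref{eqdiffFCFS2} directly, using the specialization $AF(z,w)=T(zw)/z$ (so that $zAF(z,w)=T(zw)$) for $b=1$. First I would simplify the coefficients using \eqref{T'}: since $w\partial_w T(zw) = zw\,T'(zw) = T(zw)/(1-T(zw))$, we have $1+w\partial_w(zAF) = 1/(1-T(zw))$. Dividing \eqref{eqdiffFCFS2} by $1-T(zw)$ then rewrites the equation as
\begin{equation*}
w\partial_w FC - \frac{1}{(1-T(zw))^2}\,FC \;=\; \frac{T(zw)}{1-T(zw)}\,\HH[T(zw)/z].
\end{equation*}

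Next I would implement the integrating factor flagged in the text, namely $(1-T(zw))/T(zw)$, by making the substitution $FC=\frac{T(zw)}{1-T(zw)}\,V(z,w,q)$. Writing $u=T(zw)$ and using $w\partial_w u = u/(1-u)$, a short computation gives $w\partial_w\bigl(u/(1-u)\bigr) = u/(1-u)^3$, and substituting this into the ODE above one finds that the terms involving $V$ (without derivatives) cancel, collapsing the equation to
\begin{equation*}
w\partial_w V \;=\; \HH[T(zw)/z].
\end{equation*}
The boundary condition $FC(0,w,q)=0$ translates to $V(0,w,q)=0$ (since $T(0)=0$), so $V$ is uniquely determined by integrating $\HH[T(zw)/z]$ with respect to $\log w$ and picking the antiderivative vanishing at $z=0$.

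To do that integration I would use \eqref{qq3}, writing $\HH[T(zw)/z] = (T(zw)-T(zwq))/(z(1-q))$, and evaluate each term via the primitive $\int T(u)/u\,du = T(u)-T(u)^2/2 = \tfrac12\bigl(1-(1-T(u))^2\bigr)$, which follows from $T'(u)=T(u)/(u(1-T(u)))$, i.e., $T(u)/u = T'(u)(1-T(u))$. Substituting $u=zw$ and $u=zwq$ in turn, the change of variables produces
\begin{equation*}
V(z,w,q) = \frac{(1-T(zwq))^2-(1-T(zw))^2}{2z(1-q)},
\end{equation*}
and multiplying by $T(zw)/(1-T(zw))$ yields \eqref{tfc1}.

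The main obstacle is the bookkeeping of the substitution and making sure the integrating factor truly kills the $V$-without-derivative terms; once that cancellation is verified, the remaining step is an elementary primitive and a matching of boundary conditions. An additional subtlety is checking that the resulting $FC(z,w,q)$, which is an \emph{a priori} formal expression in $T(zw)$ and $T(zwq)$, is analytic at $z=0$ so that the initial condition is met; this is clear since $(1-T(zwq))^2-(1-T(zw))^2 = O(zw)$ times an analytic factor, so $V$ and hence $FC$ vanish at $z=0$.
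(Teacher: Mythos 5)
Your approach mirrors the paper's sketched proof: for $b=1$, substitute $AF(z,w)=T(zw)/z$ into \eqref{eqdiffFCFS2}, simplify via \eqref{T'}, and solve with the integrating factor $(1-T(zw))/T(zw)$. The computational steps are correct: $w\partial_w T(zw)=T(zw)/(1-T(zw))$, the cancellation of the zeroth-order terms in $V$ after setting $FC=\frac{T(zw)}{1-T(zw)}V$, the primitive $\int T(u)\,du/u=\tfrac12\bigl(1-(1-T(u))^2\bigr)$, and the final recombination all check out and yield \eqref{tfc1}.

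The boundary-condition step, however, is stated incorrectly. You claim $FC(0,w,q)=0$ ``translates to $V(0,w,q)=0$ (since $T(0)=0$)'', and later that ``$V$ and hence $FC$ vanish at $z=0$''. Precisely because $T(0)=0$, the relation $FC=\frac{T(zw)}{1-T(zw)}V$ forces $FC(0,w,q)=0$ for \emph{any} finite $V(0,w,q)$, so the condition on $FC$ gives no information about $V$ at $z=0$; and the expansion $(1-T(zwq))^2-(1-T(zw))^2=2zw(1-q)+O\bigl((zw)^2\bigr)$ shows your $V$ in fact tends to $w\neq0$ as $z\to0$. What fixes the constant of integration $C(z,q)$ in the general solution $V=V_{\mathrm p}+C(z,q)$ of $w\partial_w V=\HH[T(zw)/z]$ is instead the absence of a $w$-independent term in $V$: the true $V=FC\cdot(1-T(zw))/T(zw)$ satisfies $V(z,0,q)=0$ because $FC$ starts at $w^2z$ while $(1-T)/T$ is $(zw)^{-1}$ times a power series, and your particular solution also has $V_{\mathrm p}(z,0,q)=0$, hence $C\equiv0$. (Equivalently, for series of the form $\Psi(zw)/z$ one has $w\partial_w=z\partial_z+1$, so the ODE reads $\partial_z(zV)=\HH$ and is integrated from $z=0$.) With this adjustment your derivation is complete and reaches the theorem by the same route as the paper.
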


Note that a similar derivation has been done in 
Section 4.6 of \cite{TubaT}. The difference is that in
\cite{TubaT} a recurrence for $n![z^nw^{n+1}]$ is derived, and
then moments are calculated. Even though it could have been
done, no attempt to find the generating function from the
recurrence is presented there. Moreover, in \cite{Exact}, the
same recurrence as in \cite{TubaT} is presented and then the
distribution is explicitly calculated for the Poisson model
(although the distribution is not presented for the
combinatorial model).

Here, with the same specification,
the symbolic method leads directly to the generating function,
avoiding the need to solve any recurrence. This example presents
in a very simple way how to use the approach 
\emph{``if you can specify it, you can analyze it''}.

By \eqref{fc-sequence} and \eqref{laLambda0},
we obtain immediately the generating function $FCFS$
(for general hash tables) in the case $b=1$.

\begin{theorem}
For $b=1$,
\begin{equation*}
FCFS(z,w,q)=\frac{((1-T(zwq))^2-(1-T(zw))^2)T(zw)}{2(1-q)(1-T(zw))(z-T(zw))}.
\end{equation*}
\end{theorem}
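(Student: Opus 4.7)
The plan is to combine three ingredients already at hand. First, by \eqref{fc-sequence}, we have the sequence-of-clusters decomposition
\begin{equation*}
FCFS(z,w,q)=\Lambda_0(z,w)\,FC(z,w,q),
\end{equation*}
valid for any $b\ge1$. The marked-key displacement in an arbitrary hash table is obtained by isolating the cluster containing $\bullet$ (captured by $FC$) and then attaching independent almost-full clusters on both sides (captured by $\Lambda_0$).

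Second, I would specialize \eqref{laLambda0} to $b=1$. Since $\go=1$ and the product reduces to a single factor, we get
\begin{equation*}
\Lambda_0(z,w)=\frac{1}{1-T(zw)/z}=\frac{z}{z-T(zw)}.
\end{equation*}
Third, \refT{TFC1} supplies the closed form
\begin{equation*}
FC(z,w,q)=\frac{\bigpar{(1-T(zwq))^2-(1-T(zw))^2}T(zw)}{2z(1-q)(1-T(zw))}.
\end{equation*}

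The proof is then simply the substitution: multiplying these two expressions, the factor $z$ from $\Lambda_0$ cancels the $z$ in the denominator of $FC$, leaving the single new factor $z-T(zw)$ in the denominator, and yielding exactly the claimed formula. There is no real obstacle; the only thing to be careful about is matching the $z$-powers correctly and confirming that the identity holds as formal power series (noting that in \eqref{fc-sequence} we have already accounted for the parking-problem convention that $FCFS$ counts tables with a non-full cluster surrounding $\bullet$). Once this is checked, no further calculation is needed.
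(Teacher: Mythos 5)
Your proposal is correct and matches the paper's own (unwritten) derivation exactly: the paper states just before the theorem that it follows ``by \eqref{fc-sequence} and \eqref{laLambda0}'' together with \refT{TFC1}, which is precisely the substitution you carry out. One small wording quibble: the sequence factor $\Lambda_0$ attaches the other clusters on one side (before the cluster ending at the last bucket, which contains $\bullet$), not ``on both sides,'' but this does not affect the calculation.
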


The distribution of the displacement $\dfc\mn$ can be obtained by extracting
coefficients in $FCFS$, see \eqref{eqFCFS}--\eqref{FCFS1}. 
Instead of doing this, we just quote a
result from Theorem 5.1
in  \cite{Exact} (see also \cite[Theorem 5.2]{SJ157}):
If $b=1$, $1\le n\le m$ and $k\ge0$, then
\begin{equation}
[q^k] d_{m,n}(q) = 1-\frac{n-1}{2m} - \sum_{j=0}^{k-1} 
\binom{n-1}{j} \frac{(j+1)^{j-2} (m-j-1)^{n-j-1}}{m^{n-1}}.
\end{equation}
Notice that there are two shifts from \cite{Exact}: the first one in
$k$ since in \cite{Exact} the studied random variable  is the
search cost and here it is the displacement; the second one in $n$,
since in \cite{Exact} the table has $n+1$ elements.

We can also derive a formula similar to \refT{TFC1} for completely full tables.
Let $FC_0(z,w,q)$ be the generating function for the 
displacement of a marked key $\bullet$ in a full table (with $n=m$ keys)
such that the last key is inserted in the last bucket.
(By rotational symmetry, we might instead require that any given key, or
$\bullet$, is inserted in any given bucket, or 
that any given key, or $\bullet$, hashes to a given bucket.)

\begin{theorem}\label{TFC0}
For $b=1$,
\begin{equation}\label{tfc0}
FC_0(z,w,q)=\frac{{(1-T(zwq))^2-(1-T(zw))^2}}{2(1-q)(1-T(zw))}.
\end{equation}
\end{theorem}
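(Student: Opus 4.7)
My plan is to mirror the combinatorial decomposition behind \refT{TFC1}, adapted to the completely full case. Any object counted by $FC_0$ is a full table of length $m$ with $n=m$ keys in which the last-inserted key $K$ sits in the last bucket; removing $K$ leaves an almost-full table of length $m$ with $m-1$ keys and empty last bucket, which is exactly an $AF$-object whose generating function is $AF(z,w) = N_0(z,w) = T(zw)/z$ by \eqref{N0} (taking $b=1$). The marked key $\bullet$ either coincides with $K$ or is one of the remaining $m-1$ keys, which splits the count into two cases.

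\textbf{Case (a)} ($\bullet = K$). I start from an $AF$, mark a hash position for $\bullet$ with the $q$-analogue $\Pos_q$ so that $q$ records the displacement $m-h$, and then add the key via $\Add$. By \eqref{qq3},
\[
  FC_0^{(a)}(z,w,q) = \int_0^z \HH[AF(t,w)]\dd t = \frac{1}{1-q}\int_0^z \frac{T(tw)-T(twq)}{t}\dd t.
\]
The key antiderivative is $\int_0^z T(tw)/t\dd t = T(zw) - T(zw)^2/2$, obtained via the substitution $s = T(tw)$: using $tw = se^{-s}$ and \eqref{T'} one finds $T(tw)/t\dd t = (1-s)\dd s$. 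Hence $FC_0^{(a)} = \bigl[(1-T(zwq))^2 - (1-T(zw))^2\bigr]/(2(1-q))$.

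\textbf{Case (b)} ($\bullet \ne K$). Here $\bullet$ lives in the underlying almost-full subtable, which carries the marking and is counted by $FC(z,w,q)$ from \refT{TFC1}; I then add $K$ with an unmarked hash position. The specification $\Add(\Pos(FC))$ becomes $\int_0^z w\,\partial_w FC(t,w,q)\dd t$ for $b=1$, and because every monomial of $FC$ has $n = m-1$ keys in $m$ buckets a term-by-term check reduces this integral to $z\cdot FC(z,w,q)$.

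Summing the two contributions over the common denominator $2(1-q)(1-T(zw))$, the numerator becomes $\bigl[(1-T(zwq))^2-(1-T(zw))^2\bigr]\bigl[(1-T(zw))+T(zw)\bigr] = (1-T(zwq))^2-(1-T(zw))^2$, which is \eqref{tfc0}. The main obstacle is setting up the decomposition correctly---in particular, recognizing that $\bullet$ ranges over all $m$ keys, not just $K$; as a sanity check, at $q=1$ the two contributions sum to $T(zw) + T(zw)^2/(1-T(zw)) = T(zw)/(1-T(zw)) = \sum_m m^m(zw)^m/m!$, matching the $m\cdot m^{m-1} = m^m$ labelled configurations (full tables with the last-inserted key in the last bucket and any of the $m$ keys marked).
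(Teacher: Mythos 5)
Your proof is correct, and it takes a genuinely different route from the proof the paper actually uses. The paper briefly mentions the specification $FC_0=\Add\bigpar{\Pos_q(AF)+\Pos(FC)}$ --- which is exactly your case-(a) plus case-(b) decomposition, with $\bullet$ either being or not being the last-inserted key --- but says it ``leads to a differential equation'' and then discards it in favour of a slicker bijection: a cut-point argument showing $FC = FC_0 * AF$, which immediately gives $FC_0(z,w,q) = FC(z,w,q)\,z/T(zw)$ and \eqref{tfc0} by \refT{TFC1}. You instead carry the specification through by direct computation, with two nice observations: the explicit antiderivative $\int_0^z T(tw)\,t^{-1}\dd t = T(zw)-\tfrac12 T(zw)^2$ (your substitution $s=T(tw)$ is correct and gives $(1-s)\dd s$), and the algebraic identity $\Add(\Pos(F)) = zF$ for any almost-full generating function (which holds because each monomial has $n=m-1$; the $m$ from $w\partial_w$ cancels the $m$ from the integration). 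Both approaches rely on already knowing $FC(z,w,q)$ from \refT{TFC1}. The paper's cut-point bijection is shorter and purely combinatorial; your route is more computational but demonstrates that the ``differential equation'' the paper alludes to is really just a pair of explicit integrals, with no equation to solve. Your $q=1$ sanity check ($T/(1-T) = \sum_m m^m (zw)^m/m!$) is a nice touch and correct.
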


\begin{proof}
A full table is obtained by adding a key to an
almost full table. This leads to the specification
$FC_0=\Add(\Pos_q(AF)+\Pos(FC))$ and a differential equation that yields the
result.
However, we prefer instead the following combinatorial argument:
Say that there is a \emph{cut-point} after each bucket where the overflow is 0.
By the rotational symmetry, we may equivalently define $FC_0$ as the class
of full hash tables with a marked element $\bullet$ such that the first
cut-point after $\bullet$ comes at the end of the table.
By appending an almost full table to such a table, we obtain a table of the
type $FC$; this yields a specification $FC=FC_0*AF$, since this operation is
a bijcetion, with an inverse given by cutting a table of the type $FC$ at
the first cut-point after $\bullet$. Consequently,
\begin{equation}
FC(z,w,q)=FC_0(z,w,q)AF(z,w)
=FC_0(z,w,q)\frac{T(zw)}z
\end{equation}
and the result follows by \eqref{tfc1}.
\end{proof}

Moments are easily obtained from the generating functions above. In
particular, this gives another proof of \eqref{ammE}--\eqref{ammV}:

\begin{corollary}\label{CFC0}
  For $b=1$,
\begin{align}
  \E\dfcnn &= 
\frac{\sqrt{2\pi}}4n\qq -\frac{2}3+\frac{\sqrt{2\pi}}{48 n\qq} -\frac{2}{135n}
+O\bigpar{n^{-3/2}},
\\
  \Var(\dfcnn) &= \frac{\sqrt{2\pi}}{12}n^{3/2} 
+\lrpar{\frac{1}9-\frac{\pi}8}n 
+\frac{13\sqrt{2\pi}}{144}n^{1/2} 
-\frac{47}{405}-\frac{\pi}{48}
+O\bigpar{n^{-1/2}}.
\end{align}
\end{corollary}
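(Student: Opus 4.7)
The plan is to apply singularity analysis directly to the bivariate generating function $FC_0(z,w,q)$ of \refT{TFC0}, which for $b=1$ depends only on $u:=zw$ and $q$. I first observe that by rotational symmetry the $FC_0$ class is in bijection with (unmarked) full hash tables and the distribution of $\dfcnn$ in $FC_0$ coincides with the uniform distribution over (marked key, full table) pairs; consequently
\[
\E\dfcnn=\frac{[u^m]\partial_q FC_0|_{q=1}}{[u^m]FC_0|_{q=1}},\qquad \E\dfcnn(\dfcnn-1)=\frac{[u^m]\partial_q^2 FC_0|_{q=1}}{[u^m]FC_0|_{q=1}}.
\]

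Writing $\varphi(q):=1-T(uq)$ and using the iterated consequences $uT'=T/(1-T)=:\tau$, $u^2T''=\tau^2(\tau+2)$, and $u^3T'''=\tau^3(3\tau^2+10\tau+9)$ of \eqref{T'}, a short Taylor expansion of $\varphi(q)^2-\varphi(1)^2$ at $q=1$ produces the clean formulas
\[
FC_0|_{q=1}=\tau,\qquad \partial_q FC_0|_{q=1}=\tfrac12\tau^2,\qquad \partial_q^2 FC_0|_{q=1}=\tau^3+\tfrac13\tau^4.
\]
Lagrange inversion applied to $\tau=uT'$ further yields the exact identity $q_1(m):=[u^m]\tau=m^m/m!$, which matches the Stirling expansion of the asymptotic formula derived below.

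Next I extract asymptotics of $q_r(m):=[u^m]\tau^r$ for $r=1,2,3,4$ by standard singularity analysis. The dominant singularity of $T(u)$ at $u=e^{-1}$ admits the classical Puiseux expansion $T(u)=1-Y+Y^2/3-11Y^3/72+43Y^4/540-\cdots$ in $Y:=\sqrt{2(1-eu)}$, so
\[
\tau=\frac{1}{Y}-\frac{2}{3}-\frac{Y}{24}+\frac{2Y^2}{135}+O(Y^3),
\]
and $\tau^r$ admits an expansion in integer powers of $Y$ with leading pole $Y^{-r}$. The transfer theorem of \cite{FlaOdl90,FlaSed09} delivers the exact identity $[u^m]Y^{-s}=2^{-s/2}e^m\Gamma(m+s/2)/(\Gamma(s/2)\Gamma(m+1))$ with full Stirling-type expansion in descending powers of $m$, while $[u^m]Y^{2k}=0$ for $m>k$ since $Y^{2k}$ is polynomial in $u$; only the $Y^{-s}$ terms and the odd positive powers $Y^{2k+1}$ contribute for large $m$, yielding complete asymptotic expansions of each $q_r(m)$ in $m^{-1/2}$.

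Finally I assemble $\E\dfcnn=q_2/(2q_1)$, $\E\dfcnn(\dfcnn-1)=q_3/q_1+q_4/(3q_1)$, and $\Var\dfcnn=\E\dfcnn(\dfcnn-1)+\E\dfcnn-(\E\dfcnn)^2$; the leading terms immediately reproduce $\sqrt{2\pi m}/4$ for the mean and $\sqrt{2\pi m^3}/12$ for the variance. The main obstacle is bookkeeping: to reach the stated precision one must keep several subleading terms in every expansion of $\tau^r$, carefully combine the series $(q_r/e^m)\cdot(e^m/q_1)$, and handle a cancellation at order $m$ inside $\E\dfcnn(\dfcnn-1)-(\E\dfcnn)^2$ which replaces the naive leading-order coefficient $1-\pi/8$ by the correct $1/9-\pi/8$ of the linear term in $\Var\dfcnn$. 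The $-2/(135n)$ coefficient in the mean similarly arises from a cross term between the $1/\sqrt m$ correction in $q_2/e^m$ and the $1/m$ correction in $e^m/q_1$; these computations are routine but delicate.
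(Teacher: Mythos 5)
Your proposal is correct and follows essentially the same route as the paper: both extract $\partial_q FC_0|_{q=1}=\tfrac12\tau^2$ and $\partial_q^2 FC_0|_{q=1}=\tau^3+\tfrac13\tau^4$ from \eqref{tfc0} and then perform singularity analysis at $u=e^{-1}$. The paper's only shortcut is to rewrite these rational functions of $\tau=T/(1-T)$ as linear combinations of $(1-T)^{-k}$ and invoke the Knuth--Pittel tree polynomials $t_n(y)$ whose asymptotics are already tabulated in \cite{KP89}, whereas you re-derive those asymptotics directly from the Puiseux expansion of $T$ in $Y=\sqrt{2(1-eu)}$; both are equivalent.
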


\begin{proof}
  Taking $q=1$ in \eqref{tfc0} we obtain (by \lhopitals)
$T(zw)/(1-T(zw))=(1-T(zw))\qw-1$, which is the EGF of $n^n$;
this is correct since $FC_0$ counts
the full tables with a marked key such that the last key is inserted in the
last bucket, and there are indeed $n^n$ such tables.

Taking the first and second derivatives of \eqref{tfc0}
at $q=1$ we find, with $T=T(zw)$,
\begin{align*}
  \sumni n^n \E \dfcnn\frac{(zw)^n}{n!}
&=\Uq\partial_q FC_0(z,w,q)
=\frac{T^2}{2(1-T)^2}
\\&
=
\frac{1}{2(1-T)^2}
-
\frac{1}{1-T}
+\frac{1}2,
\\
 \sumni n^n \E (\dfcnn)^2\frac{(zw)^n}{n!}
&=\Uq\partial_q^2 FC_0(z,w,q)+\Uq\partial_q FC_0(z,w,q)
= \frac{3T^2-T^4}{6(1-T)^4}
\\&
=
\frac{1}{3(1-T)^4}
-\frac{1}{3(1-T)^3}
-\frac{1}{2(1-T)^2}
+\frac{2}{3(1-T)}
-\frac{1}{6}.
\end{align*}
Knuth and Pittel \cite{KP89} defined the tree polynomials $t_n(y)$ as the
coefficients in the expansion 
\begin{equation}
  \frac{1}{(1-T(z))^y} = \sumn t_n(y)\frac{z^n}{n!}.
\end{equation}
By identifying coefficients, we thus obtain the exact formulas
\begin{align}
 n^n \E\dfcnn &= \tfrac12 t_n(2)-t_n(1),
\\
n^n \E(\dfcnn)^2 
&= \tfrac13 t_n(4)-\tfrac13 t_n(3)-\tfrac12 t_n(2) +\tfrac23 t_n(1).
\end{align}
The result now follows from the asymptotic of $t_n(y)$ obtained by
singularity analysis, see \cite[(3.15) and the comments below it]{KP89}.
\end{proof}

\begin{remark}
The approach in this subsection can in principle be used also for $b>1$. 
For $1\le k\le b$, let $AF_k(z,w)$ be the
generating function for almost full hash tables with $k$ empty slots in the
last bucket, and let $FC_k(z,w,q)$ 
be the generating function for such tables with a marked key $\bullet$, with
$q$ marking the displacement of $\bullet$.
We can for each $k$ argue as for \eqref{specFCFS1}, but we now also have
the possibility that the last key hashes to an almost full table with $k+1$
empty slots (provided $k<b$).
We thus have the specifications
\begin{multline}
  FC_k = \Add\bigpar{\Pos_q(AF_1)*AF_k + \Pos(FC_1)*AF_k + \Pos(AF_1)*FC_k
\\
+\Pos_q(AF_{k+1})+\Pos(FC_{k+1})},
  \qquad k=1,\dots,b
\end{multline}
with $AF_{b+1}=FC_{b+1}=\emptyset$.
This yields a system of $b$ differential equations similar to
\eqref{eqdiffFCFS2} for the generating functions $FC_k(z,w,q)$; note  
that $AF_k$  is given by
$F_{b-k}$ in \eqref{laFd} and \eqref{Fd}.
However, we do not know any explicit solution to this system, and we therefore
do not consider it further, preferring the alternative approach presented
in Section \ref{CSFC}. (It seems possible that this approach at least might
lead to explicit generating functions for the expectation and higher
moments by differentiating the equations at $q=1$, but we have not pursued
this.)   
\end{remark}

\subsection{\Probabilistic}

In the probabilistic model, when inserting a new key in the hash table with
the FCFS rule,  
we do exactly as in an
unsuccessful search, except that at the end we insert the new key.
Hence the displacement of a new key has the same distribution as $U_i$
in \refS{SU}.
However (unlike the RH rule), the keys are never moved once they are
inserted, and when studying the displacement of a key already in the
table, we have to consider $U_i$ at the time the key was added.

We consider again infinite hashing on $\bbZ$, and add a time dimension by
letting the keys arrive to the buckets by independent Poisson process with
intensity 1. At time $t\ge0$, we thus have $X_i\sim\Po(t)$, so at time $\ga
b$ we have the same infinite Poisson model as before, but with each key
given an arrival 
time, with the arrival times being \iid{} and uniformly distributed on
$[0,\ga b]$. 
(We cannot proceed beyond time $t=b$; at this time the table becomes
full and an infinite number of keys overflow to $+\infty$; however, we
consider only $t<b$.)

Consider the table at time $\ga b$, containing all key with arrival
times in $[0,\ga b]$. 
We are interested in the FCFS displacement of a ``randomly chosen key''.
Since there is an infinite number of keys, this is not well-defined, but 
we can interpret it as follows (which gives the correct
limit of finite hash tables, see the theorem below): 
By a basic property of Poisson processes, 
if we condition on the existence of a key, $x$ say, that arrives to a
given bucket $i$ at a given time $t$, then all other keys form a Poisson
process with the same distribution as the original process. Hence the FCFS
displacement of $x$ has the same distribution as $U_\gb$, computed 
with the load factor $\ga$ replaced by $\gb:=t/b$.
Furthermore, as said above, the arrival times of the keys
are uniformly distributed in $[0,\ga b]$, so
$\gb$ is uniformly distributed in $[0,\ga]$.
Hence, the FCFS displacement $\dfc=\dfc_\ga$ of a random key is (formally by
definition) a random variable with the distribution
\begin{equation}\label{fcu}
  \PP(\dfc_\ga=k) = \frac1\ga\intoa\PP\bigpar{U_\gb=k}\dd\gb.
\end{equation}
This leads to the following, where we now
write $\ga$ as an explicit parameter of all quantities that depend on it.
\begin{theorem}
  \label{TFC}
In the \infpoi, 
the probability generating function $\psifc(\zq;\ga):=\E \zq^{\dfc}$ 
of\/
$\dfc=\dfc_\ga$ is 
given by   
\begin{equation}\label{psifc}
  \begin{split}
  \psifc(\zq;\ga)&
=\frac{1}{\ga}\intoa\psiu(\zq;\gb)\dd\gb
=\frac{1}{\ga}\intoa\frac{T_0(b\gb)}{1-\zq}\prodlb\bigpar{1-\zetal(\zq;\gb)}
\dd\gb
\\&
=\frac{1}{\ga}\intoa
\frac{b(1-\gb)\prodlb\bigpar{1-\zetal(\zq;\gb)}}
{(1-\zq)\prod_{\ell=1}^{b-1}(1-\zetal(1;\gb))}
\dd\gb.
  \end{split}
\end{equation}

Moreover, for the exact model,
as $m,n\to\infty$ with $n/bm\to\ga$,
$\dfc\mn\dto \dfc_\ga$
with convergence of all moments;
furthermore, 
for some $\gd>0$, the \pgf{} converges to $\psifc(q)$,
uniformly for $|q|\le1+\gd$.
\end{theorem}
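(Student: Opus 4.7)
The plan is to establish the integral representation of $\psifc(q;\ga)$ first, then unfold it using Theorem~\ref{TU}, and finally handle the exact-model convergence via a Riemann-sum argument based on identity \eqref{dmn}.

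For the infinite Poisson model, I start from \eqref{fcu}, which has already been derived from the time-parameterised Poisson arrival process together with the memoryless property. Multiplying \eqref{fcu} by $q^k$ and summing over $k\ge0$, Tonelli's theorem justifies interchanging sum and integral for $0\le q\le 1$, giving
\begin{equation*}
\psifc(q;\ga)=\frac{1}{\ga}\intoa \psiu(q;\gb)\dd\gb.
\end{equation*}
By Theorem~\ref{TU}, $\psiu(q;\gb)$ extends to a function analytic in $q$ on $|q|<1+\gd$, and the bound $|\psiu(q;\gb)|\le\psiu(|q|;\gb)$ together with continuity in $\gb$ yields a uniform bound on $\gb\in[0,\ga]$, so the identity extends by analytic continuation to the stated range of $q$. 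The second equality in \eqref{psifc} then follows by substituting the explicit form \eqref{psiu} for $\psiu(q;\gb)$, and the third by using the definition \eqref{T00} of $T_0(b\gb)$.

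For the exact model, I use \eqref{dmn},
\begin{equation*}
d_{m,n}(q)=\frac{1}{n}\sum_{i=0}^{n-1} u_{m,i}(q),\qquad 1\le n\le bm,
\end{equation*}
which writes the FCFS displacement generating function as an average of unsuccessful-search generating functions. Setting $\gb_i:=i/(bm)$ turns the right-hand side into a Riemann sum of mesh $1/(bm)$ over $[0,n/(bm)]$, and I expect
\begin{equation*}
d_{m,n}(q)=\frac{bm}{n}\cdot\frac{1}{bm}\sum_{i=0}^{n-1} u_{m,i}(q)\tend\frac{1}{\ga}\intoa\psiu(q;\gb)\dd\gb=\psifc(q;\ga)
\end{equation*}
as $m,n\to\infty$ with $n/bm\to\ga$, uniformly for $|q|\le 1+\gd$. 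To make this rigorous, I first promote the deterministic bound $U_i\le\hB_i$ and Lemma~\ref{LexpB} to a uniform exponential-moment estimate for $U\mn$ in the range $n/bm\le\ga_1<1$, giving uniform boundedness of $u_{m,i}(q)$ for $|q|\le 1+\gd$. Then, partitioning $[0,\ga]$ into small sub-intervals and applying Theorem~\ref{TU} on each, together with continuity of $\psiu(q;\gb)$ in $\gb$, yields the Riemann-sum approximation. Uniform convergence of the pgf on a neighbourhood of~$1$ then delivers both $\dfc\mn\dto\dfc_\ga$ and convergence of all moments.

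The main obstacle is promoting the pointwise convergence $u_{m,i}(q)\to\psiu(q;\gb)$ from Theorem~\ref{TU} to a form that is uniform in the load factor $\gb=i/(bm)$, since the Riemann-sum step evaluates the limit at many different load factors simultaneously. This can be handled either by revisiting Lemma~\ref{Llim} to extract locally uniform rates of convergence in $\gb$, or by exploiting the explicit formula \eqref{psiu}, in which $\psiu(q;\gb)$ depends smoothly on $\gb$ through the tree-function evaluations $\zetal(q;\gb)$; standard equicontinuity arguments then close the gap.
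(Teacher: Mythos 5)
Your derivation of the integral representation in the infinite Poisson model is essentially the paper's: start from \eqref{fcu}, interchange sum and integral, substitute \refT{TU} and \eqref{T00}. That part is fine.

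The difference is in the exact-model convergence, where you have created an obstacle the paper avoids. You correctly begin from \eqref{dmn}, but you then frame the limit as a Riemann-sum problem, wanting $u_{m,i}(q)\approx\psiu(q;i/(bm))$ \emph{uniformly over $i$}, and you rightly observe that \refT{TU} gives only pointwise-in-load-factor convergence. In your framing that uniformity gap is real, and the two workarounds you sketch (locally uniform rates from \refL{Llim}, or equicontinuity from the explicit formula \eqref{psiu}) are plausible but left unproved. The paper sidesteps this entirely by a change of viewpoint: it rewrites the Ces\`aro average \emph{exactly} as an integral of a step function and passes to the limit at the level of point probabilities,
\begin{equation*}
\Pr(\dfc\mn=k)=\frac1n\sum_{j=1}^n\Pr\bigpar{U_{m,j-1}=k}=\int_0^1\Pr\bigpar{U_{m,\floor{nx}}=k}\dd x.
\end{equation*}
For each \emph{fixed} $x\in(0,1)$, $\floor{nx}/bm\to x\ga$, so \refT{TU} applies directly and gives $\Pr(U_{m,\floor{nx}}=k)\to\Pr(U_{x\ga}=k)$; since the integrand is a probability it is bounded by $1$, so dominated convergence is immediate and no uniformity in the load factor is ever required. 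After the change of variables $\gb=x\ga$ this gives $\dfc\mn\dto\dfc_\ga$ for each $k$, hence convergence in distribution. Convergence of the pgf on $|q|\le1+\gd$ and of all moments is then a separate step, using exactly the bound you cite ($\dfc\le U_i\le\hB_i$ together with \refL{LexpB}), as in the paper's earlier theorems. So the fix is to pass to the limit under the integral sign at the level of probabilities rather than trying to prove uniform pgf convergence first; you already have every ingredient you need.
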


\begin{proof}
The first equality in \eqref{psifc} follows by  \eqref{fcu}, and the second by
 \refT{TU}.

For the exact model, we have by the discussion above, for any $k\ge0$,
\begin{equation*}
  \Pr(\dfc\mn=k)=\frac{1}n\sum_{j=1}^n\Pr\bigpar{U_{m,j-1}=k}
=\int_0^1\Pr\bigpar{U_{m,\floor{nx}}=k}\dd x	.
\end{equation*}
For any $x>0$, $\floor{nx}/bm\dto x\ga$, and thus
$\Pr\bigpar{U_{m,\floor{nx}}=k}\to \Pr\bigpar{U_{x\ga}=k}$ by \refT{TU}.
Hence, by dominated convergence and the change of variables $\gb=x\ga$,
\begin{equation*}
  \Pr(\dfc\mn=k)
\to\int_0^1\Pr\bigpar{U_{x\ga}=k}\dd x
=\frac{1}{\ga}\int_0^{\ga}\Pr\bigpar{U_{\gb}=k}\dd \gb
=\Pr(\dfc_\ga=k),
\end{equation*}
by \eqref{fcu}. Hence $\dfc\mn\dto\dfc_\ga$  \asmn.
Convergence of moments and \pgf{} follows by \refL{LexpB} and
the estimate $\dfc\le U_i\le \hB_i$ for a key that hashes to $i$.
\end{proof}

\begin{corollary}\label{CFC}
As $m,n\to\infty$ with $n/bm\to\ga\in(0,1)$,
\begin{align}\label{efc}
\E \dfc_{m,n} &\to \E\dfc_\ga 
=\E\drh_\ga
= \frac{1}{2b\ga}\biggpar{\frac{1}{1-\ga}-b-b\ga}
+\frac{1}{b\ga}\sum_{\ell=1}^{b-1}\frac{1}{1-\zetal}.
\end{align}
\end{corollary}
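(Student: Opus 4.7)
The cleanest route is to invoke \refL{LED}, which states that for linear probing the expected displacement $\E D\mn=\frac{m}{n}\E Q\mn$ does not depend on the insertion heuristic. In particular $\E\dfc_{m,n}=\E\drh_{m,n}$ for every $m$ and $n\ge 1$. The convergence $\E\dfc_{m,n}\to\E\dfc_\ga$ follows from \refT{TFC} (moment convergence), while \refC{CRH} supplies both $\E\drh_{m,n}\to\E\drh_\ga$ and the closed form on the right-hand side of \eqref{efc}. Uniqueness of limits then forces $\E\dfc_\ga=\E\drh_\ga$ and the corollary is proved.

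Alternatively, one can derive \eqref{efc} directly from the probabilistic description. By \eqref{fcu} and \refT{TFC},
\begin{equation*}
\E\dfc_\ga=\frac{1}{\ga}\intoa \E U_\gb\,d\gb,
\end{equation*}
and \refC{CU} supplies an explicit expression for $\E U_\gb$. The terms $\frac{1}{2b(1-\gb)^2}$ and $-\frac12$ integrate elementarily and contribute $\frac{1}{2b\ga(1-\ga)}-\frac{1}{2b\ga}-\frac{1}{2}$. The main work is to handle the sum
\begin{equation*}
-\frac{1}{b\ga}\sum_{\ell=1}^{b-1}\intoa\frac{\zetal(\gb)\,d\gb}{(1-\zetal(\gb))(1-\gb\zetal(\gb))},
\end{equation*}
where I write $\zetal(\gb)$ for $\zetal(1;\gb)$.

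For each $\ell\ge1$, I would perform the substitution $u=\zetal(\gb)$. From \eqref{zetalzq2} at $q=1$ one has $\log u=\gb(u-1)$, so implicit differentiation gives $\frac{du}{d\gb}=\frac{u(u-1)}{1-\gb u}$ and thus
\begin{equation*}
\frac{\zetal(\gb)\,d\gb}{(1-\zetal(\gb))(1-\gb\zetal(\gb))}=-\frac{du}{(1-u)^2}.
\end{equation*}
Since $\zetal(\gb)\to\go^\ell$ as $\gb\downto0$ (from \eqref{zetalzq} and $T(z)\sim z$ near 0), integration yields $\frac{1}{1-\go^\ell}-\frac{1}{1-\zetal(\ga)}$. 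Summing over $\ell=1,\dots,b-1$ and using the standard identity $\sum_{\ell=1}^{b-1}(1-\go^\ell)^{-1}=\frac{b-1}{2}$, the remaining integral equals $\frac{b-1}{2}-\sum_{\ell=1}^{b-1}\frac{1}{1-\zetal(\ga)}$, and collecting all contributions reproduces \eqref{efc}.

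The hard part in the direct route is recognizing the right change of variables and the standard cyclotomic identity; the indirect route via \refL{LED} bypasses all of this and is therefore the natural one to record, mirroring the remark in \refS{SRH} that \eqref{ed} also follows from \refC{CEHQ} and \refL{LED}.
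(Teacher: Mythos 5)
Your primary argument is precisely the paper's own proof: invoke \refL{LED} to get $\E\dfc_{m,n}=\E\drh_{m,n}$ in the exact model, then combine the moment convergence from \refT{TFC} with \refC{CRH} and uniqueness of limits. The alternative computation you include goes a bit further than the paper, which merely states that the integral in \eqref{fcu} ``can be verified'' to give \eqref{efc} (suggesting a check by differentiating $\ga\E\drh_\ga$); your direct integration via the substitution $u=\zetal(1;\gb)$, the identity $\frac{du}{d\gb}=u(u-1)/(1-\gb u)$, the endpoint observation $\zetal(1;\gb)\to\go^{\ell}$ as $\gb\downto 0$, and the cyclotomic sum $\sum_{\ell=1}^{b-1}(1-\go^{\ell})^{-1}=(b-1)/2$ is a correct, self-contained verification that is nice to have written out.
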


\begin{proof}
  The convergence follows by \refT{TFC}. In the exact model,
  $\E\dfc\mn=\E\drh\mn$, since the total displacement does not depend on
  the insertion rule, see \refL{LED}. Hence, using also \refC{CRH},
$\E\dfc_\ga=\E\drh_\ga$, and the result follows by \eqref{ed}.
Alternatively, by \eqref{fcu},
\begin{equation}\label{efcu}
  \E\dfc_\ga = \frac1\ga\intoa\E U_\gb\dd\gb,
\end{equation}
where $\E U_\gb$ is given by \eqref{eu}. It can be verified that this yields
\eqref{efc} (simplest by showing that
$\frac{\ddx}{\ddx\ga}\bigpar{\ga\E\drh_\ga}=\E U_\ga$, 
using \eqref{ed}, \eqref{zetal}
and \eqref{T'}).
\end{proof}

\section{Some experimental results} \label{SER}

In this section we check our theoretical results against the
experimental values 
for the FCFS heuristic
presented in the original paper
\cite{Peterson} from 1957, 
where the linear probing hashing algorithm was first presented,
since this is the first distributional analysis made for the 
problem (for the general case, $b\geq 1$).

The historical value of this section, relies on the fact that
this is the first time that these original experimental values
are checked against distributional theoretical results.

\begin{center}
\begin{figure}[htb!]
\begin{center}
\begin{tabular}{|r|r|r|c|c|}
\hline
 Length & \# records & Length $\times$ \#  & Empirical prob. & Theoretical 
\\
\hline
  1 & 8418 &  8418  & 0.9353 &  0.9352\\
  2 &  336 &   672  & 0.0373 &  0.0364 \\
  3 &  111 &   333  & 0.0123 &  0.0122\\
  4 &   70 &   280  & 0.0078 &  0.0059\\
  5 &   26 &   130  & 0.0029 &  0.0033\\
  6 &    9 &    54  & 0.0010 &  0.0020\\
  7 &   14 &    98  & 0.0016 &  0.0013\\
  8 &    7 &    56  & 0.0008 &  0.0009\\
  9 &    5 &    45  & 0.0006 &  0.0006\\
 10 &    1 &    10  & 0.0001 &  0.0005\\
 11 &    1 &    11  & 0.0001 &  0.0003\\
 12 &    0 &     0  & 0.0000 &  0.0003\\
 13 &    1 &    13  & 0.0001 &  0.0002\\
 14 &    1 &    14  & 0.0001 &  0.0002\\
\hline
sum$|$average   & 9000 & 10134  & 1.1260 & 1.1443\\
\hline
\end{tabular}
\caption{Table 3 in \cite{Peterson}, with $b=20$, $m=10000$ and
$n=9000$ ($\alpha=0.9$),
together with theoretical results taken from
our distributional results in the Poisson Model for the FCFS
heuristic, \refT{TFC}. For the keys with low search cost (3 or less)
there is a very good agreement with the experimental results.
This is consistent with the explanation that the high variance
is originated by all the latest keys inserted (and as a
consequence, in FCFS, by the ones with larger displacement). The
last line presents the average search cost, where a difference
with the theoretical result in \refC{CFC} is noticed.}
\end{center}
\end{figure}
\end{center}

Even though we have exact distributional results for the search
cost of random keys 
by \refT{FCFSexact} for the generating function,
the coefficients are very difficult to
extract and to calculate. As a consequence, except for the case
when $b=1$ where exact results are easy to use by means of
\eqref{average}, we use the asymptotic
approximate results in the Poisson Model derived in 
\refT{TFC} and \refC{CFC}. Notice that \refT{TFC} and \refC{CFC}
evaluate the \emph {displacement} of a random key, while in
\cite{Peterson} its \emph {search cost} is calculated. As a
consequence, we have to add 1 to the theoretical results.
When the table is full, we use \eqref{asymptRH}. 
\begin{center}
\begin{figure}[htbp!]
\begin{center}
\begin{tabular}{|c|c|c|c|c|c|}
\hline
\% Full & 1st run & 2nd run &
3rd run & 4th run & Theoretical
\\
\hline
 40 & 1.000 & 1.000 &  1.000 & 1.000 & 1.000 \\
 60 & 1.001 & 1.002 &  1.002 & 1.003 & 1.002 \\
 70 & 1.008 & 1.013 &  1.009 & 1.010 & 1.010 \\
 80 & 1.026 & 1.043 &  1.029 & 1.035 & 1.036 \\
 85 & 1.064 & 1.073 &  1.062 & 1.067 & 1.069 \\
 90 & 1.134 & 1.126 &  1.138 & 1.137 & 1.144 \\
 95 & 1.321 & 1.284 &  1.331 & 1.392 & 1.386 \\
 97 & 1.623 & 1.477 &  1.512 & 1.797 & 1.716 \\
 99 & 2.944 & 2.112 &  2.085 & 2.857 & 3.379 \\
100 & 4.735 & 3.319 &  3.830 & 4.299 & 4.002 \\
\hline
\end{tabular}
\caption{Table 4 in \cite{Peterson}: average length of search of
a random record with $b=20$ and $m=500$, together with theoretical results. 
(The experiments were carried four times over the same data.) 
As $\alpha$ increases, 
so does the variance of the results for the four runs.}
\end{center}
\end{figure}
\end{center}

All the simulations in \cite{Peterson} 
were done in random-access memory with a IBM
704 addressing asystem (simulated memory with random
identification numbers). It is interesting to note that the IBM
704 was the first mass-produced computer with floating-point
arithmetic hardware. It was introduced by IBM in 1954.
\begin{center}
\begin{figure}[htbp!]
\begin{center}
\resizebox{\textwidth}{!}{
\begin{tabular}{|c|c|c|c|c|c|c|c|}
\hline
 & $b=1~ |~ m=500$ &  & & &
 & $b=2~ |~ m=500$ & 
\\
\hline
\% Full & Experimental & Poisson & Exact & &
\% Full & Experimental & Poisson
\\
\hline
 10 & 1.053 & 1.056 & 1.055 & & 20 & 1.034 & 1.024 \\
 20 & 1.137 & 1.125 & 1.125 & & 40 & 1.113 & 1.103 \\
 30 & 1.230 & 1.214 & 1.213 & & 60 & 1.325 & 1.293 \\
 40 & 1.366 & 1.333 & 1.331 & & 70 & 1.517 & 1.494 \\
 50 & 1.541 & 1.500 & 1.496 & & 80 & 1.927 & 1.903 \\
 60 & 1.823 & 1.750 & 1.741 & & 90 & 3.148 & 3.147 \\
 70 & 2.260 & 2.167 & 2.142 & & 95 & 5.112 & 5.644 \\
 80 & 3.223 & 3.000 & 2.911 & &100 &11.389 &10.466 \\
 90 & 5.526 & 5.500 & 4.889 & &    &       &       \\
100 &16.914 &  ---  &14.848 & &    &       &       \\
\hline
\end{tabular}
} 
\caption{Table 5B in \cite{Peterson}:
average length of search with $b=1$ or 2 and $m=500$
(average of 9 runs over the same data), 
together with exact results taken
from equation \eqref{average} and the Poisson approximation
taken from \refC{CFC} (and \eqref{asymptRH} 
for full tables). 
Notice that the Poisson approximation
matches very well when the load factor is less than $0.9$. This
is consistent with the fact that this approximation largely
overestimates the exact result when $\alpha \to 1$. 
 }
\end{center}
\end{figure}
\end{center}

The high variance of the FCFS heuristic 
is originated by the fact that the first
keys stay in their home location, but when collisions start to
appear, the search cost of the latest inserted keys
increases very rapidly. For example, the last keys inserted
have an $O(m)$ displacement in average. As an aside, for the
case $b=1$ there are very interesting results that analyze the
way contiguous clusters coalesce \cite{ChaLou02}. A good
understanding of this process, leads to a rigurous explanation
of this problem.
\begin{center}
\begin{figure}[htbp!]
\begin{center}
\resizebox{\textwidth}{!}{
\begin{tabular}{|c|c|c|c|c|c|c|}
\hline
\# Runs & 8 & 7 & 4 & 4 & 3 & 3 \\
\hline
\% Full & b=5 & b=10 & b=20 & b=30 & b=40 & b = 50  \\
        & bm=2500 & bm=5000 & bm=5000 & bm=10000 & bm=10000 & bm=10000  \\
\hline
 40 & 1.015 (1.012) & 1.001 (1.001) &  1.000 (1.000) & 1.000 (1.000) & 1.000 (1.000) & 1.000 (1.000) \\
 60 & 1.072 (1.066) & 1.016 (1.015) &  1.002 (1.002) & 1.001 (1.000) & 1.000 (1.000) & 1.000 (1.000) \\
 70 & 1.131 (1.136) & 1.042 (1.042) &  1.010 (1.010) & 1.003 (1.003) & 1.001 (1.001) & 1.000 (1.000) \\
 80 & 1.280 (1.289) & 1.111 (1.110) &  1.033 (1.036) & 1.017 (1.017) & 1.011 (1.009) & 1.005 (1.005) \\
 85 & 1.443 (1.450) & 1.172 (1.186) &  1.066 (1.069) & 1.038 (1.036) & 1.028 (1.022) & 1.015 (1.015) \\
 90 & 1.762 (1.777) & 1.330 (1.345) &  1.134 (1.144) & 1.082 (1.083) & 1.071 (1.055) & 1.034 (1.040) \\
 95 & 2.467 (2.771) & 1.755 (1.837) &  1.334 (1.386) & 1.231 (1.241) & 1.185 (1.171) & 1.110 (1.130) \\
 97 & 3.154 (4.102) & 2.187 (2.501) &  1.602 (1.716) & 1.374 (1.460) & 1.399 (1.334) & 1.228 (1.260) \\
 99 & 4.950 (10.766) & 3.212 (5.831) &  2.499 (3.379) & 1.852 (2.567) & 2.007 (2.164) & 1.585 (1.923) \\
100 & 6.870 (6.999) & 4.889 (5.244) &  4.041 (4.002) & 2.718 (3.451) & 2.844 (3.123) & 2.102 (2.899) \\
\hline
\end{tabular}
} 
\caption{Table 5A in \cite{Peterson}: average length of search. 
The first number is the
experimental result, and in parenthesis the theoretical value.
We see again that the disagreement is larger when $\alpha$
is close to 1.}
\end{center}
\end{figure}
\end{center}

As a consequence, as it is explicitly said in \cite{Peterson},
the experiments had to be performed several times, and an
average of the results are presented. Our theoretical results
seem to agree well with the experimental values when the load factor 
is less than $\alpha = 0.9$.
Moreover, the closer the value of $\alpha$ gets to
1 then the larger the difference. This is expected, since the
formulae are good approximations when $\alpha < 1$, and tend to
$\infty$ when $\alpha \to 1$.

\begin{acks}
Philippe Flajolet has had a strong influence in our scientific careers.
The core of the use of the symbolic method in hashing problems has been
taken from \cite{Flajolet:slides}. Thank you Philippe 
for all the work you have left to inspire our research.
We also thank Alois Panholzer for interesting discussions, and
Hsien-Kuei Hwang for suggesting us the derivation that leads to
Theorem \ref{FCFSexact}.
\end{acks}

\newcommand\RSA{\emph{Random Struct. Alg.} }
\newcommand\DMTCS{\jour{Discr. Math. Theor. Comput. Sci.} }

\newcommand\AMS{Amer. Math. Soc.}
\newcommand\Springer{Springer-Verlag}
\newcommand\Wiley{Wiley}

\newcommand\vol{\textbf}
\newcommand\jour{\emph}
\newcommand\book{\emph}
\newcommand\inbook{\emph}
\def\no#1#2,{\unskip#2, no. #1,} 
\newcommand\toappear{\unskip, to appear}

\newcommand\urlsvante{\url{http://www.math.uu.se/~svante/papers/}}
\newcommand\arxiv[1]{\url{arXiv:#1.}}
\newcommand\arXiv{\arxiv}

\end{document}